\newtcolorbox{myframe}[1][]{
  enhanced,
  arc=0pt,
  outer arc=0pt,
  colback=white,
  boxrule=0.8pt,
  #1
}
\newcommand{\eps}{\epsilon}
\newcommand{\norm}[1]{\left\lVert #1 \right\rVert}
\DeclareMathOperator{\Tr}{Tr}
\newcommand{\ket}[1]{| #1 \rangle}
\newcommand{\bra}[1]{\langle #1 |}
\newcommand{\outerprod}[2]{| #1 \rangle \! \langle #2 |}
\DeclarePairedDelimiterX{\infdivx}[2]{(}{)}{%
  #1\;\delimsize\|\;#2%
}
\DeclareMathOperator{\Var}{{\textnormal{Var}}}
\newcommand{\expect}[1]{\underset{#1}{\expct}}
\newcommand\Tstrut{\rule{0pt}{2.6ex}}         
\DeclareMathOperator{\expct}{{\mathbb E}}
\DeclareMathOperator{\rank}{rank}
\newtheorem{theorem}{Theorem}[section]
\newtheorem{lemma}[theorem]{Lemma}
\newtheorem{fact}[theorem]{Fact}
\newtheorem{corollary}[theorem]{Corollary}
\newtheorem{proposition}[theorem]{Proposition}
\theoremstyle{definition}
\newtheorem{definition}[theorem]{Definition}
\newcommand{\suppress}[1]{}
\newcommand{\complex}{{\mathbb C}}
\newcommand{\abs}[1]{\left| #1 \right|}
\newcommand{\complexi}{{\mathrm{i}}}
\newcommand{\unitary}{{\mathsf U}}
\newcommand{\qstate}{{\mathsf D}}
\DeclareMathOperator{\trace}{Tr}
\newcommand{\id}{{\mathbb 1}}
\newcommand{\ketbra}[2]{| #1 \rangle\!\langle #2 |}
\newcommand{\density}[1]{\ketbra{#1}{#1}}
\newcommand{\adjoint}{\dagger}
\newcommand{\eqdef}{\coloneqq}
\newcommand{\tensor}{\otimes}
\newcommand{\cA}{{\mathcal A}}
\newcommand{\cM}{{\mathcal M}}
\newcommand{\bU}{{\bm U}}
\newcommand{\tGamma}{{\widetilde{\Gamma}}}
\newcommand{\bpsi}{{\bm \psi}}
\newcommand{\normal}{{\mathrm N}}
\newcommand{\support}{\textrm{supp}}
\title{\vspace{-0.8cm}Lower bounds for learning quantum states with single-copy measurements}
\author{
   Angus Lowe~\thanks{Most of this work was completed while this author was with Department of Combinatorics \& Optimization and Institute for Quantum Computing, University of Waterloo, 200 University Ave.\ W., Waterloo, ON, N2L~3G1, Canada. Email: \texttt{alowe7@mit.edu}~.} 
   \and 
   Ashwin Nayak~\thanks{Department of Combinatorics \& Optimization and Institute for Quantum Computing, University of Waterloo, 200 University Ave.\ W., Waterloo, ON, N2L~3G1, Canada. Email: \texttt{ashwin.nayak@uwaterloo.ca}~.}}
\date{\today}
\begin{document}
\maketitle

\begin{abstract}
    We study the problems of quantum tomography and shadow tomography using measurements performed on individual, identical copies of an unknown~$d$-dimensional state. We first revisit known lower bounds~\cite{Haah_2017} on quantum tomography with accuracy $\epsilon$ in trace distance, when the measurement choices are independent of previously observed outcomes, i.e., they are nonadaptive. We give a succinct proof of these results through the $\chi^2$-divergence between suitable distributions. Unlike prior work, we work directly with measurements operators of arbitrary rank rather than with their decomposition into rank-one operators. This leads to stronger lower bounds when the learner uses measurements with a constant number of outcomes (e.g., two-outcome measurements). In particular, this rigorously establishes the optimality of the folklore ``Pauli tomography" algorithm in terms of its sample complexity. We also derive novel bounds of $\Omega(r^2 d/\epsilon^2)$ and $\Omega(r^2 d^2/\epsilon^2)$ for learning rank~$r$ states using arbitrary and constant-outcome measurements, respectively, in the nonadaptive case.

In addition to the sample complexity, a resource of practical significance for learning quantum states is the number of unique measurement settings required (i.e., the number of different measurements used by an algorithm, each possibly with an arbitrary number of outcomes). Motivated by this consideration, we employ concentration of measure of $\chi^2$-divergence of suitable distributions to extend our lower bounds to the case where the learner performs possibly adaptive measurements from a fixed set of~$\exp(O(d))$ possible measurements. This implies in particular that adaptivity does not give us any advantage using single-copy measurements that are efficiently implementable. We also obtain a similar bound in the case where the goal is to predict the expectation values of a given sequence of observables, a task known as shadow tomography. Finally, in the case of adaptive, single-copy measurements implementable with polynomial-size circuits, we prove that a straightforward strategy based on computing sample means of the given observables is optimal.
\end{abstract}

\section{Introduction}
\label{sec:overview}

\subsection{State tomography and its variants}
\label{sec-context}

In learning theory, an important resource is the number of samples of data used by the learner to correctly infer or predict their properties. The difficulty of a learning task, at first approximation, is therefore captured by its \textit{sample complexity}, defined to be the minimum number of samples required to solve the problem at hand with high probability. In this paper we consider the sample complexity of learning properties of an arbitrary unknown quantum state. Here, a sample amounts to preparing the state in some register, so that the number of samples is the number of identical copies of the state on which the learner can perform a measurement. For the most part, we focus on quantum state tomography, which is the fundamental task of estimating an unknown $d$-dimensional state $\rho$ to within some accuracy $\eps$ in the standard trace distance between states. Quantum tomography is of significant practical interest, for example, for the experimental verification of quantum devices. We are especially interested in how the sample complexity of tomography scales with the dimension $d$ of the state. In theory, the dimension is the primary obstacle to efficient learning, since this quantity grows exponentially with the number of qubits comprising the system.

In the most general scenario for state tomography, $n$ identical copies of a state~$\rho$ are prepared in registers that are jointly measured. It is then said that the measurements are \textit{entangled}. In a series of breakthroughs, O'Donnell and Wright~\cite{OW16-tomography,odonnell2017efficient2} as well as Haah, Harrow, Ji, Wu, and Yu~\cite{Haah_2017} proved that $O(d^2/\eps^2)$ samples suffice to perform tomography using entangled measurements. This matches an information-theoretic lower bound due to Ref.~\cite{Haah_2017} and improves upon previous upper bounds by a factor of $d$. Fewer samples are needed when a bound on the rank of the state is known (see, for instance, Ref.~\cite{OW16-tomography}).

From a practical standpoint, however, the joint measurements used in algorithms for optimal tomography may not be feasible. Firstly, in the case where one has access to just a single register that can be prepared in the state $\rho$, joint measurements of multiple copies of the state are impossible. (For instance, one might wish to perform tomography on the output state of a quantum computer by repeating a computation. Another example is that of photonic states that are difficult to store over extended periods of time.) Even given access to a suitably large system that can be prepared in the state $\rho^{\otimes n}$, it is not clear how efficiently the entangled measurements can be implemented. Finally, in some experimental realizations, only a limited set of measurements may be available. For these reasons, there is strong motivation to consider restricted measurement models, for instance, those in which each copy of~$\rho$ is measured separately, possibly using one of a fixed set of measurement settings. Measurements in which each copy of~$\rho$ is measured separately have been coined \textit{single-copy\/} measurements by some~\cite{Aaronson_2019,ALL22-distr-ip} (and \textit{unentangled\/} measurements by others~\cite{CD10-QMA2,wright2016, BCL20-mixedness-testing}). 

Within the single-copy model of measurement, one has access only to a single $d$-dimensional register which can be repeatedly prepared in the state $\rho$ upon request, at which point a measurement is performed on the state and the resulting state is discarded. This means that the number of samples is equal to the number of measurements performed. Upper bounds on the sample complexity of single-copy tomography are well-established. Two prominent examples are the folklore ``Pauli tomography algorithm" (outlined in Section 8.4.2 in Nielsen and Chuang~\cite{nielsen_chuang_2010}) and algorithms based on low-rank matrix recovery due to Kueng, Rauhut, and Terstiege~\cite{KUENG2017}. In both examples, the upper bound on the sample complexity is worse than in the entangled case. (For other, simple such algorithms, see Refs.~\cite{wright2016,guta2018fast,yu2020sample}.)

What can be said about the sample complexity of quantum tomography using single-copy measurements? Haah \emph{et al.\/}~\cite{Haah_2017} address this question by providing an $\Omega(d^3/\eps^2)$ lower bound which matches the upper bound following from Ref.~\cite{KUENG2017,guta2018fast}, under the assumption that the choice of each measurement is independent of any previous outcomes (referred to as \textit{nonadaptive measurements}). However, this does not exhaust all realizable single-copy measurement strategies. Indeed, numerous proposals for state tomography (e.g., \cite{adaptivebayesian2012,mahler2013adaptive}) utilize \textit{adaptive\/} measurements, where the choice of measurement can depend on previous outcomes.

Adaptive measurements represent an intermediate restriction between nonadaptive and entangled measurements, and until very recently little was known about the sample complexity of learning quantum states or their properties in this setting. (For an early example of a problem for which adaptive measurements do not help, and for a separation between joint and single-copy measurements, see Refs.~\cite{HRS05-coset-states,HMRRS10-coset-states}.) This is despite the fact that bounding the power of adaptivity is a significant problem: proving separations between entangled and single-copy measurements requires showing that adaptive measurements result in strictly worse sample complexity. It was posed as an open problem in the Ph.D.\ thesis of Wright~\cite{wright2016} to provide examples where this is the case, and since then there has been significant progress on this topic. In 2020, Bubeck, Chen, and Li~\cite{BCL20-mixedness-testing} gave an unconditional separation between entangled and single-copy measurements for the problem of quantum state certification. Following this, Huang, Kueng, and Preskill~\cite{HKP21-it-bounds} proved an \textit{exponential} separation for the problem of determining the expectations of Pauli operators to constant accuracy. Then, in 2021 Chen, Cotler, Huang, and Li~\cite{CCHL21-single-copy-measurements, chen2021hierarchy} proved many additional exponential separations for different learning tasks, including shadow tomography. In this work, we continue along this line of research to investigate the sample complexity of adaptive quantum tomography in a realistic setting. We then apply the techniques developed and derive a new lower bound for single-copy shadow tomography in the same setting.

\subsection{Summary of results}
\label{sec-summary}

We first provide a simplified proof of the lower bound for tomography in the nonadaptive case due to Haah \emph{et al.\/}~\cite[Theorem~4]{Haah_2017}. In the process, we improve it by a factor of $d$ to $\Omega(d^4/\eps^2)$ when the measurements have a constant number of outcomes. This implies that the straightforward Pauli tomography algorithm (described in Appendix~\ref{sec:upper_bounds_tomography}) is information-theoretically optimal in this setting. Using the same techniques, we derive a lower bound of~$\Omega(r^2 d / \eps^2)$ when the states are known to have bounded rank~$r$. This bound is a multiplicative factor of~$\log(1/\eps)$ larger than the best previous lower bound~\cite[Theorem~4]{Haah_2017}, and is optimal~\cite{KUENG2017,guta2018fast} (see Section~\ref{sec:single_copy_tomog_upper_bound} for the details of how the upper bound follows from Ref.~~\cite{KUENG2017}.). Moreover, it applies to the case of learning pure states~($r = 1$), which is not covered by the proof of Theorem~4 in Ref.~\cite{Haah_2017}. The rank-dependent bound can be further strengthened to~$\Omega(r^2 d^2 / \eps^2)$ for measurements with a constant number of outcomes. 

Since state tomography requires~$\Omega(d^2/ \eps^2)$ samples, any quantum algorithm for this problem necessarily has run-time at least quadratic in~$d$. This is exponential in~$\log d$, the number of qubits representing the unknown state. However, algorithms that measure one copy of the state at a time, interleaved with classical processing of the measurement outcomes, allow for the possibility that the \emph{individual\/} measurements be more time-efficient. Such algorithms are more attractive from a practical point of view, given the current challenges in implementing quantum computation. It is thus no surprise that most of the algorithms based on single-copy measurements mentioned in Section~\ref{sec-context} involve measurements that can be implemented efficiently, in particular with quantum circuits of size \emph{polynomial\/} in the number of qubits. 

We present new arguments showing there is a broad class of algorithms, including the ones described above, for which adaptivity \textit{makes no difference} to the worst-case sample complexity of learning a quantum state. Specifically, we prove a lower bound of~$\Omega(d^3/\eps^2)$ for the sample complexity of any single-copy, adaptive tomography algorithm which uses measurements chosen from a fixed set of up to $\exp(O(d))$ measurements. This encompasses measurement strategies which are efficiently implementable, i.e., the measurements may be performed using (uniformly generated) circuits of size polynomial in $\log d$ over some finite universal gate-set. We also show using the Solovay-Kitaev Theorem that, up to a factor of roughly $\log\log d + \log(1/\eps)$, the same bound applies to all measurement strategies which are efficiently implementable using circuits on possibly infinite universal gate-sets. The bounds entail that either {(i)} adaptivity does not give any advantage over non-adaptive measurements for single-copy tomography, or {(ii)} any adaptive algorithm using~$o(d^3/\eps^2)$ samples necessarily uses measurements with super-polynomial-size circuits. We summarize lower bounds for single-copy tomography in comparison to previous work in Table~\ref{tab:tomog_bounds_summary}, in the full-rank case. In the final column, by ``efficient" we mean efficiently implementable, as defined above.

\begin{table}[t]
\begin{center}
\begin{adjustbox}{width=\columnwidth,center}
\begin{tabular}{ c | c | c | c | c | c}
 & \multicolumn{2}{c|}{Nonadaptive} &  Adaptive & \multicolumn{2}{c}{Adaptive \& efficient} \\
\hline
Allowed meas. & $O(1)$-outcome  & Arbitrary & Binary Pauli & $O(1)$-outcome & Arbitrary\\
\hline
Upper bound\Tstrut & $O(d^4/\eps^2)$ & $O(d^3/\eps^2)$~\cite{KUENG2017,guta2018fast} 
& $O(d^4/\eps^2)$ & $O(d^4/\eps^2)$ & $O(d^3/\eps^2)$\\
Lower bound & $\Omega(d^4/\eps^2)$ [{Cor.~\ref{cor:indep_lbs_const_outcome}}] & $\Omega(d^3/\eps^2)$~\cite{Haah_2017} & $\Omega(d^4)$~\cite{Flammia_2012} & $\widetilde{\Omega}(d^4/\eps^2)$ {[Thm.~\ref{thm:tomography_efficient_meas_const_outcome}]} & $\widetilde{\Omega}(d^3/\eps^2)$ {[Thm.~\ref{thm:tomography_efficient_meas}]}
\end{tabular}
\end{adjustbox}
\end{center}
 \caption{\label{tab:tomog_bounds_summary}Best known upper and lower bounds for the sample complexity of quantum state tomography using single-copy measurements under various measurement restrictions, without a known bound on the rank of the state. $\widetilde{\Omega}$ hides $\log(d)$ and $\text{polylog}(1/\eps)$ factors while lack of citation indicates folklore or implied by other bounds.}
\end{table}

We also obtain lower bounds of the above kind for computing classical shadows~\cite{Huang2020} and for shadow tomography~\cite{Aaronson2020shadow}. In these tasks, one is interested in estimating the expectations of some collection of observables, and they have practical applications ranging from entanglement verification to near-term proposals of variational quantum algorithms~\cite{Huang2020,struchalin_2021}. We show that any procedure for $\eps$-accurate shadow tomography of $M$ observables using efficiently implementable single-copy measurements requires $\Omega(d\log(M)/\eps^2)$ samples of the unknown $d$-dimensional quantum state. Recently, building upon techniques developed in {Ref.~\cite{Aharonov2022}}, Ref.~\cite{CCHL21-single-copy-measurements} almost fully resolved the sample complexity of shadow tomography in the more general case where the learner can implement arbitrary single-copy measurements. They showed a lower bound of $\widetilde{\Omega}(\min\{M,d\}/\eps^2)$. This, while being more general than our result, is potentially exponentially looser in the setting of efficient measurements. In particular, even for $M$ a small constant, our lower bound is linear in the dimension of the state, whereas the more general lower bound has no dependence on the dimension at all.

Besides classical shadows, it would be interesting to investigate whether our lower bound techniques may be relevant in other settings, e.g., in identity testing with single-copy measurements, previously considered {in~\cite{yu2021sampleefficientidentity}}. We leave this as an interesting direction for future work.

Finally, we present a simple procedure for classical shadows using single-copy measurements that are efficiently implementable. The algorithm is optimal in this setting as well as in the case where the measurements are nonadaptive but otherwise arbitrary. The procedure is simpler than the one given in Ref.~\cite{Huang2020}.

\paragraph{Related work}
Most of the results in this article were included in the first author's Master's {thesis~\cite{Lowe21-tomography}} and were presented at {QIP~2022~\cite{LN22-tomography}}. Chen, Huang, Li, and {Liu~\cite{chen2022tightcertification}} subsequently proved that known non-adaptive algorithms for state certification are optimal even when adaptive measurements are used. Shortly thereafter, the same set of authors along with {Sellke~\cite{chen2023doesadaptivityhelpquantum}} reported {an~$\Omega(d^3 / \eps^2)$} lower bound on the sample complexity of tomography of states of possibly full rank, using adaptive single-copy measurements. These bounds imply that adaptivity does not give any advantage over non-adaptive measurements in terms of sample complexity for state tomography (as a function of the dimension). On the other hand, the same work, concurrently with~\cite{Flammia_2024}, shows that when considering a metric of success besides the trace distance, it is possible to gain an advantage in copy complexity through adaptive measurements.

Interestingly, in some settings besides full-state tomography it is possible to show that neither adaptivity nor entanglement provide any advantage over single-copy, non-adaptive measurements~\cite{yu2020sample,Anshu_2022}.

\subsection{Overview of techniques}

We first describe a basic framework for proving lower bounds on the task of quantum tomography common to much of the work on the topic. Here, we use the observation that state discrimination of well-separated states reduces to tomography with sufficient accuracy. The lower bounds then follow from the construction of difficult instances of the state discrimination problem, for which the amount of information that the measurement statistics can reveal about the chosen state is severely limited. ``Discretizing" the learning problem in this manner for the purposes of providing worst-case lower bounds is a standard technique in the field of density estimation, which is the classical analogue of quantum tomography. (See for example Chapter 2 of Ref.~\cite{Tsybakov2009}.) To the best of the authors' knowledge, the method was first employed in the context of tomography by Flammia, Gross, Liu, and Eisert~\cite{Flammia_2012}.

One way to make this argument rigorous is by using Fano's inequality and Holevo's theorem, which suggests an interpretation in terms of a communication protocol between two parties, Alice and Bob. To this end, imagine they have agreed upon an encoding of $2^N$ quantum states into bit-strings $x$ of length $N$. In a single round of communication, Alice sends a quantum state $\rho_x^{\otimes n}$ encoding the message $x\in \{0,1\}^{N}$ to Bob who then attempts to decode the message through tomography. Assuming Bob can perform accurate tomography using $n$ copies of the unknown state, Alice will have successfully transmitted $N$ bits of information to Bob. On the other hand, the Holevo information of the ensemble of quantum states gives an upper bound on the size of a message that could be sent reliably. In particular, it can be shown that when $n$ is small the Holevo information is also small. This provides the necessary contradiction to arrive at a lower bound: a procedure for tomography that succeeds when $n$ is small could be used by Bob to reliably decode too long a message from Alice. Therefore, there is no such procedure.

In summary, this argument may be used to show that the mutual information between the random choice of state $\bm{x}$ and the measurement outcome $\bm{y}$ satisfies $\Omega(d^2)\leq I(\bm{x}:\bm{y})\leq n\eps^2$, where the first inequality comes from Fano's inequality, and the second from Holevo's theorem. However, using Holevo's theorem in this manner does not take into account restrictions on the measurements we are allowed to perform on the $n$ copies of the state. One might therefore expect that it be possible to derive a tighter bound on the mutual information by exploiting the fact that the measurements are not entangled. It turns out that this is indeed the case, as demonstrated by the $\Omega(d^3/\eps^2)$ lower bound for nonadaptive measurements due to Ref.~\cite{Haah_2017}.

Our approach differs from previous work in making direct use of a connection between the mutual information of two random variables and the $\chi^2$-divergence of related distributions, as well as techniques for Haar integration based on symmetry. Additionally, we do not require that the measurements be rank-one POVMs as in Ref.~\cite{Haah_2017}; this allows us to conclude the $\Omega(d^4/\eps^2)$ lower bound in the constant-outcome, nonadaptive case, as well as the more precise bounds stated in Section~\ref{sec-summary} for states of bounded rank. We further build on these simplifications to derive lower bounds robust to a wide class of adaptive measurements. We accomplish this by adversarially constructing instances of the state discrimination problem that are as difficult as possible for the specific set of measurements under consideration. This involves making use of well-known concentration of measure results for the unitary group. This idea is reminiscent of the lower bounds for tomography restricted to binary Pauli measurements due to Flammia \emph{et al.\/}~\cite{Flammia_2012}. A key technical step is the analysis of $\chi^2$-divergence rather than the probability of individual measurement outcomes. This enables tight lower bounds agnostic to the measurements we consider.

\section{Preliminaries}\label{sec:preliminaries}
\subsection{Mathematical background}
This section contains relevant notation and properties that may be referred to as needed.

\paragraph{Sets.}
We let $\mathbb{Z}_+$ denote the set of nonnegative integers, $\mathsf{U}(d)$ the set of unitary operators acting on $\mathbb{C}^d$, $\mathsf{H}(d)$ the set of Hermitian operators acting on $\mathbb{C}^d$, $\mathsf{Psd}(d)$ the subset of $\mathsf{H}(d)$ consisting of positive semidefinite operators, and $\mathsf{D}(d)$ the subset of operators in $\mathsf{Psd}(d)$ with unit trace (i.e., the set of $d$-dimensional quantum states). We also denote by $\mathsf{L}(d)$ the set of square operators acting on $\mathbb{C}^d$.
\paragraph{Operators.}
For any square operator $A\in\mathsf{L}(d)$, we denote its adjoint by $A^\dag$. We let $\norm{A}_1 = \Tr(\sqrt{A^\dag A})$ denote the ``trace norm" of the operator $A$ and $\norm{A}_{\mathrm{F}} = \sqrt{\Tr(A^\dag A)}$ it Frobenius norm. The \textit{trace distance} between two quantum states is $\norm{\rho-\sigma}_1$.  We use $\norm{A}$ to denote the spectral norm of the operator $A$; this is the operator norm induced by the Euclidean norm on $\mathbb{C}^d$. We have the useful relations $\norm{A}_{\mathrm{F}}\leq \norm{A}_1\leq \sqrt{d}\norm{A}_{\mathrm{F}}$ and $\norm{AB}_{\mathrm{F}}\leq \norm{A}\norm{B}_{\mathrm{F}}$. For any two operators $P,Q\in\mathsf{Psd}(d)$, we use the notation $P\preceq Q$ if and only if $Q-P\in \mathsf{Psd}(d)$. Let $A,B\in \mathsf{H}(d)$ and consider the operator $A\otimes B$. We denote by $\Tr_2(\cdot)$ the partial trace over the second system, i.e., $\Tr_2(A\otimes B)=A\Tr(B)$. The rank of a linear operator $X$, denoted $\textnormal{rank}(X)$, is the dimension of its image, which we denote by $\textnormal{im}(X)$.

\paragraph{Permutation operator and $t$-designs.}
The swap operator $W$ acting on $(\mathbb{C}^{d})^{\otimes 2}$ is the linear operator defined by the action $W \ket{\psi}\otimes \ket{\phi} =\ket{\phi}\otimes \ket{\psi}$ for any two vectors $\ket{\psi},\ket{\phi}\in \mathbb{C}^d$. We may extend this procedure to arbitrary permutations, defining the linear operator $W_\pi$ for each $\pi\in S_n$ and acting on $(\mathbb{C}^{d})^{\otimes n}$ as
\begin{align*}
W_{\pi} \; \ket{x_1}\otimes \dots \otimes \ket{x_n} = \ket{x_{\pi^{-1}(1)}}\otimes\dots\otimes \ket{x_{\pi^{-1}(n)}}
\end{align*}
for every choice of vectors $\ket{x_1},\dots,\ket{x_n}\in\mathbb{C}^d$. Here, $S_n$ denotes the symmetric group on $\{1,\dots,n\}$.

We make use of unitary and state $t$-designs throughout this paper.
\begin{definition}[Unitary $t$-design]
For positive integers $t,d>0$ we say that a random unitary operator $\bm{V}\in\mathsf{U}(d)$ is a \emph{unitary $t$-design} if the the following holds for every operator $X \in \mathsf{L}(d)^{\otimes t}$:
\begin{align*}
\expct \bm{V}^{\otimes t} X (\bm{V}^\dag)^{\otimes t} = \int_{\mathsf{U}(d)} U^{\otimes t} X (U^\dag)^{\otimes t}\mathrm{d}\mu(U)
\end{align*}
where $\mu$ is the Haar measure on the space of $d$-dimensional unitary operators.
\end{definition}
\begin{definition}[State $t$-design]
For positive integers $t,d > 0$, a \emph{state $t$-design} is a random quantum state $\ket{\bm{u}}\in \mathsf{S}(d)$ which satisfies
\begin{align}
\expct (\outerprod{\bm{u}}{\bm{u}})^{\otimes t} = \int_{\mathsf{S}(d)} (\outerprod{v}{v})^{\otimes t}\ \mathrm{d}\mu(v)
\end{align}
where $\mathsf{S}(d)$ is the set of unit vectors in $\mathbb{C}^d$.
\end{definition}

\paragraph{Random variables.}
We denote random variables using bold font, including matrix-valued random variables. We use lowercase (e.g., $p,q$) with appropriate subscripts to denote the distributions of random variables. For example, suppose $\bm{x}$ is a random variable taking values in $\mathcal{X}$ according to some distribution $p_{\bm{x}} : \mathcal{A}\to [0,1]$, where $\mathcal{A}$ is the set of Borel-measurable subsets of $\mathcal{X}$. Let $\mathcal{S}$ be some finite-dimensional vector space, and let $f:\mathcal{X}\to \mathcal{S}$. Then we write interchangeably $\expct_{\bm{x}}f(\bm{x})$ and $\expct_{\bm{x}\sim p_{\bm{x}}}f(\bm{x})$ to refer to the expectation of $f$ with respect to the distribution $p_{\bm{x}}$ (i.e., $\int_{\mathcal{X}}f(x)\mathrm{d}p_{\bm{x}}(x)$) using the latter notation when there may be some ambiguity about what the distribution is. When it is clear enough from context, we drop the subscripts altogether and write $\expct f(\bm{x})$. In the case where $\bm{x}$ is a discrete random variable taking values in some finite set (or alphabet) $\mathcal{X}$, we write its probability mass function (PMF) as $p_{\bm{x}}$, and corresponding expectations $\expct_{\bm{x}\sim p_{\bm{x}}}f(\bm{x})=\sum_{x\in\mathcal{X}}p_{\bm{x}}(x)f(x)$. We also refer to $p_{\bm{x}}$ as the distribution of $\bm{x}$ in this case.
Next suppose we have random variables $(\bm{x},\bm{y})$ jointly distributed on $\mathcal{X}\times\mathcal{Y}$. If $\bm{y}$ is discrete, we write $p_{\bm{y}|x}(y)$ to mean the probability that $\bm{y}=y$ given $\bm{x}=x$, when it is well-defined. We will often have occasion to use functionals $F$ mapping distributions to the reals. Then if $\bm{x}$ has marginal distribution given by $p_{\bm{x}}$, we write $\expct_{\bm{x}^\prime\sim p_{\bm{x}}}F(p_{\bm{y}|\bm{x}^\prime})$ to denote the expectation $\int_{\mathcal{X}}F(p_{\bm{y}|x})\mathrm{d}p_{\bm{x}}(x)$. Finally, we sometimes use in the subscripts of expectations the notation $\bm{x}|y$ to mean the random variable $\bm{x}$ conditioned on $\bm{y}=y$, when it is well-defined. For example, suppose we have a function $g:\mathcal{X}\times\mathcal{Y}\to\mathbb{R}$. It holds by definition that $\expct_{\bm{x}}\expct_{\bm{y}|\bm{x}}g(\bm{x},\bm{y})=\expct_{\bm{x},\bm{y}}g(\bm{x},\bm{y})=\expct_{\bm{y}}\expct_{\bm{x}|\bm{y}}g(\bm{x},\bm{y})$.

\paragraph{Information theory.}
Consider discrete random variables taking values on the same space. One may then use the KL-divergence between their distributions to compare them. The KL-divergence between two discrete distributions (PMFs) $p$, $q : \mathcal{X}\to [0,1]$ defined on the same sample space $\mathcal{X}$ is
\begin{align*}
\mathrm{D}_{\textnormal{KL}}\infdivx{p}{q} 
= \begin{cases}
\sum_{x\in\mathcal{X}}p(x)\log\left(\frac{p(x)}{q(x)}\right),& \text{supp}(p)\subseteq \text{supp}(q)\\
+\infty,& \text{otherwise}
\end{cases}
\end{align*} 
where we take $0 \log(0) = 0$. (Throughout this work, $\log$ denotes the logarithm with base~$2$.)

We next define some entropic quantities. Let $\bm{x}$ be a discrete random variable taking values in $\mathcal{X}$ with distribution $p_{\bm{x}}:\mathcal{X}\to[0,1]$. The Shannon entropy measures our uncertainty about $\bm{x}$ and is defined as
\begin{align*}
H(\bm{x}) = -\sum_{x\in\mathcal{X}} p_{\bm{x}}(x)\log(p_{\bm{x}}(x)).
\end{align*}
We also write $H(p_{\bm{x}})$ to refer to the same quantity. A useful property of the entropy is \textit{concavity}, whereby for any two discrete distributions $p$, $q$ defined on the same sample space and $\lambda\in[0,1]$ it holds that 
\begin{align*}
H(\lambda p+(1-\lambda)q)\geq \lambda H(p) + (1-\lambda)H(q).
\end{align*}
Next, let $\bm{\bm{y}}$ be a different discrete random variable taking values in $\mathcal{Y}$, so that $\bm{x}$ and $\bm{y}$ have joint distribution given by $p_{\bm{x},\bm{y}}:\mathcal{X}\times \mathcal{Y}\to [0,1]$. The joint entropy of these random variables is
\begin{align*}
H(\bm{x},\bm{y}) = -\sum_{x\in\mathcal{X}}\sum_{y\in\mathcal{Y}}p_{\bm{x},\bm{y}}(x,y)\log(p_{\bm{x},\bm{y}}(x,y))
\end{align*}
and the conditional entropy of $\bm{x}$ given $\bm{y}$ is
\begin{align*}
H(\bm{x}|\bm{y}) = H(\bm{x},\bm{y}) - H(\bm{y}).
\end{align*}
These definitions are valid only in the case where $\bm{x}$ and $\bm{y}$ are discrete. Mutual information, on the other hand, is well-defined for arbitrary random variables $\bm{x}$, $\bm{y}$ though for our purposes it will suffice to define this quantity in the following way, which is valid when $\bm{y}$ is discrete.
\begin{definition}[Mutual information]\label{fact:mut_inf_to_kl_relationship}
Consider two random variables $\bm{x}$ and $\bm{y}$ such that $\bm{y}$ is discrete. Let $p_{\bm{y}|x}$ be the conditional distribution of $\bm{y}$ given $\bm{x}=x$, $p_{\bm{x}}$ the marginal distribution of $\bm{x}$, and $p_{\bm{y}}$ the marginal distribution of $\bm{y}$. The \textit{mutual information} between $\bm{x}$ and $\bm{y}$ is
\begin{align*}
I(\bm{x}:\bm{y}) \coloneqq \expect{\bm{x}\sim p_{\bm{x}}} D_{\textnormal{KL}}\infdivx{p_{\bm{y}|\bm{x}}}{p_{\bm{y}}}.
\end{align*}
\end{definition}
As the name suggests, the mutual information between two random variables quantifies the shared information between them. Since this definition is somewhat non-standard, it is worth taking the time to see how it reduces to the more standard definitions in familiar settings. Firstly, it may be shown that the above is equal to
\begin{align*}
I(\bm{x}:\bm{y}) = H(\bm{y}) - \expect{\bm{x}^\prime\sim p_{\bm{x}}}H(\bm{y}|\bm{x}=\bm{x}^\prime)
\end{align*}
where $\bm{y}|\bm{x}=x$ is the random variable $\bm{y}$ conditioned on the event $\bm{x}=x$. Then, if $\bm{x}$ is also discrete, it holds that $H(\bm{y}|\bm{x})=\expct_{\bm{x}^\prime\sim p_{\bm{x}}}H(\bm{y}|\bm{x}=\bm{x}^\prime)$ in which case we arrive at the commonly used expression for the mutual information,
\begin{align*}
I(\bm{x}:\bm{y}) = H(\bm{y}) - H(\bm{y}|\bm{x}) = H(\bm{x})-H(\bm{x}|\bm{y}) = H(\bm{x})+H(\bm{y})-H(\bm{x},\bm{y}).
\end{align*}
Next, suppose $\bm{z}$ is another random variable jointly distributed with $\bm{x}$ and $\bm{y}$. When $\bm{z}$ has a fixed value $z$, we use the notation
\begin{align*}
I(\bm{x}:\bm{y}|\bm{z}=z)\coloneqq I(\ (\bm{x}|\bm{z}=z)\ :\ (\bm{y}|\bm{z}=z)\ )
\end{align*}
where $(\bm{x}|\bm{z}=z)$ is $\bm{x}$ conditioned on $\bm{z}=z$, and likewise for $(\bm{y}|\bm{z}=z)$. The conditional mutual information between $\bm{x}$ and $\bm{y}$ given $\bm{z}$ is then defined as
\begin{align*}
I(\bm{x}:\bm{y}|\bm{z}) \coloneqq \expect{\bm{z}^\prime\sim p_{\bm{z}}}\ I(\bm{x}:\bm{y}|\bm{z}=\bm{z}^\prime).
\end{align*}
We now present three exceedingly useful facts about mutual information. We will use these to derive stronger lower bounds on tomography than the ones obtained by applying Holevo's theorem in the case where there is some restriction on the measurements.
\begin{fact}[Chain rule for mutual information]\label{fact:chain_rule_for_mut_inf}
It holds that
\begin{align*}
I(\bm{x}:\bm{y}_1,\dots,\bm{y}_n)= \sum_{i=1}^n I(\bm{x}:\bm{y}_{i}|\bm{y}_{i-1},\dots,\bm{y}_1).
\end{align*}
\end{fact}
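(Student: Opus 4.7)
The plan is to expand both sides using the definition of mutual information (and conditional mutual information) given just above, and to reduce the identity to the chain rule for joint probability distributions, $p_{\bm{y}_1,\dots,\bm{y}_n}(y_1,\dots,y_n) = \prod_{i=1}^n p_{\bm{y}_i \mid \bm{y}_{<i}}(y_i)$, together with the analogous factorization when conditioned on $\bm{x}=x$. Since each $\bm{y}_i$ is (implicitly) discrete, all conditional distributions that appear are well-defined.

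First I would rewrite the right-hand side. Unfolding the definition of conditional mutual information and the definition of mutual information from Definition~\ref{fact:mut_inf_to_kl_relationship}, one obtains
\begin{align*}
I(\bm{x}:\bm{y}_i \mid \bm{y}_{i-1},\dots,\bm{y}_1)
&= \expect{(\bm{x},\bm{y}_1,\dots,\bm{y}_{i-1})} \log\!\left(\frac{p_{\bm{y}_i \mid \bm{x},\bm{y}_{<i}}(\bm{y}_i)}{p_{\bm{y}_i \mid \bm{y}_{<i}}(\bm{y}_i)}\right),
\end{align*}
where the outer expectation is taken with respect to the joint distribution of $(\bm{x},\bm{y}_1,\dots,\bm{y}_i)$. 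Once every term is brought into this common form, the summation over $i$ can be pulled inside the expectation after extending the expectation to the joint distribution of $(\bm{x},\bm{y}_1,\dots,\bm{y}_n)$, since the log-ratio in the $i$-th summand depends only on $(\bm{x},\bm{y}_1,\dots,\bm{y}_i)$.

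Next, combining the logs into a log of a product yields
\begin{align*}
\sum_{i=1}^n I(\bm{x}:\bm{y}_i \mid \bm{y}_{i-1},\dots,\bm{y}_1)
&= \expect{(\bm{x},\bm{y}_1,\dots,\bm{y}_n)} \log\!\left(\frac{\prod_{i=1}^n p_{\bm{y}_i \mid \bm{x},\bm{y}_{<i}}(\bm{y}_i)}{\prod_{i=1}^n p_{\bm{y}_i \mid \bm{y}_{<i}}(\bm{y}_i)}\right).
\end{align*}
Applying the chain rule for joint distributions to both the numerator and denominator collapses the products to $p_{\bm{y}_1,\dots,\bm{y}_n \mid \bm{x}}(\bm{y}_1,\dots,\bm{y}_n)$ and $p_{\bm{y}_1,\dots,\bm{y}_n}(\bm{y}_1,\dots,\bm{y}_n)$, respectively. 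Finally, reorganizing as an outer expectation over $\bm{x}$ of the KL divergence between the conditional and marginal joint distributions of $(\bm{y}_1,\dots,\bm{y}_n)$ recovers, via Definition~\ref{fact:mut_inf_to_kl_relationship}, the quantity $I(\bm{x}:\bm{y}_1,\dots,\bm{y}_n)$ on the left-hand side.

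There is no real obstacle beyond careful bookkeeping of the conditioning; the only thing to check is that each $p_{\bm{y}_i \mid \bm{x}, \bm{y}_{<i}}$ and $p_{\bm{y}_i \mid \bm{y}_{<i}}$ is well-defined and that the almost-sure equalities of joint and factored distributions hold, both of which follow immediately from the discreteness of the $\bm{y}_i$.
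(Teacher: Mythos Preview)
The paper states this as a \emph{fact} without proof, so there is no argument in the paper to compare against. Your proposal is the standard derivation via the chain rule for joint probability mass functions and is correct. One minor notational slip: in the displayed equation for $I(\bm{x}:\bm{y}_i \mid \bm{y}_{<i})$ you write the expectation over $(\bm{x},\bm{y}_1,\dots,\bm{y}_{i-1})$ while the integrand involves $\bm{y}_i$; as you note immediately afterward, the expectation is really over the joint law of $(\bm{x},\bm{y}_1,\dots,\bm{y}_i)$, so the subscript should include $\bm{y}_i$.
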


\begin{corollary}[Subadditivity of mutual information]\label{cor:subadditivity_of_mut_inf}
If $\bm{y}_1,\dots,\bm{y}_n$ are independent given $\bm{x}$, it holds that
\begin{align*}
I(\bm{x}:\bm{y}_1,\dots,\bm{y}_n)\leq \sum_{i=1}^n I(\bm{x}:\bm{y}_i).
\end{align*}
\end{corollary}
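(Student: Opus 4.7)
The plan is to combine the chain rule for mutual information (Fact~\ref{fact:chain_rule_for_mut_inf}) with the conditional independence hypothesis to bound each term of the chain-rule expansion pointwise by a pairwise mutual information.

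First, I would apply Fact~\ref{fact:chain_rule_for_mut_inf} to expand
\begin{equation*}
I(\bm{x}:\bm{y}_1,\dots,\bm{y}_n) \;=\; \sum_{i=1}^n I\!\left(\bm{x}:\bm{y}_i \mid \bm{y}_{i-1},\dots,\bm{y}_1\right),
\end{equation*}
so it suffices to establish the term-by-term inequality $I(\bm{x}:\bm{y}_i \mid \bm{y}_{<i}) \leq I(\bm{x}:\bm{y}_i)$ for every $i$, where I abbreviate $\bm{y}_{<i} \coloneqq (\bm{y}_1,\dots,\bm{y}_{i-1})$.

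To prove this term-by-term bound, I would apply the chain rule a second time, this time to the joint mutual information $I(\bm{x},\bm{y}_{<i} : \bm{y}_i)$, decomposing the first argument in two different orders. Since Fact~\ref{fact:chain_rule_for_mut_inf} as stated decomposes the \emph{second} argument, I would first invoke the symmetry $I(\bm{a}:\bm{b}) = I(\bm{b}:\bm{a})$ to place the joint variable into the second slot before applying the chain rule. The two resulting decompositions read
\begin{align*}
I(\bm{x},\bm{y}_{<i} : \bm{y}_i) &= I(\bm{y}_{<i} : \bm{y}_i) + I(\bm{x} : \bm{y}_i \mid \bm{y}_{<i}),\\
I(\bm{x},\bm{y}_{<i} : \bm{y}_i) &= I(\bm{x} : \bm{y}_i) + I(\bm{y}_{<i} : \bm{y}_i \mid \bm{x}).
\end{align*}
Subtracting these identities gives $I(\bm{x}:\bm{y}_i \mid \bm{y}_{<i}) = I(\bm{x}:\bm{y}_i) + I(\bm{y}_{<i}:\bm{y}_i \mid \bm{x}) - I(\bm{y}_{<i}:\bm{y}_i)$.

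The conditional independence hypothesis forces the middle term to vanish: for each realization $x$ of $\bm{x}$, the random variable $\bm{y}_i$ is independent of $\bm{y}_{<i}$, so the conditional distribution of $\bm{y}_i$ given $(\bm{x}=x,\bm{y}_{<i}=y_{<i})$ coincides with the conditional distribution given $\bm{x}=x$, and the inner KL-divergence in Definition~\ref{fact:mut_inf_to_kl_relationship} is zero. Combined with the non-negativity of $I(\bm{y}_{<i}:\bm{y}_i)$ (which is immediate from $D_{\mathrm{KL}} \geq 0$), this yields $I(\bm{x}:\bm{y}_i \mid \bm{y}_{<i}) \leq I(\bm{x}:\bm{y}_i)$. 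Summing over $i$ then gives the claimed inequality. There is no substantive obstacle---the argument is two invocations of the chain rule plus the vanishing of one conditional mutual information---the only mildly delicate point being the symmetry step needed to move $(\bm{x},\bm{y}_{<i})$ into the second argument of $I$ before expanding.
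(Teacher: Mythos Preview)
Your proposal is correct and follows essentially the same route as the paper: apply the chain rule (Fact~\ref{fact:chain_rule_for_mut_inf}) and then bound each term $I(\bm{x}:\bm{y}_i\mid\bm{y}_{<i})$ by $I(\bm{x}:\bm{y}_i)$ using the conditional-independence hypothesis. The paper does not prove the corollary in place but spells out the termwise bound later (in the proof of Corollary~\ref{cor:independent_lbs}) via entropies, writing $I(\bm{x}:\bm{y}_i\mid\bm{y}_{<i}) = H(\bm{y}_i\mid\bm{y}_{<i}) - H(\bm{y}_i\mid\bm{y}_{<i},\bm{x}) = H(\bm{y}_i\mid\bm{y}_{<i}) - H(\bm{y}_i\mid\bm{x}) \le H(\bm{y}_i) - H(\bm{y}_i\mid\bm{x})$; your identity $I(\bm{x}:\bm{y}_i\mid\bm{y}_{<i}) = I(\bm{x}:\bm{y}_i) + I(\bm{y}_{<i}:\bm{y}_i\mid\bm{x}) - I(\bm{y}_{<i}:\bm{y}_i)$ is exactly this entropy calculation repackaged, so the difference is purely presentational.
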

The random variables $\bm{x}$, $\bm{y}$, $\bm{z}$ form a \textit{Markov chain\/} $\bm{x}\to\bm{y}\to\bm{z}$ if given $\bm{y}$, the random variables~$\bm{x}$ and $\bm{z}$ are independent (Ref.~\cite{Cover2005}, Section 2.8). Under this assumption, the following lemma holds, which is indispensable toward proving information-theoretic lower bounds on estimation tasks.
\begin{lemma}[Fano's inequality~\cite{fano1966}]\label{fact:fanos_ineq}
Let $\bm{x}$, $\bm{y}$, $\hat{\bm{x}}$ be discrete random variables forming a Markov chain $\bm{x}\to \bm{y}\to\hat{\bm{x}}$, where $\bm{x}$ takes values in $\mathcal{X}$. It holds that
\begin{align*}
H(p_e)+p_e\log(|\mathcal{X}|) \geq H(\bm{x}|\bm{y}).
\end{align*}
where $p_e\coloneqq \Pr[\bm{x}\neq\hat{\bm{x}}]$, and $H(\cdot)$ is the binary entropy function.
\end{lemma}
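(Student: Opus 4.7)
The plan is to follow the standard approach via an error-indicator random variable and two applications of the chain rule for entropy, then close the gap to $H(\bm{x}|\bm{y})$ using the Markov chain assumption.

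First, I would introduce the Bernoulli indicator $\bm{e} = \mathbb{1}[\bm{x}\neq \hat{\bm{x}}]$, so that $\Pr[\bm{e}=1] = p_e$ and $H(\bm{e}) = H(p_e)$. The key trick is to compute $H(\bm{e},\bm{x}\mid\hat{\bm{x}})$ in two different ways using the chain rule. On one hand, $\bm{e}$ is a deterministic function of the pair $(\bm{x},\hat{\bm{x}})$, so $H(\bm{e}\mid \bm{x},\hat{\bm{x}}) = 0$, giving
\[
H(\bm{e},\bm{x}\mid \hat{\bm{x}}) \;=\; H(\bm{x}\mid \hat{\bm{x}}) + H(\bm{e}\mid \bm{x},\hat{\bm{x}}) \;=\; H(\bm{x}\mid \hat{\bm{x}}).
\]
On the other hand, expanding in the opposite order,
\[
H(\bm{e},\bm{x}\mid \hat{\bm{x}}) \;=\; H(\bm{e}\mid \hat{\bm{x}}) + H(\bm{x}\mid \bm{e},\hat{\bm{x}}).
\]

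Next I would bound each of these two terms separately. Since conditioning reduces entropy, $H(\bm{e}\mid \hat{\bm{x}}) \leq H(\bm{e}) = H(p_e)$. For the second term, I split on the value of $\bm{e}$: when $\bm{e}=0$ we have $\bm{x}=\hat{\bm{x}}$ so the conditional entropy is zero, and when $\bm{e}=1$ the variable $\bm{x}$ takes at most $|\mathcal{X}|-1$ values (the ones different from $\hat{\bm{x}}$), so $H(\bm{x}\mid \bm{e}=1,\hat{\bm{x}}) \leq \log(|\mathcal{X}|-1)\leq \log|\mathcal{X}|$. Averaging over $\bm{e}$ yields $H(\bm{x}\mid \bm{e},\hat{\bm{x}}) \leq p_e \log|\mathcal{X}|$. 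Combining the two expansions gives the preliminary bound
\[
H(\bm{x}\mid \hat{\bm{x}}) \;\leq\; H(p_e) + p_e\log|\mathcal{X}|.
\]

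Finally, I would use the Markov chain hypothesis $\bm{x}\to\bm{y}\to\hat{\bm{x}}$ to replace $\hat{\bm{x}}$ by $\bm{y}$. By the data processing inequality applied to this Markov chain, $I(\bm{x}:\hat{\bm{x}}) \leq I(\bm{x}:\bm{y})$, and subtracting both sides from $H(\bm{x})$ yields $H(\bm{x}\mid \bm{y}) \leq H(\bm{x}\mid \hat{\bm{x}})$. Chaining this with the inequality derived above gives $H(\bm{x}\mid \bm{y}) \leq H(p_e) + p_e \log|\mathcal{X}|$, as required. There is no real obstacle here; the only point requiring care is the counting argument that $\bm{x}$ ranges over at most $|\mathcal{X}|-1$ values conditioned on $\bm{e}=1$ and $\hat{\bm{x}}$, which is what makes the factor $\log|\mathcal{X}|$ appear rather than anything weaker.
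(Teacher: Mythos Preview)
Your proof is correct and is the standard textbook argument for Fano's inequality (as in, e.g., Cover and Thomas). Note that the paper does not actually supply a proof of this lemma; it simply states it with a citation to Fano~\cite{fano1966} and then uses it, so there is no paper-specific approach to compare against.
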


\begin{corollary}\label{cor:fanos_ineq}
Let $\bm{x},\bm{y},\hat{\bm{x}}$ be discrete random variables forming a Markov chain $\bm{x}\to\bm{y}\to\hat{\bm{x}}$. Suppose Alice has a message $\bm{x}\sim\textnormal{Unif}([N])$ and Bob is able to decode the message with constant probability of success using $\hat{\bm{x}}$. It must hold that
\begin{align*}
I(\bm{x}:\bm{y}) \in \Omega(\log(N)).
\end{align*}
\end{corollary}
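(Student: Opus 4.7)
The plan is to apply Fano's inequality directly, together with the identity $I(\bm{x}:\bm{y}) = H(\bm{x}) - H(\bm{x}|\bm{y})$ (valid since $\bm{x}$ is discrete), and exploit the uniformity of $\bm{x}$ to compute $H(\bm{x}) = \log N$ exactly.

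First, let $p_e \coloneqq \Pr[\bm{x} \neq \hat{\bm{x}}]$. The assumption that Bob decodes with constant probability of success means there exists a constant $c \in (0, 1]$ such that $\Pr[\bm{x} = \hat{\bm{x}}] \geq c$, and hence $p_e \leq 1 - c < 1$. I would then invoke Lemma~\ref{fact:fanos_ineq} (Fano's inequality) on the Markov chain $\bm{x} \to \bm{y} \to \hat{\bm{x}}$ with $|\mathcal{X}| = N$, yielding
\begin{align*}
H(\bm{x}|\bm{y}) \;\leq\; H(p_e) + p_e \log N.
\end{align*}

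Next, since $\bm{x}$ is uniform on a set of size $N$, we have $H(\bm{x}) = \log N$. Combining these with the definition of mutual information gives
\begin{align*}
I(\bm{x}:\bm{y}) \;=\; H(\bm{x}) - H(\bm{x}|\bm{y}) \;\geq\; \log N - H(p_e) - p_e \log N \;=\; (1 - p_e)\log N - H(p_e).
\end{align*}
Using $1 - p_e \geq c$ and $H(p_e) \leq 1$, we conclude $I(\bm{x}:\bm{y}) \geq c \log N - 1 = \Omega(\log N)$, as desired.

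There is no real obstacle here; the argument is essentially a one-line invocation of Fano's inequality. The only point worth flagging is that the ``constant probability of success'' hypothesis is precisely what keeps $1 - p_e$ bounded away from zero, so that the coefficient of $\log N$ is a positive constant rather than vanishing. Note also that the version of Fano's inequality stated in Lemma~\ref{fact:fanos_ineq} uses $H(\bm{x}|\bm{y})$ directly (rather than $H(\bm{x}|\hat{\bm{x}})$), which is what we need, since the conclusion is phrased in terms of $I(\bm{x}:\bm{y})$; no separate application of the data-processing inequality is required.
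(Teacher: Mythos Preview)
Your proof is correct and follows essentially the same approach as the paper's: apply Fano's inequality to bound $H(\bm{x}|\bm{y})$, use $H(\bm{x}) = \log N$ from uniformity, and bound $H(p_e) \le 1$ to conclude $I(\bm{x}:\bm{y}) \ge (1-p_e)\log N - 1$. Your version is slightly more explicit in naming the success-probability constant $c$, but the argument is the same.
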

\begin{proof}
Using the definition of mutual information we have $I(\bm{x}:\bm{y})=H(\bm{x})-H(\bm{x}|\bm{y})$. Let $p_e$ be as in Lemma~\ref{fact:fanos_ineq}. By Lemma~\ref{fact:fanos_ineq} we have $I(\bm{x}:\bm{y})\geq H(\bm{x})-p_e\log(N)-H(p_e)$. Using the fact that $H(\bm{x})=\log(N)$ and $H(p_e)\leq 1$ we obtain $I(\bm{x}:\bm{y})\geq (1-p_e)\log(N)-1$. 
\end{proof}
Besides the KL-divergence, another way to compare distributions defined on the same space is the following.
\begin{definition}[$\chi^2$-divergence]\label{def:chi_squared}
The $\chi^2$-divergence between two discrete distributions $p,q:\mathcal{X}\to[0,1]$ defined on the same sample space $\mathcal{X}$ is
\begin{align*}
\mathrm{D}_{\chi^2}\infdivx{p}{q} = \sum_{x\in\mathcal{X}}q(x)\left(\frac{p(x)}{q(x)}-1\right)^2 =  \sum_{x\in\mathcal{X}}q(x)\left(\frac{p(x)}{q(x)}\right)^2-1
\end{align*}
if $\support(p) \subseteq \support(q)$, and is equal to $+\infty$ otherwise.
\end{definition}
These divergences are related in the following way.
\begin{lemma}[KL vs.\ $\chi^2$ inequality]\label{lem:chi_squared_vs_kl_inequality}
Let $p,q:\mathcal{X}\to[0,1]$ be discrete distributions defined on the same sample space $\mathcal{X}$. We have
\begin{align*}
\mathrm{D}_{\textnormal{KL}}\infdivx{p}{q}\leq \frac{1}{\ln(2)}\cdot \mathrm{D}_{\chi^2}\infdivx{p}{q}.
\end{align*}
\end{lemma}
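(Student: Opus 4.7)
The plan is to deduce the inequality directly from the elementary estimate $\ln t \le t - 1$ for all $t > 0$, applied pointwise to the likelihood ratio $p(x)/q(x)$. This gives a cleaner argument than going through Jensen, and mirrors how one usually proves non-negativity of the KL divergence (there one uses $\ln t \ge 1 - 1/t$ instead). The only subtlety is bookkeeping with supports and the change of base from the natural logarithm to $\log = \log_2$.

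First I would reduce to the case $\mathrm{supp}(p) \subseteq \mathrm{supp}(q)$. If this inclusion fails, then $D_{\mathrm{KL}}\infdivx{p}{q} = +\infty$ by the case analysis in the definition, and at the same time some term $p(x)^2/q(x)$ in $D_{\chi^2}\infdivx{p}{q}$ has a positive numerator divided by $0$, so the $\chi^2$-divergence is also $+\infty$ (using the standard convention $a/0 = +\infty$ when $a > 0$); hence the inequality holds trivially. So assume $\mathrm{supp}(p) \subseteq \mathrm{supp}(q)$, in which case every ratio $p(x)/q(x)$ appearing in the sum is a well-defined nonnegative real, and terms where $p(x) = 0$ contribute nothing to either side (under the convention $0 \log 0 = 0$).

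Next, for each $x$ with $p(x) > 0$, apply $\ln t \le t - 1$ with $t = p(x)/q(x) > 0$ to get
\begin{align*}
\ln\!\left(\frac{p(x)}{q(x)}\right) \;\le\; \frac{p(x)}{q(x)} - 1.
\end{align*}
Multiplying by $p(x) \ge 0$ (preserving the inequality) and summing over $x \in \mathcal{X}$ yields
\begin{align*}
\sum_{x \in \mathcal{X}} p(x)\, \ln\!\left(\frac{p(x)}{q(x)}\right) \;\le\; \sum_{x \in \mathcal{X}} p(x)\!\left(\frac{p(x)}{q(x)} - 1\right) \;=\; \sum_{x \in \mathcal{X}} \frac{p(x)^2}{q(x)} \;-\; 1 \;=\; \mathrm{D}_{\chi^2}\infdivx{p}{q},
\end{align*}
where the penultimate equality uses $\sum_x p(x) = 1$, and the final equality is the second form of Definition~\ref{def:chi_squared}. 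The left-hand side equals $\ln(2)\cdot \mathrm{D}_{\mathrm{KL}}\infdivx{p}{q}$ by the change of base $\log = \ln/\ln 2$. Dividing through by $\ln(2)$ gives the claimed bound.

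There is really no hard step: the only place one can stumble is in handling the support condition, and in remembering the $1/\ln(2)$ factor introduced by switching between $\log_2$ (used in $D_{\mathrm{KL}}$) and $\ln$ (used in the estimate $\ln t \le t - 1$). Everything else is a one-line algebraic manipulation using $\sum_x p(x) = 1$ to identify the resulting expression with $D_{\chi^2}\infdivx{p}{q}$.
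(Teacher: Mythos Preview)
Your proof is correct. It differs from the paper's argument, however: the paper cites the sharper bound $\mathrm{D}_{\textnormal{KL}}\infdivx{p}{q} \le \log(1+\mathrm{D}_{\chi^2}\infdivx{p}{q})$ from Ref.~\cite{sason_2016} (whose standard proof is Jensen's inequality applied to $t \mapsto \log t$) and then relaxes it via $\log(1+x) \le x/\ln 2$. You instead apply $\ln t \le t-1$ pointwise and sum, obtaining the desired inequality in one step without the intermediate $\log(1+\mathrm{D}_{\chi^2})$ bound. Your route is fully self-contained and slightly more direct; the paper's route yields a strictly stronger intermediate inequality (since $\log(1+x) \le x/\ln 2$ with equality only at $x=0$) at the cost of invoking an external reference. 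Either way the conclusion is immediate.
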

\begin{proof}
By Eq.~(5) in Ref.~\cite{sason_2016}, we have the inequality $\mathrm{D}_{\textnormal{KL}}\infdivx{p}{q}\leq \log(1+\mathrm{D}_{\chi^2}\infdivx{p}{q})$ from which the lemma follows by the inequality $\log(1+x)\leq x/\ln(2)\ \forall x\geq 0$. For an exposition of the many other relationships between divergences, we refer the interested reader to Ref.~\cite{sason_2016}. 
\end{proof}

\subsection{Single-copy measurements}\label{sec:measurement_models}
In general, an $m$-outcome \textit{measurement} of a $d$-dimensional quantum state is a linear map $\mathcal{M}:\mathsf{D}(d)\to\mathsf{L}(m)$ acting on quantum states $\rho\in\mathsf{D}(d)$ by
\begin{align*}
\mathcal{M}: \rho \mapsto \sum_{z\in\mathcal{Z}} \Tr(M_z\rho)\ket{z}\bra{z}
\end{align*}
for some ``positive operator-valued measure'' (POVM) $( M_z:z\in\mathcal{Z}, \; M_z \in \mathsf{Psd}(d) )$ satisfying $\sum_{z\in\mathcal{Z}} M_z =\mathds{1}$, and where $\mathcal{Z}$ is a set of $m$ possible outcomes of the measurement. For a measurement~$\mathcal{M}$, $\rank(\mathcal{M})$ denotes the number of possible outcomes~$|\mathcal{Z}|$. Without loss of generality we can assume $\mathcal{Z}=[m]$. In this work we focus on measurements with a finite number of outcomes, letting $\Xi(d,m)$ denote the set of all $m$-outcome measurements on $d$-dimensional states, and $\Xi(d)\coloneqq\bigcup_{m\in \mathbb{Z}_+}\Xi(d,m)$ denote the set of all finite-outcome measurements on $d$-dimensional states. The distribution of the random outcome $\bm{z}$ from measuring the state $\rho$ is described by the PMF $p_{\bm{z}} = \text{diag}(\mathcal{M}(\rho))$, so that $p_{\bm{z}}(z)=\Tr(M_z\rho)$ for all outcomes $z$.

Suppose there is a single $d$-dimensional register which can be prepared in the state $\rho$ upon request, at which point it is measured once, and this process is repeated $n$ times. We refer to the class of measurements corresponding to this scenario as \textit{single-copy measurements}, where the number of samples used is equal to the number of measurements performed. Within this class, there are two models of particular interest.

\paragraph{Nonadaptive measurements.}
Consider $n$ copies of the state $\rho\in\mathsf{D}(d)$ prepared in the above manner, so that they must be measured individually. In the \textit{nonadaptive} measurement model, we use a sequence of measurements $\mathcal{M}_i\in\Xi(d)$ for $i=1,\dots,n$ which are determined before any measurements are performed. Equivalently, we measure the state $\rho^{\otimes n}$ using a tensor product of measurements on $d$-dimensional states, $\mathcal{M}_1\otimes\mathcal{M}_2\otimes\dots\otimes\mathcal{M}_n$. Note that allowing the choice of the $i^\text{th}$ measurement to be an independent random variable is equivalent to the above description, since the randomness in the choice of measurement can be incorporated into the measurement itself. I.e., the resulting linear maps on $d$-dimensional states still correspond to some fixed measurements.

\paragraph{Adaptive measurements.}
In the \textit{adaptive} measurement model, the choice of each $d$-dimensional measurement in the sequence can depend on the outcomes obtained by the previous measurements. This means that the $i^\text{th}$ measurement in the sequence can be written $\mathcal{M}^{y_{<i}}$, where $y_{<i}=y_{i-1}\dots y_1$ are the outcomes of the previous $i-1$ measurements. For each possible value of $y_{<i}$ there is a POVM $( M^{y_{<i}}_{y_i})_{y_i}$ corresponding to $i^\text{th}$ measurement and potentially depending on~$y_{<i}$, such that the measurement has the action
\begin{align*}
\mathcal{M}^{y_{<i}}: \rho \mapsto \sum_{y_i} \Tr(M^{y_{<i}}_{y_i}\rho)\ket{y_i}\bra{y_i}
\end{align*}
on quantum states $\rho\in \mathsf{D}(d)$.

\section{Packing construction}

To demonstrate lower bounds for quantum tomography, it suffices to show that there exists a large, but well-separated collection of quantum states (an $\eps$-packing) which are difficult to discriminate with too few copies of the state. This is due to the fact that the task of state discrimination reduces to tomography with sufficient accuracy when the states are far enough apart, since the latter task allows one to correctly identify the state in the ensemble under these conditions. We therefore aim to construct a hard instance of the state discrimination problem, and then argue that if the number of samples $n$ is too small the success probability of our protocol goes to zero as the parameters $d$ and $1/\eps$ increase.
\begin{definition}[$\eps$-packing]
A finite set of quantum states $\mathcal{S}\subset\mathsf{D}(d)$ is an $\eps$-\textit{packing} for some $\eps>0$ if it holds that $\norm{\rho-\sigma}_1> \eps$ for every $\rho,\sigma\in \mathcal{S}$ such that $\rho\neq \sigma$. 
\end{definition}
Let $\{\ket{i}:i\in [d]\}$ denote the standard basis for $\mathbb{C}^d$, and let $Q_{k}$ be the orthogonal projection operator onto the subspace spanned by $\{\ket{i}:i\in [k]\}$. The $\eps$-packing we construct comprises states of the following form:
\begin{align}\label{eqn:parametrization}
    \rho_{\eps,U} \eqdef \frac{2\eps}{d} U Q_{d/2}  U^\dag + \frac{1-\eps}{d}\mathds{1}
\end{align}
where $\eps \in (0,1)$ and we assume $d$ is even for simplicity. The assumption of~$d$ being even does not take away from the argument, and we may proceed analogously with a floor or ceiling when it is odd. States of the above form have also been considered in the previous lower bounds for tomography and related tasks (see, e.g., Refs.~\cite{Haah_2017,BCL20-mixedness-testing}). Intuitively, these states are useful because they represent a hard case where the completely mixed state is slightly perturbed, which leads to ``noisy" measurement statistics. This is in analogy with the packing of distributions which one would construct to prove lower bounds for distribution estimation, the classical analogue of tomography. We make use of the definition in Eq.~\eqref{eqn:parametrization} frequently in the remainder of this paper.

We apply the probabilistic method to construct an $\eps$-packing of states of this form. We draw a sequence of i.i.d.\ unitary operators $U_1,U_2,\dots$ from the Haar distribution on $\mathsf{U}(d)$ and consider the states $\rho_{\eps,U_i}$. We then  apply standard concentration of measure results to argue that the probability of selecting an undesirable state (that our state ``collides" with a previously chosen one) is exponentially small. This in turn implies that a large fraction of the states are ``safe" choices, so that we may choose one and repeat the argument many times. 

We use the following ``concentration of projector overlaps" result, which is implied by the proof of Lemma III.5 in Ref.~\cite{Hayden2006}, and has also been employed in the lower bounds for tomography which appear in~\cite{Haah_2017} as well as lower bounds for similar tasks (see for example Refs.~\cite{Aaronson2020shadow,HKP21-it-bounds}).
\begin{lemma}\label{lem:concentration_projector_overlaps}
Let $\bm{U}\in \mathsf{U}(d)$ be a Haar-random unitary operator and let $\Pi_1,\Pi_2\in\mathsf{Psd}(d)$ be orthogonal projection operators with rank $r_1$,$r_2$ respectively. For all $t\in (0,1)$ it holds that
\begin{align*}
\Pr_{\bm{U}\sim \textnormal{Haar}}\left[\Tr(\Pi_1\bm{U} \, \Pi_2\bm{U}^\dag)\leq (1-t)\frac{r_1r_2}{d}\right]&\leq \exp\left(-r_1r_2t^2/2\right) \enspace, \qquad \text{and} \\
\Pr_{\bm{U}\sim \textnormal{Haar}}\left[\Tr(\Pi_1\bm{U} \, \Pi_2\bm{U}^\dag)\geq (1+t)\frac{r_1r_2}{d}\right]&\leq \exp\left(-r_1r_2t^2/4\right)
\enspace.
\end{align*}
\end{lemma}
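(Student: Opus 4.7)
The plan is to apply a concentration-of-measure inequality for the Haar measure on $\mathsf{U}(d)$ to the function $f(U) \eqdef \Tr(\Pi_1 U \Pi_2 U^\dag)$, centered at its Haar expectation. The expectation is immediate: the identity $\expct[\bm{U} \Pi_2 \bm{U}^\dag] = (\Tr(\Pi_2)/d)\,\id = (r_2/d)\,\id$ gives $\expct[f(\bm{U})] = r_1 r_2/d$, which is exactly the value about which the claimed concentration occurs.

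For the deviation step, the natural first attempt is Levy's lemma on the unitary group. Expanding $f(U) - f(V) = \Tr(\Pi_1(U-V)\Pi_2 U^\dag) + \Tr(\Pi_1 V \Pi_2 (U-V)^\dag)$ and applying Cauchy--Schwarz together with $\|\Pi_1 U \Pi_2\|_{\mathrm{F}}^2 = \Tr(\Pi_1 U \Pi_2 U^\dag) \leq \min(r_1,r_2)$ yields a Lipschitz constant of $2\sqrt{\min(r_1,r_2)}$. Unfortunately, Levy's lemma with this constant and deviation $\delta = t r_1 r_2 / d$ only produces an exponent of order $t^2 r_1 r_2 \max(r_1, r_2) / d$, which is much weaker than the claimed $\exp(-\Omega(r_1 r_2 t^2))$ unless $\max(r_1,r_2) = \Omega(d)$.

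To recover the correct scaling, I would instead exploit the additive decomposition $f(\bm{U}) = \sum_{i=1}^{r_2}\langle \bm{\psi}_i|\Pi_1|\bm{\psi}_i\rangle$, where $\ket{\bm{\psi}_i} = \bm{U}\ket{i}$ (taking $\Pi_2 = \sum_{i\leq r_2}\ketbra{i}{i}$ in a convenient basis, without loss of generality). Conditioning on $\ket{\bm{\psi}_1},\dots,\ket{\bm{\psi}_{i-1}}$, the next column is Haar-uniform on the unit sphere of a $(d-i+1)$-dimensional orthogonal complement, and its squared overlap with $\Pi_1$ follows a tractable (essentially Beta-type) distribution, with conditional mean at most $r_1/(d-i+1)$ and conditional variance $O(r_1/d^2)$. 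A Doob-martingale argument, or equivalently a chain-of-MGFs calculation in the Bernstein style, then produces the desired bound, since the cumulative conditional variance is $\sigma^2 = O(r_1 r_2/d^2)$ and the deviation $\delta = t r_1 r_2/d$ gives $\delta^2/\sigma^2 = \Theta(t^2 r_1 r_2)$.

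\textbf{Main obstacle.} The principal technical difficulty is nailing down the precise constants $1/2$ and $1/4$ in the exponents, and in particular explaining the asymmetry between the upper and lower tails. This requires a careful analysis of the conditional moment generating function of $\|\Pi_1\ket{\bm{\psi}_i}\|^2$ with an optimal choice of Chernoff parameter at each conditioning step. The factor-of-two gap between the two tails is characteristic of one-sided Bernstein-type inequalities for sums of $[0,1]$-valued random variables: once the higher-order terms in the Taylor expansion of the MGF are retained, positive and negative deviations of the sum inherit different constants.
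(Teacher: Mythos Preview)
Your approach is correct in spirit and is essentially a sketch of how one would prove the underlying concentration result (Lemma~III.5 of Hayden, Leung, and Winter~\cite{Hayden2006}) from scratch. The paper, however, does not reprove that lemma: it simply quotes the bounds
\begin{align*}
\Pr\!\left[\Tr(\Pi_1\bm{U}\,\Pi_2\bm{U}^\dag)\leq (1-t)\tfrac{r_1r_2}{d}\right] &\leq \exp\!\big(r_1r_2(t+\ln(1-t))\big),\\
\Pr\!\left[\Tr(\Pi_1\bm{U}\,\Pi_2\bm{U}^\dag)\geq (1+t)\tfrac{r_1r_2}{d}\right] &\leq \exp\!\big(-r_1r_2(t-\ln(1+t))\big),
\end{align*}
already established in that reference, and then applies the elementary inequalities $\ln(1-t)\leq -t-t^2/2$ and $\ln(1+t)\leq t-t^2/4$ on $(0,1)$ to read off the constants $1/2$ and $1/4$. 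So the asymmetry between the two tails that you flagged as the ``main obstacle'' is not the outcome of a delicate martingale optimisation; it arises from two different second-order Taylor-type bounds on the logarithm.

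Your decomposition $f(\bm{U})=\sum_{i\le r_2}\langle\bm{\psi}_i|\Pi_1|\bm{\psi}_i\rangle$ together with the conditional Beta-type law of each summand is precisely the mechanism behind the cited lemma. If you run the Chernoff argument with the explicit moment generating function of that Beta variable rather than a generic Bernstein bound, you obtain exactly the exponents $r_1r_2(t+\ln(1-t))$ and $-r_1r_2(t-\ln(1+t))$, after which the constants follow as above. In short: your route is longer but self-contained; the paper's proof is a two-line reduction to a known concentration statement plus two logarithm inequalities.
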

\begin{proof}
By Lemma III.5 in Ref.~\cite{Hayden2006} we have
\begin{align*}
\Pr_{\bm{U}\sim \textnormal{Haar}}\left[\Tr(\Pi_1\bm{U} \, \Pi_2\bm{U}^\dag)\leq (1-t)\frac{r_1r_2}{d}\right]&\leq \exp\left(r_1r_2(t+\ln(1-t))\right)
\end{align*}
for all $t\in (0,1)$, and the first bound follows immediately from the inequalities $\ln(1-t)\leq -t -t^2/2$ which holds for all $t\in(0,1)$. Similarly, the second bound in Lemma III.5 of Ref.~\cite{Hayden2006} is
\begin{align}
    \Pr_{\bm{U}\sim \textnormal{Haar}}\left[\Tr(\Pi_1\bm{U} \, \Pi_2\bm{U}^\dag)\geq (1+t)\frac{r_1r_2}{d}\right]&\leq \exp\left(-r_1r_2(t-\ln(1-t))\right)
\end{align}
for all $t\in(0,1)$, and noting that the inequality $\ln(1+t)\leq t -t^2/4$ holds for all $t\in (0,1)$ completes the proof.
\end{proof}
The second tail bound above is a bit looser than that shown in Ref.~\cite{Haah_2017}, but suffices for our purposes. We now construct a sufficiently large packing of quantum states of the form in Eq.~\eqref{eqn:parametrization} which are difficult to discriminate, using a probabilistic existence argument. This is a special case of the approach adopted in Ref.~\cite{Haah_2017}. 
\begin{lemma}\label{lem:packing_construction}
Fix an $\epsilon\in(0,1)$ and a positive integer $d$, and let $N \leq\lfloor\xi\mathrm{e}^{d^2/32}\rfloor$ be a positive integer for some $\xi\in (0,1]$. Consider a finite set of quantum states $\{\rho_1,\rho_2,\dots,\rho_N\}\subset \mathsf{D}(d)$ where
\begin{align*}
\rho_i = \frac{2\eps}{d} U_iQ_{d/2}  U_i^\dag + (1-\eps)\frac{\mathds{1}}{d}
\end{align*}
for each $i\in [N]$ and $U_1,U_2,\dots,U_N\in\mathsf{U}(d)$ are arbitrary unitary operators. For Haar-random $\bm{U}\in\mathsf{U}(d)$, the probability that ${\norm{\rho_{\eps,\bm{U}}-\rho_i}_1 \leq \eps/2}$ for any $i \in [N]$ is at most $\xi$.
\end{lemma}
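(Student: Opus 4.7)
The plan is a standard probabilistic existence argument: bound the single-$i$ collision probability $\Pr[\|\rho_{\eps,\bm U} - \rho_i\|_1 \leq \eps/2]$ by an $\exp(-\Omega(d^2))$ term and then union-bound over $i \in [N]$. By the left-invariance of the Haar measure, the substitution $\bm U \mapsto U_i^\dag \bm U$ preserves the distribution of $\bm U$, so we may assume without loss of generality that $U_i = \id$. Writing $\rho_{\eps,\bm U} - \rho_{\eps,\id} = \frac{2\eps}{d}\bigl(\bm U Q_{d/2}\bm U^\dag - Q_{d/2}\bigr)$, the collision event is then equivalent to $\|\bm U Q_{d/2}\bm U^\dag - Q_{d/2}\|_1 \leq d/4$.

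The next step is to translate this trace-distance condition into one about the projector overlap $\trace(Q_{d/2}\bm U Q_{d/2}\bm U^\dag)$, which is what Lemma~\ref{lem:concentration_projector_overlaps} actually controls. The key linear-algebra fact is that, for any two rank-$r$ orthogonal projectors $P_1, P_2 \in \mathsf{Psd}(d)$, the nonzero eigenvalues of $\Delta = P_1 - P_2$ come in pairs $\pm\sigma_i$ with $\sigma_i \in (0,1]$ (from the Jordan/CS decomposition, together with the fact that $\trace(\Delta) = 0$). Using $\sigma_i^2 \leq \sigma_i$, this yields
$$\|P_1 - P_2\|_1 \;=\; 2\sum_i \sigma_i \;\geq\; 2\sum_i \sigma_i^2 \;=\; \trace(\Delta^2) \;=\; 2\bigl(r - \trace(P_1 P_2)\bigr).$$
Applied to $P_1 = \bm U Q_{d/2}\bm U^\dag$ and $P_2 = Q_{d/2}$, the collision event therefore forces $\trace(Q_{d/2}\bm U Q_{d/2}\bm U^\dag) \geq 3d/8$, a constant-factor deviation above its Haar mean $r_1 r_2/d = d/4$.

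Finally, invoking Lemma~\ref{lem:concentration_projector_overlaps} with $r_1 = r_2 = d/2$ and $t = 1/2$ bounds this per-$i$ probability by $\exp(-c d^2)$ for an absolute constant $c > 0$; pinning down $c = 1/32$ matches the stated $N$-bound and can be arranged by invoking the sharper form $\exp(-r_1 r_2 (t - \ln(1+t)))$ of the tail inequality directly from Ref.~\cite{Hayden2006} (or, alternatively, by slightly rebalancing the threshold $3d/8$). A union bound over $i \in [N]$ then controls the total collision probability by $N \exp(-cd^2) \leq \xi$ under the hypothesis on $N$. I expect the main obstacle to be the linear-algebra step: the more natural Cauchy--Schwarz lower bound $\|P_1-P_2\|_1 \geq 2\sqrt{r - \trace(P_1 P_2)}$ is vacuous in our parameter regime, where $\trace(P_1 P_2) \approx d/4$, so it is essential to use $\sigma_i \leq 1$ to obtain the linear-in-$(r - \trace(P_1 P_2))$ estimate; the remainder of the argument is routine given the concentration lemma.
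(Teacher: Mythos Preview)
Your argument is essentially the paper's: Haar-invariance reduction to $U_i=\id$, conversion of the trace-distance event to a projector-overlap event, concentration lemma, union bound. Your key inequality $\|P_1-P_2\|_1 \ge 2(r-\Tr(P_1P_2))$ is in fact identical to the paper's $\Tr(P\,UQU^\dag)\le \tfrac12\|UQU^\dag-Q\|_1$ once you note $\Tr(P\,UQU^\dag)=d/2-\Tr(Q\,UQU^\dag)$ for $P=\id-Q_{d/2}$; the paper derives it slightly differently, via $|\Tr((UQU^\dag-Q)(P-Q))|\le \|UQU^\dag-Q\|_1$ using that $P-Q$ is unitary.

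The only substantive wrinkle is the constant: the paper works with the complementary projector $P$ and hence lands on the \emph{lower} tail of Lemma~\ref{lem:concentration_projector_overlaps} (threshold $d/8=(1-\tfrac12)\cdot d/4$), whose exponent $r_1r_2t^2/2$ gives exactly $e^{-d^2/32}$. Your upper-tail route on $\Tr(Q\,UQU^\dag)\ge 3d/8$ gives $e^{-d^2/64}$ from the lemma as stated, and the ``sharper form'' $\exp(-r_1r_2(t-\ln(1+t)))$ at $t=1/2$ still yields only $\approx e^{-0.024\,d^2}$, short of $1/32$. So the clean fix you are looking for is precisely this switch to the complementary projector and the lower tail.
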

\begin{proof}
Define a rank-$d/2$ orthogonal projection operator $P\in\mathsf{Psd}(d)$ as $P \eqdef \mathds{1}-Q_{d/2}$. A straightforward consequence of Lemma~\ref{lem:concentration_projector_overlaps} is that
\begin{align}\label{eq:application_of_projector_overlaps}
\Pr_{\bm{U}\sim\text{Haar}}\left[\Tr(P\bm{U}Q_{d/2}\bm{U}^\dag)\leq d/8\right]\leq \mathrm{e}^{-d^2/32}.
\end{align}
This follows by taking $t=1/2$ in the lemma. Using the definition of $\rho_{\eps,U}$ we have
\begin{align*}
\rho_{\eps,U}-\rho_{\eps,\mathds{1}} = \frac{2\eps}{d}\left(UQ_{d/2}U^\dag - Q_{d/2}\right)
\end{align*}
for any $U\in\mathsf{U}(d)$. We also have
\begin{align*}
\Tr(PUQU^\dag) &= \frac{1}{2}\left[\Tr(PUQ_{d/2}U^\dag)+\Tr((\mathds{1}-Q_{d/2})UQ_{d/2}U^\dag)\right] & \text{(by the definition of $P$)}\\
&=\frac{1}{2}\Tr\left((UQ_{d/2}U^\dag-Q_{d/2})(P-Q_{d/2})\right) & \text{($P,Q_{d/2}$ are orthogonal)}\\
&\leq \frac{1}{2}\norm{UQ_{d/2}U^\dag - Q_{d/2}}_1 = \frac{d}{4\eps}\norm{\rho_{\eps,U}-\rho_{\eps,\mathds{1}}}_1
\end{align*}
where the final line follows from the property that $\norm{X}_1 = \max\{|\Tr(XU)|:U\in\mathsf{U}(d)\}$ for any square operator $X\in\mathsf{L}(d)$, and $P-Q_{d/2}\in\mathsf{U}(d)$. Therefore, if $\norm{\rho_{\eps,U}-\rho_{\eps,\mathds{1}}}_1\leq \eps/2$ for a unitary operator $U\in\mathsf{U}(d)$, we also have that $\Tr(PUQ_{d/2}U^\dag)\leq d/8$, from which we may conclude
\begin{align*}
\Pr_{\bm{U}\sim\text{Haar}}\left[\norm{\rho_{\eps,\bm{U}}-\rho_{\eps,\mathds{1}}}_1\leq \eps/2\right] \leq \mathrm{e}^{-d^2/32}
\end{align*}
by Eq.~\eqref{eq:application_of_projector_overlaps}. Next, consider the unitary operator $U_i$ and corresponding state $\rho_i$ in the lemma, for some $i\in [N]$. Using the invariance of the trace distance under unitary transformations, we have
\begin{align*}
\norm{\rho_{\eps,U_i\bm{U}}-\rho_i}_1 = \norm{U_iUQ_{d/2}U^\dag U_i^\dag - U_iQ_{d/2}U_i^\dag}_1 = \norm{U_i(UQ_{d/2}U^\dag - Q_{d/2})U_i^\dag}_1 = \norm{\rho_{\eps,\bm{U}}-\rho_{\eps,\mathds{1}}}_1
\end{align*}
which leads to the conclusion that
\begin{align}\label{eqn:packing_upper_bound}
\Pr_{\bm{U}\sim\text{Haar}}\left[\norm{\rho_{\eps, \bm{U}}-\rho_i}_1\leq \eps/2\right] \leq \mathrm{e}^{-d^2/32}
\end{align}
by invariance of the Haar measure. Since this inequality holds for any index $i\in [N]$ the proof is complete upon applying the union bound over the events $\norm{\rho_{\eps,\bm{U}}-\rho_i}_1 \leq \eps/2$, $i\in [N]$: we have that this probability is at most $N\mathrm{e}^{-d^2/32}\leq \xi$.
\end{proof}
Using Lemma~\ref{lem:packing_construction} we may construct a (non-explicit) set of~$N$ states with~$N\in\exp(\Omega(d^2))$, which form an $\eps/2$-packing in trace distance, using a probabilistic existence argument.
\begin{corollary}\label{cor:packing_cor}
Fix an $\eps\in(0,1)$ and a positive integer $d>1$. There exists an $\eps/2$-packing $\mathcal{S}\subset \mathsf{D}(d)$ of ${N\in \exp\left(\Omega(d^2)\right)}$ quantum states of the form in Eq.~\eqref{eqn:parametrization}.
\end{corollary}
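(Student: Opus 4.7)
The plan is to apply the probabilistic method iteratively, treating Lemma~\ref{lem:packing_construction} as an ``extension'' lemma that lets us grow any existing packing of size less than~$\lfloor \xi \mathrm{e}^{d^2/32} \rfloor$ by one more element.

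Concretely, I would fix $\xi = 1/2$ and set $N \eqdef \lfloor \mathrm{e}^{d^2/32}/2 \rfloor$, which is clearly in $\exp(\Omega(d^2))$ once $d$ is large enough (for small $d$ the statement is trivial since one can take a single state). I would then construct the packing greedily: start with $\rho_1 \eqdef \rho_{\eps,\mathds{1}}$, which trivially forms an $\eps/2$-packing of size one. Suppose inductively that we have already produced unitaries $U_1,\ldots,U_k$ with $k < N$ such that the states $\rho_i \eqdef \rho_{\eps,U_i}$ satisfy $\norm{\rho_i - \rho_j}_1 > \eps/2$ for all distinct $i,j \in [k]$. Apply Lemma~\ref{lem:packing_construction} to this collection: since $k < N \leq \lfloor \xi \mathrm{e}^{d^2/32} \rfloor$, a Haar-random $\bm{U}$ satisfies $\norm{\rho_{\eps,\bm{U}} - \rho_i}_1 \leq \eps/2$ for some $i \in [k]$ with probability at most $\xi = 1/2$. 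Hence with probability at least $1/2 > 0$ the state $\rho_{\eps,\bm{U}}$ is $\eps/2$-far in trace distance from every $\rho_i$, so in particular there exists a deterministic choice $U_{k+1}$ witnessing this event. Appending $\rho_{k+1} \eqdef \rho_{\eps,U_{k+1}}$ extends the packing to size $k+1$.

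Iterating the extension step exactly $N-1$ times produces an $\eps/2$-packing $\mathcal{S} = \{\rho_1,\ldots,\rho_N\}$ consisting of states of the form in Eq.~\eqref{eqn:parametrization}, with $|\mathcal{S}| = N \in \exp(\Omega(d^2))$, as required. No step is really an obstacle here — the heavy lifting (the concentration estimate via Lemma~\ref{lem:concentration_projector_overlaps} and the union bound) has already been absorbed into Lemma~\ref{lem:packing_construction}. The only mild subtlety is picking $\xi$ bounded away from~$1$ so that the bad event has probability strictly less than one throughout the induction; any $\xi \in (0,1)$ works, and $\xi = 1/2$ gives a clean constant in the exponent.
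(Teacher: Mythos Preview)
Your proposal is correct and follows essentially the same approach as the paper: a greedy inductive construction using Lemma~\ref{lem:packing_construction} as an extension step, invoking the probabilistic method at each stage. The only cosmetic difference is that you fix~$\xi = 1/2$ explicitly, whereas the paper simply observes that the bad-event probability is strictly less than one (effectively taking~$\xi$ close to~$1$, which yields a slightly larger constant in the final~$N$).
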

\begin{proof}
First, suppose we have a set of states $\mathcal{S}_k=\{\rho_1,\dots,\rho_k\}\subset \mathsf{D}(d)$ which are of the same form as in Eq.~\eqref{eqn:parametrization}, where $k\leq \lceil\mathrm{e}^{d^2/32}\rceil - 1$. Suppose further that this set is an $\eps/2$-packing. From Lemma~\ref{lem:packing_construction} we know that the probability of choosing a unitary operator $\bm{U}\in\mathsf{U}(d)$ Haar randomly such that $\mathcal{S}_k \cup \{\rho_{\eps,\bm{U}}\}$ is \textit{not} an $\eps/2$-packing is strictly less than one. Therefore, there  exists at least one state which we can add to the packing. The result follows by induction on $k$.
\end{proof}
This packing of states is used in the following section to prove lower bounds for nonadaptive tomography.  Then, in Section~\ref{sec:adaptive_lower_bounds} we alter this construction to derive lower bounds on adaptive tomography.

\section{Lower bounds for tomography with nonadaptive measurements}\label{sec:nonadaptive_lower_bounds}

\subsection{Information in measurement outcomes}
We begin with some useful results quantifying our intuition that measurements performed on states in the packing described above are uninformative. Recall that in the nonadaptive case, measurement choices do not depend on the previously observed outcomes. The following lemma enables us to bound mutual information in terms of the $\chi^2$-divergence, which is more amenable to analysis in this context.
\begin{lemma}\label{lem:mut_inf_ub}
Let $\bm{x}$ be an arbitrary random variable and $\bm{y}\in \mathcal{Y}$ be a discrete random variable for some sample space $\mathcal{Y}$. Denote by $p_{\bm{y}|x}:\mathcal{Y}\to [0,1]$ the distribution of $\bm{y}$ conditioned on the event~$\bm{x} = x$. For an arbitrary discrete distribution $q:\mathcal{Y}\to [0,1]$, it holds that
\begin{align}
    I(\bm{x}:\bm{y}) \leq \frac{1}{\ln(2)}\expect{\bm{x}\sim p_{\bm{x}}}\mathrm{D}_{\chi^2}\infdivx{p_{\bm{y}|\bm{x}}}{q}.\label{eq:eqn_605}
\end{align}
\end{lemma}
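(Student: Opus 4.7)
The plan is to interpolate between the two divergences via an arbitrary reference distribution, using a standard identity sometimes called the ``golden formula'' for KL divergence to introduce~$q$ in place of the true marginal~$p_{\bm{y}}$, and then apply Lemma~\ref{lem:chi_squared_vs_kl_inequality} pointwise to replace KL by~$\chi^2$. The key observation is that the marginal~$p_{\bm{y}}$ enjoys a variational characterization: for any reference distribution~$q$ on~$\mathcal{Y}$, the averaged KL divergence $\expct_{\bm{x}} \mathrm{D}_{\textnormal{KL}}\infdivx{p_{\bm{y}|\bm{x}}}{q}$ is minimized over~$q$ precisely by~$q = p_{\bm{y}}$, and the minimum value equals $I(\bm{x}:\bm{y})$.

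First I would verify the identity
\begin{align*}
\expct_{\bm{x} \sim p_{\bm{x}}} \mathrm{D}_{\textnormal{KL}}\infdivx{p_{\bm{y}|\bm{x}}}{q} \;=\; I(\bm{x}:\bm{y}) + \mathrm{D}_{\textnormal{KL}}\infdivx{p_{\bm{y}}}{q}
\end{align*}
by expanding the left-hand side as $\expct_{\bm{x}} \sum_{y} p_{\bm{y}|\bm{x}}(y) \log\bigl(p_{\bm{y}|\bm{x}}(y)/q(y)\bigr)$, splitting the log as $\log(p_{\bm{y}|\bm{x}}(y)/p_{\bm{y}}(y)) + \log(p_{\bm{y}}(y)/q(y))$, and using the fact that $\expct_{\bm{x}} p_{\bm{y}|\bm{x}}(y) = p_{\bm{y}}(y)$ to collapse the second term into $\mathrm{D}_{\textnormal{KL}}\infdivx{p_{\bm{y}}}{q}$. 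Since $\mathrm{D}_{\textnormal{KL}}\infdivx{p_{\bm{y}}}{q} \geq 0$, this immediately yields
\begin{align*}
I(\bm{x}:\bm{y}) \;\leq\; \expct_{\bm{x} \sim p_{\bm{x}}} \mathrm{D}_{\textnormal{KL}}\infdivx{p_{\bm{y}|\bm{x}}}{q}
\end{align*}
for every~$q$.

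Next, I would apply Lemma~\ref{lem:chi_squared_vs_kl_inequality} pointwise, i.e.\ to the distributions $p_{\bm{y}|\bm{x}=x}$ and $q$ for each fixed value of~$x$, obtaining $\mathrm{D}_{\textnormal{KL}}\infdivx{p_{\bm{y}|x}}{q} \leq \frac{1}{\ln 2}\, \mathrm{D}_{\chi^2}\infdivx{p_{\bm{y}|x}}{q}$. Taking expectation over $\bm{x} \sim p_{\bm{x}}$ and chaining with the previous inequality delivers~\eqref{eq:eqn_605}.

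There is no real obstacle here: the argument is two short steps, the first being an algebraic identity and the second a termwise application of a result already in the paper. The only mild subtlety worth flagging in the write-up is the case where $\mathrm{D}_{\textnormal{KL}}\infdivx{p_{\bm{y}|x}}{q}$ is infinite for some~$x$ (i.e.\ the support of $p_{\bm{y}|x}$ is not contained in that of~$q$); in that case $\mathrm{D}_{\chi^2}\infdivx{p_{\bm{y}|x}}{q}$ is also infinite by the convention in Definition~\ref{def:chi_squared} extended analogously, so the bound holds trivially and the identity used above can be restricted to $x$ for which both sides are finite without loss.
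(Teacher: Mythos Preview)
Your proposal is correct and follows essentially the same route as the paper: both establish the identity $\expct_{\bm{x}} \mathrm{D}_{\textnormal{KL}}\infdivx{p_{\bm{y}|\bm{x}}}{q} = I(\bm{x}:\bm{y}) + \mathrm{D}_{\textnormal{KL}}\infdivx{p_{\bm{y}}}{q}$ (the paper via entropies, you by splitting the log, which amounts to the same computation), use nonnegativity of KL, and then apply Lemma~\ref{lem:chi_squared_vs_kl_inequality} pointwise.
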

\begin{proof}
By Lemma~\ref{lem:chi_squared_vs_kl_inequality} we have the inequality $\mathrm{D}_{\textnormal{KL}}\infdivx{a}{b}\leq \mathrm{D}_{\chi^2}\infdivx{a}{b}/\ln(2)$ for any two discrete distributions $a$ and $b$ defined on the same sample space. This implies the relation in Eq.~\eqref{eq:eqn_605} upon showing that 
\begin{align}\label{eq:kl_ineq}
I(\bm{x}:\bm{y}) =  \expect{\bm{x}\sim p_{\bm{x}}}\mathrm{D}_{\textnormal{KL}}\infdivx{p_{\bm{y}|\bm{x}}}{p_{\bm{y}}}\leq \expect{\bm{x}\sim p_{\bm{x}}}\mathrm{D}_{\textnormal{KL}}\infdivx{p_{\bm{y}|\bm{x}}}{q}.
\end{align}
This inequality is a special case of Lemma~6 in Ref.~\cite{buscemi2010}, but for completeness we include a proof below. Using the definition of KL-divergence, we have
\begin{align*}
\expect{\bm{x}\sim p_{\bm{x}}}\mathrm{D}_{\textnormal{KL}}\infdivx{p_{\bm{y}|\bm{x}}}{q} &= \expect{\bm{x}\sim p_{\bm{x}}}\sum_{y\in\mathcal{Y}} p_{\bm{y}|\bm{x}}(y)\log\left(\frac{p_{\bm{y}|\bm{x}}(y)}{q(y)}\right)\\
&= \sum_{y\in\mathcal{Y}}p_{\bm{y}}(y)\log\left(\frac{1}{q(y)}\right)-H(\bm{y}|\bm{x})\\
&= \mathrm{D}_{\textnormal{KL}}\infdivx{p_{\bm{y}}}{q} + H(\bm{y}) - H(\bm{y}|\bm{x})\\
&= \mathrm{D}_{\textnormal{KL}}\infdivx{p_{\bm{y}}}{q}+I(\bm{x}:\bm{y})
\end{align*}
which proves the inequality in Eq.~\eqref{eq:kl_ineq} since $\mathrm{D}_{\textnormal{KL}}\infdivx{p_{\bm{y}}}{q}\geq 0$.
\end{proof}
\begin{corollary}\label{cor:mut_inf_ub}
Define $\bm{x},\bm{y}$ as in Lemma~\ref{lem:mut_inf_ub}. It holds that
\begin{align}\label{eq:mut_inf_ub_cor}
I(\bm{x}:\bm{y}) \leq \frac{1}{\ln(2)} \expect{\bm{x}\sim p_{\bm{x}}}\mathrm{D}_{\chi^2}\infdivx{p_{\bm{y}|\bm{x}}}{p_{\bm{y}}} = \frac{1}{\ln(2)}\left(\sum_{y\in\mathcal{Y}}\expect{\bm{x}\sim p_{\bm{x}}}\frac{p_{\bm{y}|\bm{x}}(y)^2}{p_{\bm{y}}(y)}-1\right).
\end{align}
\end{corollary}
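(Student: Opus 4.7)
The corollary is essentially an immediate specialization of Lemma~\ref{lem:mut_inf_ub}, so my plan is quite short. First, I would instantiate Lemma~\ref{lem:mut_inf_ub} with the choice $q = p_{\bm{y}}$, the marginal distribution of~$\bm{y}$. Since $p_{\bm{y}}$ is a valid discrete distribution on~$\mathcal{Y}$ (assuming $\bm{y}$ is discrete, as in the hypothesis), this is a legal choice of~$q$, and the lemma directly yields the first inequality
\[
    I(\bm{x}:\bm{y}) \;\leq\; \frac{1}{\ln(2)}\,\expect{\bm{x}\sim p_{\bm{x}}}\mathrm{D}_{\chi^2}\infdivx{p_{\bm{y}|\bm{x}}}{p_{\bm{y}}}.
\]

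Next, I would unpack the $\chi^2$-divergence using the second form given in Definition~\ref{def:chi_squared}, namely $\mathrm{D}_{\chi^2}\infdivx{p}{q} = \sum_{y} p(y)^2 / q(y) - 1$. Substituting this into the bound and pulling the (finite or countable) sum outside the expectation (justified by nonnegativity of the summand and Tonelli's theorem, or simply by linearity when $\mathcal{Y}$ is finite) gives
\[
    \expect{\bm{x}\sim p_{\bm{x}}}\mathrm{D}_{\chi^2}\infdivx{p_{\bm{y}|\bm{x}}}{p_{\bm{y}}}
    \;=\; \sum_{y\in\mathcal{Y}} \expect{\bm{x}\sim p_{\bm{x}}} \frac{p_{\bm{y}|\bm{x}}(y)^2}{p_{\bm{y}}(y)} \;-\; 1,
\]
which matches the claimed equality. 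There is no real obstacle here; the only mild subtlety is the swap of expectation and sum, but since all terms are nonnegative this is unproblematic.

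In short, the whole proof is: apply Lemma~\ref{lem:mut_inf_ub} with $q = p_{\bm{y}}$, then rewrite the $\chi^2$-divergence using its explicit definition and exchange the order of summation and expectation. No new ideas beyond the preceding lemma are required.
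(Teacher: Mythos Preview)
Your proposal is correct and matches the paper's intended argument: the corollary is stated without proof in the paper precisely because it is obtained by taking $q = p_{\bm{y}}$ in Lemma~\ref{lem:mut_inf_ub} and expanding Definition~\ref{def:chi_squared}, exactly as you describe.
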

In the analysis of state tomography, $\bm{x}$ corresponds to a random state from a suitably chosen ensemble. Although these results could be applied directly to the information contained in each measurement about $\bm{x}$, it would be intractable to compute an expectation over $\bm{x}$ since we do not explicitly know the states in our ensemble, whose existence is argued by means of the probabilistic method. Fortunately, we can make use of an intermediate result to effectively replace that ensemble with one which admits such explicit calculations, as explained in the following proposition. (A similar property is also used in the proof of Lemma~10 in Ref.~\cite{Haah_2017}.)
\begin{proposition}\label{prop:suffices_to_consider_haar_mut_inf}
Fix an $\eps\in(0,1)$ and a positive integer $d>1$. Let $\bm{U}\in\mathsf{U}(d)$ be a Haar-random unitary operator and $\bm{z}$ be the outcome obtained upon measuring the random state $\rho_{\eps,\bm{U}}^{\otimes n}\in\mathsf{D}(d^n)$ with the measurement $\mathcal{M}\in \Xi(d^n)$, where $\rho_{\eps,U}\in \mathsf{D}(d)$ is defined as in Eq.~\eqref{eqn:parametrization} for any $U\in\mathsf{U}(d)$. There exists a set of $N\in\exp(\Omega(d^2))$ quantum states $\mathcal{S}  =\{ \rho_1,\dots,\rho_N\}
\subset \mathsf{D}(d)$ of the form in Lemma~\ref{lem:packing_construction} which is an $\eps/2$-packing and which satisfies
\begin{align*}
I(\bm{x}:\bm{y}) \leq I(\bm{U}:\bm{z})
\end{align*}
where $\bm{x}\sim\textnormal{Unif}([N])$ and $\bm{y}$ is the outcome obtained from measuring the random state $\rho_{\bm{x}}^{\otimes n}$ with $\mathcal{M}$.
\end{proposition}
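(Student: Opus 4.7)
The plan is to prove the proposition by a probabilistic existence argument. I would draw $\bm{U}_1, \ldots, \bm{U}_N \in \mathsf{U}(d)$ independently from the Haar measure and show that, with positive probability, the tuple $(\bm{U}_1, \ldots, \bm{U}_N)$ simultaneously (i) gives rise to an $\eps/2$-packing $\{\rho_{\eps, \bm{U}_i}\}_{i=1}^N$, and (ii) induces mutual information at most $I(\bm{U} : \bm{z})$. Here $N$ will be taken of order $\mathrm{e}^{c d^2}$ for a sufficiently small absolute constant $c > 0$, so that $N \in \exp(\Omega(d^2))$ as required.

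The key information-theoretic step is to bound the expected mutual information via concavity of the Shannon entropy. For a fixed tuple $(U_1, \ldots, U_N)$, write $p^{(U)}$ for the PMF of the outcome when $\mathcal{M}$ is applied to $\rho_{\eps, U}^{\otimes n}$, and let $I_{(U_i)}(\bm{x} : \bm{y})$ denote the resulting mutual information in the ensemble where $\bm{x} \sim \textnormal{Unif}([N])$ and $\bm{y}$ is the outcome of measuring $\rho_{\eps, U_{\bm{x}}}^{\otimes n}$ with $\mathcal{M}$. Expanding,
\begin{align*}
I_{(U_i)}(\bm{x} : \bm{y}) \;=\; H\!\left(\frac{1}{N}\sum_{i=1}^N p^{(U_i)}\right) \;-\; \frac{1}{N}\sum_{i=1}^N H(p^{(U_i)}).
\end{align*}
Taking expectation over $(\bm{U}_i)$ i.i.d.\ Haar and applying Jensen's inequality to the first term---using $\mathbb{E}_{(\bm{U}_i)}\, \frac{1}{N}\sum_i p^{(\bm{U}_i)} = p_{\bm{z}}$, the marginal distribution of $\bm{z}$, by linearity---bounds it by $H(p_{\bm{z}})$. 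The expectation of the second term equals $\mathbb{E}_{\bm{U} \sim \textnormal{Haar}}\, H(p^{(\bm{U})}) = H(\bm{z} \mid \bm{U})$. Subtracting yields $\mathbb{E}_{(\bm{U}_i)}\, I_{(\bm{U}_i)}(\bm{x} : \bm{y}) \leq I(\bm{U} : \bm{z})$.

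To enforce the packing condition, I would apply Lemma~\ref{lem:packing_construction} to each of the $\binom{N}{2}$ pairs and take a union bound: the probability that $\{\rho_{\eps, \bm{U}_i}\}_{i=1}^N$ fails to be an $\eps/2$-packing is at most $\binom{N}{2}\mathrm{e}^{-d^2/32}$. Choosing $N$ of order $\mathrm{e}^{d^2/64}$ with a small enough implicit constant makes this probability bounded away from $1$. Combining with Markov's inequality applied to $I_{(\bm{U}_i)}(\bm{x} : \bm{y})$ (whose expectation is at most $I(\bm{U}:\bm{z})$), the two good events hold simultaneously with strictly positive probability, producing the desired packing.

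The main obstacle is that the Markov step naturally loses an absolute constant: the argument as sketched gives $I(\bm{x} : \bm{y}) \leq C\, I(\bm{U} : \bm{z})$ for some $C > 1$, which must be absorbed into the implicit constants of the subsequent asymptotic lower bounds. Removing the constant entirely would require either a concentration inequality for $I_{(\bm{U}_i)}(\bm{x} : \bm{y})$ around its expectation, or a more delicate coupling between the packing construction and the Haar ensemble that avoids conditioning on a lossy event.
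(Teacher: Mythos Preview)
Your information-theoretic step---expanding $I_{(U_i)}(\bm{x}:\bm{y})$ and applying concavity of entropy---is correct and is exactly the computation the paper performs. The gap you identify at the end is real, however, and the paper does \emph{not} resolve it by absorbing a constant; it avoids the issue entirely by a different decomposition.

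The paper's key idea is to decouple the packing requirement from the mutual-information requirement. Rather than drawing $\bm{U}_1,\dots,\bm{U}_N$ i.i.d.\ Haar, it first \emph{fixes} an $\eps/2$-packing $\{\rho_{\eps,U_1},\dots,\rho_{\eps,U_N}\}$ with $N\in\exp(\Omega(d^2))$, which exists deterministically by Corollary~\ref{cor:packing_cor}. It then randomizes over a \emph{single} global rotation $\bm{W}\sim\textnormal{Haar}$, considering the ensemble $\{\rho_{\eps,\bm{W}U_i}\}_{i=1}^N$. Because trace distance is unitarily invariant, this rotated set is an $\eps/2$-packing \emph{for every realization of} $\bm{W}$---the packing condition is no longer a random event at all. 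With the packing guaranteed deterministically, only the mutual-information bound remains probabilistic: your Jensen argument, now applied to the average over $\bm{W}$ (using right-invariance of Haar so that $\bm{W}U_i$ is Haar-distributed for each fixed $i$), gives
\[
\expect{\bm{W}\sim\textnormal{Haar}} I(\bm{x}:\bm{y}_{\bm{W}}) \;\le\; I(\bm{U}:\bm{z}),
\]
and hence some $V\in\mathsf{U}(d)$ achieves $I(\bm{x}:\bm{y}_V)\le I(\bm{U}:\bm{z})$ exactly, with no Markov step and no constant loss.

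So the missing idea in your sketch is precisely this: instead of asking one random draw to satisfy two conditions simultaneously, exploit the unitary invariance of the packing property to make one condition hold for free, and spend the randomness only on the other.
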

\begin{proof}
Consider a fixed set of $N\in \exp(\Omega(d^2))$ quantum states $\mathcal{S}^\prime=\{\rho_1',\dots,\rho_N'\}\subset\mathsf{D}(d)$ of the form in Lemma~\ref{lem:packing_construction} which is an $\eps/2$-packing. We know such a set exists from Corollary~\ref{cor:packing_cor}. Let $\mathcal{U}=\{U_1,\dots,U_N\}$ be the set of unitary operators such that $\rho_{i}' =\rho_{\eps,U_i}$ for each $i\in [N]$. Note that making the replacement $\mathcal{U}\to W\mathcal{U}$ for an arbitrary unitary operator $W\in\mathsf{U}(d)$ results in another $\eps/2$-packing of $N$ states. Indeed, for any $\rho_i',\rho_j' \in \mathcal{S}'$ we have
\begin{align*}
\norm{\rho_i' -\rho_j' }_1 &= \frac{2\eps}{d}{\lVert U_iQ_{d/2} U_i^\dag-U_jQ_{d/2} U_j^\dag\rVert}_1\\
&= \frac{2\eps}{d}{\lVert WU_iQ_{d/2} U_i^\dag W^\dag-WU_jQ_{d/2} U_j^\dag W^\dag\rVert}_1 \\
&= \norm{\rho_{\eps,WU_i}-\rho_{\eps,WU_j}}_1
\end{align*}
by invariance of the trace distance under unitary transformation. Next, define $\bm{y}_W$ to be the outcome obtained by measuring $\rho_{\eps,WU_{\bm{x}}}^{\otimes n}$ with $\mathcal{M}$, and let $\bm{W}\in\mathsf{U}(d)$ be a Haar-random unitary operator chosen independently of $\bm{x}$. We claim that
\begin{align}\label{eq:claim_haar_mut_inf_ub}
\underset{\bm{W}\sim \textnormal{Haar}}{\mathbb{E}}\ I(\bm{x}:\bm{y}_{\bm{W}})\leq I(\bm{U}:\bm{z}).
\end{align}
Let $p_{\bm{y}|W,x}$ to be the distribution of $\bm{y}_W$ given $\bm{x}=x$. We have
\begin{align}
\expect{\bm{W}}\ I(\bm{x}:\bm{y}_{\bm{W}}) &= \expect{\bm{W}}\ H\left(\expect{\bm{x}\sim [N]}\ p_{\bm{y}|W,\bm{x}}\right) - \expect{\bm{W}}\ \expect{\bm{x}\sim [N]}\ H(p_{\bm{y}|W,\bm{x}})\nonumber \\
&\leq H\left(\expect{\bm{W}}\ \expect{\bm{x}\sim [N]}\ p_{\bm{y}|W,\bm{x}}\right) - \expect{\bm{W}}\ \expect{\bm{x}\sim [N]}H(p_{\bm{y}|W,\bm{x}})\nonumber \\
&= H\left(\underset{\bm{x}\sim [N]}{\mathbb{E}}\ \underset{\bm{W}}{\mathbb{E}}\ p_{\bm{y}|\bm{W},\bm{x}}\right)-\underset{\bm{x}\sim [N]}{\mathbb{E}}\ \underset{\bm{W}}{\mathbb{E}}\ H(p_{\bm{y}|\bm{W},\bm{x}}) \enspace, \label{eq:haar_ensemble_mut_inf_ineq}
\end{align}
where the first line follows from the definition of mutual information, the second line uses the concavity of entropy, and in the final line we make use of the independence of $\bm{x}$ and random unitary operator $\bm{W}$. Furthermore, by right-invariance of the Haar measure we have
\begin{align}
\expect{\bm{W}}\ p_{\bm{y}|\bm{W},x} &= \expect{\bm{W}}\ \textnormal{diag}\left(\mathcal{M}(\rho_{\eps,\bm{W}U_x}^{\otimes n})\right)\nonumber\\
&=\expect{\bm{W}}\ \textnormal{diag}\left(\mathcal{M}(\rho_{\eps,\bm{W}}^{\otimes n})\right)\nonumber\\
&= p_{\bm{z}}.\label{eq:haar_invariance_first_term}
\end{align}
Similarly, we have for any $x\in[N]$ that
\begin{align}\label{eq:haar_invariance_second_term}
\expect{\bm{W}}\ H(p_{\bm{y}|\bm{W},x}) = \expect{\bm{U}}\ H(p_{\bm{z}}).
\end{align}
By substituting Eqs.~\eqref{eq:haar_invariance_first_term} and~\eqref{eq:haar_invariance_second_term} into Eq.~\eqref{eq:haar_ensemble_mut_inf_ineq} we arrive at the inequality in Eq.~\eqref{eq:claim_haar_mut_inf_ub}. We may once again invoke a probabilistic existence argument: since the expectation of $I(\bm{x}:\bm{y}_{\bm{W}})$ over unitary operators $\bm{W}$ is at most $I(\bm{U}:\bm{z})$, there exists at least one unitary operator $V\in\mathsf{U}(d)$ for which the inequality $I(\bm{x}:\bm{y}_V)\leq I(\bm{U}:\bm{z})$ holds. The proposition follows by considering the set of quantum states $\mathcal{S}\eqdef\{\rho_{\eps,VU_1},\rho_{\eps,VU_2},\dots,\rho_{\eps,VU_N}\}$.
\end{proof}
Note that in this proposition the measurements performed on the product state can be arbitrary. 

\subsection{Lower bounds for nonadaptive measurements}
\label{sec:nonadaptive_lb_chi_squared}

In light of Proposition~\ref{prop:suffices_to_consider_haar_mut_inf}, in order to prove limitations of algorithms for tomography,
it suffices to bound quantities of the form $I(\bm{U}:\bm{z})$ for Haar-random $\bm{U}\in\mathsf{U}(d)$ and measurement outcome $\bm{z}$. To this end, it is helpful to establish the following relations based on Haar integration.
\begin{lemma}\label{lem:haar_expecs_arb_povms}
Fix an $\eps\in(0,1)$ and a positive integer $d>1$. Let $\bm{U}\in\mathsf{U}(d)$ be a Haar-random unitary operator, $M\in\mathsf{Psd}(d)$ be a positive semidefinite operator such that $M\preceq \mathds{1}$, $\rho_{\eps,U}\in\mathsf{D}(d)$ be defined as in Eq.~\eqref{eqn:parametrization} for each $U\in\mathsf{U}(d)$, and $w\eqdef\Tr(M)/d$. It holds that
\begin{align*}
\expect{\bm{U}}\ \Tr\left(M\rho_{\eps, \bm{U}}\right) = w,
\end{align*}
and
\begin{align*}
\expect{\bm{U}}\  \left( \Tr\left(M\rho_{\eps, \bm{U}}\right)\right)^2\leq w^2\left(1 + \frac{\eps^2}{d+1}\cdot\min\left\{1,\frac{1}{w(d-1)}\right\}\right).
\end{align*}
\end{lemma}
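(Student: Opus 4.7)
The plan is to decompose $\rho_{\eps,U}$ as an affine combination of $UQ_{d/2}U^\dag$ and the maximally mixed state, so that $\Tr(M\rho_{\eps,\bm U}) = AX + Bdw$ where $A = 2\eps/d$, $B = (1-\eps)/d$ and $X \eqdef \Tr(M\bm U Q_{d/2}\bm U^\dag)$. The first identity then reduces to the first-moment Haar integral $\expct \bm U Q_{d/2}\bm U^\dag = (\Tr(Q_{d/2})/d)\id = \id/2$, which immediately gives $\expct\Tr(M\rho_{\eps,\bm U}) = A\cdot(dw/2) + Bdw = \eps w + (1-\eps)w = w$.

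For the second moment, I would expand
\begin{equation*}
\expct(\Tr(M\rho_{\eps,\bm U}))^2 = A^2\expct X^2 + 2ABdw\cdot\expct X + B^2 d^2 w^2,
\end{equation*}
and compute $\expct X^2$ by rewriting $X^2 = \Tr\bigl((M\tensor M)\bm U^{\tensor 2}(Q_{d/2}\tensor Q_{d/2})(\bm U^\dag)^{\tensor 2}\bigr)$. The standard two-copy Haar twirl gives $\expct \bm U^{\tensor 2} Y(\bm U^\dag)^{\tensor 2} = c_1(Y)\id + c_2(Y)W$, with $c_1,c_2$ determined by taking traces against $\id$ and $W$. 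Applied to $Y = Q_{d/2}\tensor Q_{d/2}$ (for which $\Tr Y = d^2/4$ and $\Tr(WY) = \Tr(Q_{d/2}) = d/2$), pairing against $M\tensor M$ (for which $\Tr(M\tensor M) = d^2 w^2$ and $\Tr((M\tensor M)W) = \|M\|_{\mathrm F}^2$) yields
\begin{equation*}
\expct X^2 = \frac{(d^2-2)d^2 w^2}{4(d^2-1)} + \frac{d\|M\|_{\mathrm F}^2}{4(d^2-1)}.
\end{equation*}
Substituting back and simplifying, the cross term and $B^2 d^2 w^2$ collapse to $(1-\eps^2)w^2$, and the remaining terms combine into
\begin{equation*}
\expct\bigl(\Tr(M\rho_{\eps,\bm U})\bigr)^2 = w^2 + \frac{\eps^2}{d^2-1}\cdot\frac{\|M\|_{\mathrm F}^2 - dw^2}{d}.
\end{equation*}

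The final step — and the one that explains the $\min$ in the stated bound — is to use \emph{two} bounds on $\|M\|_{\mathrm F}^2$ simultaneously: since $M$ is positive semidefinite, $\|M\|_{\mathrm F}^2 = \sum_i \lambda_i^2 \leq \bigl(\sum_i \lambda_i\bigr)^2 = \Tr(M)^2 = d^2 w^2$, and since additionally $M \preceq \id$, we have $\|M\|_{\mathrm F}^2 = \Tr(M^2) \leq \Tr(M)\cdot\|M\| \leq dw$. The first bound gives $\|M\|_{\mathrm F}^2 - dw^2 \leq d(d-1)w^2$, while the second gives $\|M\|_{\mathrm F}^2 - dw^2 \leq dw$ (using $dw^2 \geq 0$); combining them,
\begin{equation*}
\frac{\|M\|_{\mathrm F}^2 - dw^2}{d(d^2-1)} \leq \frac{w^2}{d+1}\cdot\min\!\left\{1,\ \frac{1}{w(d-1)}\right\},
\end{equation*}
which is exactly the inequality we need.

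\paragraph{Main obstacle.} The computations are routine modulo the two-copy Haar twirl, so the only real technical step is the observation that $M\preceq\id$ yields the \emph{second}, tighter bound $\|M\|_{\mathrm F}^2\leq dw$ on the Frobenius norm in the regime of small $w$. This is the ingredient that produces the $1/(w(d-1))$ improvement inside the minimum and ultimately the extra factor of $d$ in the constant-outcome lower bounds; without it, one only recovers the $\Omega(d^3/\eps^2)$ bound of Haah \emph{et al.}
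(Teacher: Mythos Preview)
Your proposal is correct and follows essentially the same approach as the paper's own proof: both compute the first and second moments via the one- and two-copy Haar twirls of~$Q_{d/2}$ (the paper's Propositions~\ref{prop:haar_expecs} and~\ref{prop:haar_expecs_2}), arrive at the exact identity $w^2 + \eps^2(\Tr(M^2)-dw^2)/(d(d^2-1))$, and then bound $\Tr(M^2)$ by both $(\Tr M)^2$ and $\Tr M$ to obtain the minimum. The only differences are cosmetic (you write $\|M\|_{\mathrm F}^2$ for $\Tr(M^2)$ and carry explicit constants $A,B$ through the expansion).
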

\begin{proof}
We defer the calculation of some Haar integrals to Appendix~\ref{sec:haar_integrals}. By the definition of $\rho_{\eps,U}$ in Eq.~\eqref{eqn:parametrization} the first expectation is
\begin{align*}
\expect{\bm{U}\sim\text{Haar}}\Tr\left(M\rho_{\eps,\bm{U}}\right) &= \frac{2\eps}{d}\expect{\bm{U}\sim\text{Haar}}\Tr\left(M\bm{U}Q_{d/2}\bm{U}^\dag\right)+(1-\eps)w.
\end{align*}
Recall that $Q_{d/2}\in\mathsf{Psd}(d)$ is a rank-$d/2$ orthogonal projection operator. By Proposition~\ref{prop:haar_expecs} in Appendix~\ref{sec:haar_integrals} and the linearity of trace we have
\begin{align*}
\expect{\bm{U}\sim\text{Haar}}\Tr\left(M\bm{U}Q_{d/2}\bm{U}^\dag\right) = \frac{\Tr(M)}{2}.
\end{align*}
This leads to the first identity in the lemma. For the second expectation in the lemma, note that by substituting the definition of $\rho_{\eps,U}$ and expanding we have
\begin{align}
\expect{\bm{U}\sim\text{Haar}}\ \left(\Tr(M\rho_{\eps,\bm{U}})\right)^2 &= \frac{4\eps^2}{d^2}\expect{\bm{U}\sim\text{Haar}}\ \left(\Tr(M\bm{U}Q_{d/2}\bm{U}^\dag)\right)^2 + w^2(1 -\eps^2)\nonumber\\
&=\frac{4\eps^2}{d^2}\Tr\left(M^{\otimes 2}\expect{\bm{U}\sim\text{Haar}}(\bm{U}Q_{d/2}\bm{U}^\dag)^{\otimes 2}\right) + w^2(1 -\eps^2).\label{eq:eqn_654}
\end{align}
The Haar integral on the right-hand side is evaluated explicitly in Proposition~\ref{prop:haar_expecs_2} by setting the rank parameters to $r_1=r_2=d/2$. This yields
\begin{align*}
\expect{\bm{U}\sim\text{Haar}}(\bm{U}Q_{d/2}\bm{U}^\dag)^{\otimes 2} = \frac{1}{4(d^2-1)}\left[(d^2-2)\mathds{1}+dW\right]
\end{align*}
where the identity and swap operator $W$ act on $(\mathbb{C}^{d})^{\otimes 2}$. Substituting into Eq.~\eqref{eq:eqn_654} and making use of the identity $\Tr(W (A\otimes B))=\Tr(AB)$ we find that the right-hand side is equal to
\begin{align}
\frac{\eps^2(d^2-2)(\Tr(M))^2}{d^2(d^2-1)} + \frac{\eps^2\Tr(M^2)}{d(d^2-1)} + w^2(1-\eps^2)&= \frac{\eps^2(d^2-2)w^2}{d^2-1} + \frac{\eps^2\Tr(M^2)}{d(d^2-1)} + w^2(1-\eps^2)\nonumber\\
&= w^2 + \frac{\eps^2(\Tr(M^2)-dw^2)}{d(d^2-1)}.\label{eq:simple_second_moment}
\end{align}
Assume for now that
\begin{align}\label{eq:tr_squared_ineq_meas_op}
\Tr(M^2)-dw^2 \leq \min\{w^2d(d-1),wd\}.
\end{align}
Then the right-hand side of Eq.~\eqref{eq:simple_second_moment} is at most
\begin{align*}
w^2 + \frac{\eps^2}{d(d^2-1)}\cdot\min\{w^2d(d-1),wd\} = w^2\left(1 + \frac{\eps^2}{d+1}\cdot \min\left\{1,\frac{1}{w(d-1)}\right\}\right)
\end{align*}
as required. To prove the inequality in Eq.~\eqref{eq:tr_squared_ineq_meas_op}, we make use of the relations $\Tr(M^2)\leq (\Tr(M))^2 = w^2d^2$ and $\Tr(M^2)\leq \Tr(M)=wd$ both of which follow from the property that $0\preceq M\preceq \mathds{1}$. The second bound of~$wd$ follows  from the nonnegativity of~$dw^2$.
\end{proof}

This leads us to the lower bounds stated in Theorem~\ref{thm:main_nonadaptive_theorem} below. Intuitively, the theorem establishes the following property: for the family of quantum states of the form in Eq.~\eqref{eqn:parametrization}, the ability to distinguish the distribution over outcomes of a measurement from some fixed distribution---as quantified by their $\chi^2$-divergence---is small on average, no matter the measurement performed. In proving this theorem, our analysis is simplified due to Lemma~\ref{lem:mut_inf_ub} as well as techniques for Haar integration based on permutation invariance. (We refer the interested reader to Section~7.2 of Ref.~\cite{watrous2018} for more on this topic.) We also do not assume that the measurement operators which comprise a given POVM are rank-one, as has been considered in other works~\cite{Haah_2017,HKP21-it-bounds,CCHL21-single-copy-measurements}. This allows us to conclude the novel $\Omega(d^4/\eps^2)$ lower bound in the constant-outcome case, in addition to laying the groundwork for the results in Sections~\ref{sec:adaptive_lower_bounds} and~\ref{sec:classical_shadows}.

\begin{theorem}\label{thm:main_nonadaptive_theorem}
Fix an $\eps \in (0,1)$ and a positive integer $d>1$. Let $\rho_{\eps,U}\in\mathsf{D}(d)$ be defined as in Eq.~\eqref{eqn:parametrization}, $\bm{U}\in \mathsf{U}(d)$ be a Haar-random unitary operator, and $\bm{z}$ be the outcome of a measurement $\mathcal{M}\in\Xi(d)$ performed on the random state $\rho_{\eps,\bm{U}}$ such that $p_{\bm{z}|U} = \textnormal{diag}(\mathcal{M}(\rho_{\eps,U}))$ for every $U\in\mathsf{U}(d)$. Then
\begin{align*}
\expect{\bm{U}\sim\textnormal{Haar}}\mathrm{D}_{\chi^2}\infdivx{p_{\bm{z}|\bm{U}}}{p_{\bm{z}}}\leq \frac{\eps^2}{d+1}\cdot \min\left\{1,\frac{\textnormal{rank}(\mathcal{M})}{d-1}\right\}.
\end{align*}
\end{theorem}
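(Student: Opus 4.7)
The plan is to reduce the statement directly to the moment bounds established in Lemma~\ref{lem:haar_expecs_arb_povms}, applied individually to each POVM element of $\mathcal{M}$.

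First, let $(M_z : z \in \mathcal{Z})$ denote the POVM corresponding to the measurement $\mathcal{M}$, and set $w_z \coloneqq \Tr(M_z)/d$. Expanding the $\chi^2$-divergence as in Definition~\ref{def:chi_squared} gives
\begin{align*}
\expect{\bm{U}}\mathrm{D}_{\chi^2}\infdivx{p_{\bm{z}|\bm{U}}}{p_{\bm{z}}} = \sum_{z\in\mathcal{Z}} \frac{\expect{\bm{U}} p_{\bm{z}|\bm{U}}(z)^2}{p_{\bm{z}}(z)} - 1,
\end{align*}
where we used that $p_{\bm{z}}(z)$ does not depend on $\bm{U}$ and may be pulled out of the expectation. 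By the first identity in Lemma~\ref{lem:haar_expecs_arb_povms} applied to $M = M_z$, the marginal probability is $p_{\bm{z}}(z) = \expect{\bm{U}} \Tr(M_z \rho_{\eps,\bm{U}}) = w_z$. Each conditional satisfies $p_{\bm{z}|U}(z) = \Tr(M_z \rho_{\eps,U})$, so the second moment bound of the same lemma gives
\begin{align*}
\frac{\expect{\bm{U}} p_{\bm{z}|\bm{U}}(z)^2}{p_{\bm{z}}(z)} \leq w_z\left(1 + \frac{\eps^2}{d+1}\min\left\{1,\frac{1}{w_z(d-1)}\right\}\right) = w_z + \frac{\eps^2}{d+1}\min\left\{w_z,\frac{1}{d-1}\right\}.
\end{align*}

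Next I would sum these inequalities over $z \in \mathcal{Z}$ and use the completeness relation $\sum_z M_z = \mathds{1}$, which yields $\sum_z w_z = \Tr(\mathds{1})/d = 1$. The residual term is bounded by
\begin{align*}
\sum_{z \in \mathcal{Z}} \min\left\{w_z, \frac{1}{d-1}\right\} \leq \min\left\{\sum_{z\in\mathcal{Z}} w_z,\ \sum_{z\in\mathcal{Z}} \frac{1}{d-1}\right\} = \min\left\{1, \frac{\textnormal{rank}(\mathcal{M})}{d-1}\right\},
\end{align*}
since the minimum of each term is at most the term itself, so the sum is dominated by either bound. Substituting back cancels the $-1$ and yields the desired inequality.

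The whole argument is essentially bookkeeping once Lemma~\ref{lem:haar_expecs_arb_povms} is in hand; the only step requiring a small observation is the final sum, where the key is to resist distributing the $\min$ into the expression and instead keep it intact so that both $\sum_z w_z \leq 1$ and the outcome-count bound $\sum_z 1/(d-1) = \textnormal{rank}(\mathcal{M})/(d-1)$ apply separately. That this allows a simultaneous bound by the minimum of the two is what produces the improved dependence on $\textnormal{rank}(\mathcal{M})$ and underlies the $\Omega(d^4/\eps^2)$ lower bound in the constant-outcome regime.
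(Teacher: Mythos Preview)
Your proposal is correct and follows essentially the same route as the paper: expand the $\chi^2$-divergence, apply Lemma~\ref{lem:haar_expecs_arb_povms} termwise to identify $p_{\bm{z}}(z)=w_z$ and bound the second moment, then sum using $\sum_z w_z=1$ and bound $\sum_z\min\{w_z,1/(d-1)\}$ by the minimum of the two sums. If anything, your treatment of the final summation step is cleaner---the paper writes that last step as an equality when it is in fact the inequality you state.
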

\begin{proof}
Let $\mathcal{Z}$ be an alphabet denoting the set of possible outcomes of the measurement $\mathcal{M}$, such that $z\in \mathcal{Z}$ if and only if $|z\rangle\langle z|\in \text{im}(\mathcal{M})$ for orthonormal $\{\ket{z}\}$. By Definition~\ref{def:chi_squared} we have
\begin{align}\label{eq:mut_inf_term_ub_arb_povms}
\expect{\bm{U}\sim\textnormal{Haar}}\mathrm{D}_{\chi^2}\infdivx{p_{\bm{z}|\bm{U}}}{p_{\bm{z}}} = \sum_{z\in\mathcal{Z}}\expect{\bm{U}\sim \text{Haar}}\frac{p_{\bm{z}|\bm{U}}(z)^2}{p_{\bm{z}}(z)}-1
\end{align}
where for fixed $U\in \mathsf{U}(d)$ the conditional probabilities may be written as $p_{\bm{z}|U}(z)=\Tr(M_{z}\rho_{\eps,U})$ for the POVM $(M_{z})_{z}$ corresponding to the measurement $\mathcal{M}$, and the marginal probabilities in the denominator can be written as $p_{\bm{z}}(z)=\mathbb{E}_{\bm{U}\sim\text{Haar}}\Tr(M_{z}\rho_{\eps,\bm{U}})$. Let $w(z)=\Tr(M_z)/d$ for all~$z\in\mathcal{Z}$. By Lemma~\ref{lem:haar_expecs_arb_povms} the right-hand side of Eq.~\eqref{eq:mut_inf_term_ub_arb_povms} is at most
\begin{align}
\sum_{z\in\mathcal{Z}}  w(z)\left(1 + \frac{\eps^2}{d+1}\cdot \min\left\{1,\frac{1}{w(z)(d-1)}\right\}\right) - 1 &= \frac{\eps^2}{d+1}\cdot \min\left\{1,\frac{|\mathcal{Z}|}{d-1}\right\} .
\end{align}
Since $|\mathcal{Z}|= \text{rank}(\mathcal{M})$, this concludes the proof.
\end{proof}
In the above theorem, the rank of $\mathcal{M}$ may be interpreted as the maximum number of outcomes that can be resolved using the measurements, under the assumption that the learner discards each copy of the state after measuring it.

We now have the tools we need to prove the two lower bounds for the nonadaptive case shown in Table~\ref{tab:tomog_bounds_summary}.  The first is a result originally due to Ref.~\cite{Haah_2017}.

\begin{corollary}[Special case of Theorem 4 in Ref.~\cite{Haah_2017}]\label{cor:independent_lbs}
Let $\eps\in(0,1)$. Any procedure for quantum tomography of $d$-dimensional quantum states that is $\eps/4$-accurate in trace distance using nonadaptive, single-copy measurements requires $n\in\Omega\left(d^3/\eps^2\right)$ samples of the unknown state.
\end{corollary}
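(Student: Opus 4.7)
The plan is to combine the packing construction, the reduction from tomography to state discrimination, Fano's inequality, the subadditivity of mutual information in the nonadaptive setting, and finally the $\chi^2$-divergence bound of Theorem~\ref{thm:main_nonadaptive_theorem}.

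First, I invoke Corollary~\ref{cor:packing_cor} together with Proposition~\ref{prop:suffices_to_consider_haar_mut_inf}. This yields an $\eps/2$-packing $\mathcal{S} = \{\rho_1,\dots,\rho_N\}$ of the form~\eqref{eqn:parametrization} with $N \in \exp(\Omega(d^2))$, and satisfying $I(\bm{x}:\bm{y}) \leq I(\bm{U}:\bm{z})$, where $\bm{x}$ is uniform on $[N]$, $\bm{y}$ is the outcome of the learner's (nonadaptive) measurement applied to $\rho_{\bm{x}}^{\otimes n}$, $\bm{U}\in\mathsf{U}(d)$ is Haar-random, and $\bm{z}$ is the outcome of the same measurement applied to $\rho_{\eps,\bm{U}}^{\otimes n}$.

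Next, I observe that given a hypothetical $\eps/4$-accurate tomography algorithm using $n$ copies, one can discriminate the states in $\mathcal{S}$ with constant success probability: on input $\rho_{\bm{x}}^{\otimes n}$ the learner produces $\hat\rho$ with $\|\hat\rho - \rho_{\bm{x}}\|_1 \leq \eps/4$ with high probability, and by the triangle inequality and the $\eps/2$-packing property, snapping $\hat\rho$ to the nearest element of $\mathcal{S}$ recovers $\bm{x}$. The random variables $\bm{x} \to \bm{y} \to \hat{\bm{x}}$ form a Markov chain, so Corollary~\ref{cor:fanos_ineq} gives $I(\bm{x}:\bm{y}) = \Omega(\log N) = \Omega(d^2)$, and hence $I(\bm{U}:\bm{z}) = \Omega(d^2)$.

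Finally I upper-bound $I(\bm{U}:\bm{z})$. In the nonadaptive model the joint measurement is $\mathcal{M}_1\otimes\dots\otimes\mathcal{M}_n$ with each $\mathcal{M}_i \in \Xi(d)$, so conditioned on $\bm{U}$ the outcomes $\bm{z}_1,\dots,\bm{z}_n$ are mutually independent, each being the outcome of $\mathcal{M}_i$ applied to $\rho_{\eps,\bm{U}}$. By Corollary~\ref{cor:subadditivity_of_mut_inf}, $I(\bm{U}:\bm{z}) \leq \sum_{i=1}^n I(\bm{U}:\bm{z}_i)$. Combining Corollary~\ref{cor:mut_inf_ub} with Theorem~\ref{thm:main_nonadaptive_theorem} (using only the crude bound $\min\{1,\operatorname{rank}(\mathcal{M}_i)/(d-1)\} \leq 1$, which is all one needs for arbitrary POVMs), I get
\begin{equation*}
I(\bm{U}:\bm{z}_i) \;\leq\; \frac{1}{\ln 2}\,\expect{\bm{U}\sim\textnormal{Haar}} \mathrm{D}_{\chi^2}\infdivx{p_{\bm{z}_i|\bm{U}}}{p_{\bm{z}_i}} \;\leq\; \frac{\eps^2}{\ln(2)\,(d+1)}.
\end{equation*}
Summing over $i$ and chaining the inequalities yields $\Omega(d^2) \leq I(\bm{U}:\bm{z}) \leq O(n\eps^2/d)$, i.e., $n \in \Omega(d^3/\eps^2)$, as claimed.

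The only nontrivial step is the reduction from the Haar-random ensemble back to a fixed packing, but this is already handled by Proposition~\ref{prop:suffices_to_consider_haar_mut_inf}; all remaining ingredients assemble routinely. The substantive work has been done in Theorem~\ref{thm:main_nonadaptive_theorem}, and subadditivity of mutual information is exactly what makes the nonadaptive (tensor-product) structure usable here—extending this argument to adaptive measurements (where the chain rule introduces conditioning on prior outcomes) is what Section~\ref{sec:adaptive_lower_bounds} must address.
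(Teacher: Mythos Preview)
Your proposal is correct and follows essentially the same route as the paper's proof: invoke Proposition~\ref{prop:suffices_to_consider_haar_mut_inf} to replace the packing by the Haar ensemble, use conditional independence of the~$\bm{z}_i$ given~$\bm{U}$ to split the mutual information, bound each term via Corollary~\ref{cor:mut_inf_ub} and Theorem~\ref{thm:main_nonadaptive_theorem}, and close with Fano's inequality. The only cosmetic difference is that you cite Corollary~\ref{cor:subadditivity_of_mut_inf} directly, whereas the paper spells out the chain-rule-plus-monotonicity computation that underlies it.
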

\begin{proof}
Let $\mathcal{M} = \mathcal{M}_1\otimes\dots\otimes \mathcal{M}_n\in \Xi(d^n)$ be the single-copy, nonadaptive measurement which is performed on the $n$ copies of the unknown state to do tomography. By Proposition~\ref{prop:suffices_to_consider_haar_mut_inf}, there exists an $\eps/2$-packing $\mathcal{S}=\{\rho_1,\dots,\rho_N\}\subset\mathsf{D}(d)$ of $N\in\exp(\Omega(d^2))$ quantum states of the form in Lemma~\ref{lem:packing_construction} such that the following holds. Let $\bm{x}\sim\text{Unif}([N])$ and $\bm{y}=(\bm{y}_1,\dots,\bm{y}_n)$ be the outcome of the measurement $\mathcal{M}$ when performed on $n$ copies of the random state $\rho_{\bm{x}}$. Then $I(\bm{x}:\bm{y})\leq I(\bm{U}:\bm{z})$ where $\bm{U}$ and $\bm{z}=(\bm{z}_1,\dots,\bm{z}_n)$ are defined as in the proposition: $\bm{U}\in\mathsf{U}(d)$ is Haar-random, and $\bm{z}_k$ is the measurement outcome obtained by measuring $\rho_{\eps,\bm{U}}$ with $\mathcal{M}_k$, for each $k\in [n]$. Since the random variables $\bm{z}_k$ are independent given $\bm{U}$, 
using the chain rule for mutual information, and monotonicity of entropy under conditioning, we have 
\begin{align*}
I(\bm{U}:\bm{z}) & = \sum_{k=1}^n H(\bm{z}_k|\bm{z}_{<k}) - H(\bm{z}_k|\bm{z}_{<k},\bm{U}) \\
    & =\sum_{k=1}^n H(\bm{z}_k|\bm{z}_{<k}) - H(\bm{z}_k|\bm{U}) \\
    & \leq \sum_{k=1}^n H(\bm{z}_k) - H(\bm{z}_k|\bm{U})\\
    &= \sum_{k=1}^n I(\bm{U}:\bm{z}_k).
\end{align*}
We apply Corollary~\ref{cor:mut_inf_ub} to bound mutual information from above in terms of $\chi^2$-divergence, and then Theorem~\ref{thm:main_nonadaptive_theorem} to each of the terms in this sum to get
\begin{align}
I(\bm{x}:\bm{y})&\leq \frac{n}{\ln(2)}\left(\max_{k\in [n]} \ \expect{\bm{U}\sim\textnormal{Haar}}\mathrm{D}_{\chi^2}\infdivx{p_{\bm{z}_k|\bm{U}}}{p_{\bm{z}_k}}\right)\nonumber\\
&\leq\frac{n\eps^2}{\ln(2)(d+1)}  \min\left\{1,\frac{\max_{k\in [n]}\textnormal{rank}(\mathcal{M}_k)}{d-1}\right\}\label{eq:mut_inf_ub_gen_proof}\\
&\leq \frac{n\eps^2}{\ln(2)(d+1)}.\label{eq:mut_inf_ub_d3_proof}
\end{align}
Under the assumption that the tomography algorithm gives us a state that is accurate to within $\eps/4$ in trace distance, the measurement $\mathcal{M}$ can be used to decode $\bm{x}$ with some constant probability of success. By Fano's inequality as well as the bound in Eq.~\eqref{eq:mut_inf_ub_d3_proof}, it holds that
\begin{align*}
\frac{n\eps^2}{\ln(2)(d+1)}\in\Omega(d^2)
\end{align*}
which is true if and only if $n\in\Omega(d^3/\eps^2)$.
\end{proof}
\begin{corollary}\label{cor:indep_lbs_const_outcome}
Any procedure for quantum tomography of $d$-dimensional quantum states that is $\eps$-accurate in trace distance using nonadaptive, single-copy measurements, each with at most~$\ell$ outcomes, requires $n\in\Omega\left(d^4/\eps^2 \ell \right)$ samples of the unknown state.
\end{corollary}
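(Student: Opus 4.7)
The plan is to follow the template of Corollary~\ref{cor:independent_lbs} essentially verbatim, but to exploit the second argument of the minimum in Theorem~\ref{thm:main_nonadaptive_theorem} that was discarded in passing from Eq.~\eqref{eq:mut_inf_ub_gen_proof} to Eq.~\eqref{eq:mut_inf_ub_d3_proof}. Concretely, let $\mathcal{M}=\mathcal{M}_1\otimes\dots\otimes\mathcal{M}_n$ be the nonadaptive single-copy measurement used by the tomography procedure, where each $\mathcal{M}_k\in\Xi(d)$ has $\textnormal{rank}(\mathcal{M}_k)\leq \ell$. Invoking Proposition~\ref{prop:suffices_to_consider_haar_mut_inf}, I would fix an $\eps/2$-packing $\mathcal{S}=\{\rho_1,\dots,\rho_N\}$ of size $N\in\exp(\Omega(d^2))$ from the class in Eq.~\eqref{eqn:parametrization} such that, with $\bm{x}\sim\textnormal{Unif}([N])$ and $\bm{y}=(\bm{y}_1,\dots,\bm{y}_n)$ the outcomes produced by $\mathcal{M}$ on $\rho_{\bm{x}}^{\otimes n}$, one has $I(\bm{x}:\bm{y})\leq I(\bm{U}:\bm{z})$ for $\bm{U}\in\mathsf{U}(d)$ Haar-random and $\bm{z}_k$ obtained by measuring $\rho_{\eps,\bm{U}}$ with $\mathcal{M}_k$.

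Since the $\bm{z}_k$ are conditionally independent given $\bm{U}$, the same chain-rule/monotonicity argument as in the proof of Corollary~\ref{cor:independent_lbs} gives $I(\bm{U}:\bm{z})\leq \sum_{k=1}^n I(\bm{U}:\bm{z}_k)$. Combining Corollary~\ref{cor:mut_inf_ub} with Theorem~\ref{thm:main_nonadaptive_theorem}, and this time retaining the factor $\textnormal{rank}(\mathcal{M}_k)/(d-1)\leq \ell/(d-1)$, yields
\begin{align*}
I(\bm{x}:\bm{y}) \;\leq\; \frac{n}{\ln(2)}\cdot \frac{\eps^2}{d+1}\cdot \frac{\ell}{d-1} \;\in\; O\!\left(\frac{n\eps^2\ell}{d^2}\right).
\end{align*}

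Finally, I would argue (as in Corollary~\ref{cor:independent_lbs}) that any $\eps$-accurate tomography procedure—up to a harmless rescaling of $\eps$ by a constant so that $\eps$-accurate tomography decodes the $\eps/2$-packing—lets Bob recover $\bm{x}$ with constant success probability, so by Corollary~\ref{cor:fanos_ineq} the mutual information must satisfy $I(\bm{x}:\bm{y})\in\Omega(\log N)=\Omega(d^2)$. Combining with the upper bound above gives $n\eps^2\ell/d^2 \in \Omega(d^2)$, i.e., $n\in\Omega(d^4/(\eps^2\ell))$, as claimed. There is no genuine obstacle here beyond the bookkeeping of the constant-outcome bound that was already set up in Theorem~\ref{thm:main_nonadaptive_theorem}; the only subtlety is making sure to apply the rank bound uniformly over $k$ (via $\max_k \textnormal{rank}(\mathcal{M}_k)\leq \ell$) before summing, which is exactly the step that Eq.~\eqref{eq:mut_inf_ub_gen_proof} already prepares.
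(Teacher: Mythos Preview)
Your proposal is correct and matches the paper's proof essentially line for line: the paper simply says the argument is identical to Corollary~\ref{cor:independent_lbs} except that one keeps the factor $\max_k \textnormal{rank}(\mathcal{M}_k)/(d-1)\le \ell/(d-1)$ from Eq.~\eqref{eq:mut_inf_ub_gen_proof}, obtaining $n\eps^2\ell/(\ln(2)(d^2-1))\in\Omega(d^2)$ and hence $n\in\Omega(d^4/\eps^2\ell)$.
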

\begin{proof}
The proof is identical to that for Corollary~\ref{cor:independent_lbs} except we use Theorem~\ref{thm:main_nonadaptive_theorem} to bound the right-hand side of Eq.~\eqref{eq:mut_inf_ub_gen_proof} in terms of the maximum rank of the measurement operators, which in this case is at most~$\ell$ by assumption. 
We then have
\begin{align*}
\frac{n\eps^2 \ell}{\ln(2)(d^2-1)}\in\Omega(d^2)
\end{align*}
which is true if and only if $n\in\Omega(d^4/\eps^2 \ell)$.
\end{proof}
Corollary~\ref{cor:indep_lbs_const_outcome} implies that there is a strong sense in which the folklore ``Pauli tomography" algorithm---which has an upper bound of $O(d^4/\eps^2)$ measurements---is sample-optimal: amongst all possible strategies making use of constant-outcome (and in particular, two-outcome) measurements, there is no way to perform tomography that is more efficient. Note that here it is assumed that each copy of the state is discarded upon performing the measurement. In the more general case where one may perform further non-adaptive measurements on post-measurement states, the lower bound from Corollary~\ref{cor:independent_lbs} applies.

\subsection{Rank-dependent bounds}
\label{sec-na-rank-dependent}

In this section we derive lower bounds for state tomography using non-adaptive single-copy measurements, when the states are known to have bounded rank. 

We consider a different packing of states defined as follows (cf.\ Ref.~\cite[Section~VI.B]{Haah_2017}). Fix~$\nu \in (0,1)$, positive integers~$d \ge 3$ and $r \in [1,d/3]$. For~$i \in [r]$, define the pure state
\begin{align}
\label{eq-rank-1-state}
\ket{\psi_{\nu,i}} \eqdef \sqrt{1 - \nu} \, \ket{d+1-i} + \sqrt{\nu} \, \ket{i} \enspace.
\end{align}
For a unitary operator~$U \in \unitary(\complex^{d-r})$, which we extend to~$\complex^d$ by taking a direct sum with the identity, define the rank~$r$ state
\begin{align}
\label{eq-rank-r-state}
\sigma_{\nu,U} \eqdef U \left( \frac{1}{r} \sum_{i = 1}^r \density{\psi_{\nu,i}} \right) U^\adjoint \enspace.
\end{align}

There is a large packing of states of this form.
\begin{lemma}[part of Lemma~7 in Ref.~\cite{Haah_2017}]
\label{lem-packing-r}
For any~$\nu \in (0,1/4)$ there exists a $\sqrt{\nu}/4$-packing $\mathcal{S}\subset \qstate(d)$ of $N$ quantum states of the form in Eq.~\eqref{eq-rank-r-state}, with $N \in \exp \left( \Omega(rd) \right)$.
\end{lemma}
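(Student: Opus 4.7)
The plan is to mimic the probabilistic construction of Lemma~\ref{lem:packing_construction}/Corollary~\ref{cor:packing_cor}, with the rank-$r$ states $\sigma_{\nu,U}$ playing the role of $\rho_{\eps,U}$. Let $\bm{U}\in\mathsf{U}(\mathbb{C}^{d-r})$ be Haar-random, extended to $\mathbb{C}^d$ by the identity on the complementary $r$-dimensional subspace. By the same inductive union-bound argument as in Corollary~\ref{cor:packing_cor}, it suffices to establish, for every fixed $V\in\mathsf{U}(\mathbb{C}^{d-r})$, the tail bound
\begin{equation*}
\Pr\bigl[\|\sigma_{\nu,\bm{U}}-\sigma_{\nu,V}\|_1 \leq \sqrt{\nu}/4\bigr] \leq e^{-\Omega(rd)}.
\end{equation*}

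To prove this, I would use the variational inequality $\|A\|_1\geq 2\Tr(MA)$ (valid for orthogonal projections $M$ and trace-zero Hermitian $A$), with the specific test operator $M := V\widetilde{\Pi}_r V^{\adjoint}$, where $\widetilde{\Pi}_r := \sum_{i=1}^{r}\density{d+1-i}$ is the projector onto the dominant basis vectors appearing in Eq.~\eqref{eq-rank-1-state}. Because the extension of $U\in\mathsf{U}(\mathbb{C}^{d-r})$ by the identity fixes the small-amplitude basis vectors $\ket{i}$ ($i\in[r]$), and these lie in the orthogonal complement of $\widetilde{\Pi}_r$, a short calculation reduces the relevant quantity to
\begin{equation*}
\Tr\bigl(M(\sigma_{\nu,V}-\sigma_{\nu,\bm{U}})\bigr) = (1-\nu)\left(1 - \tfrac{1}{r}\Tr\bigl(\widetilde{\Pi}_r \bm{U}^{\adjoint}V\widetilde{\Pi}_r V^{\adjoint}\bm{U}\bigr)\right).
\end{equation*}
By right-invariance of the Haar measure, $\bm{W} := \bm{U}^{\adjoint}V$ is itself Haar-distributed on $\mathsf{U}(\mathbb{C}^{d-r})$, and Lemma~\ref{lem:concentration_projector_overlaps} applied to the rank-$r$ projector $\widetilde{\Pi}_r$ inside the $(d-r)$-dimensional subspace---with a deviation parameter $t$ chosen so that $(1+t)r^2/(d-r)\leq 3r/4$---yields $\tfrac{1}{r}\Tr(\widetilde{\Pi}_r\bm{W}\widetilde{\Pi}_r\bm{W}^{\adjoint}) \leq 3/4$ with probability at least $1-e^{-\Omega(rd)}$. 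On this event,
\begin{equation*}
\|\sigma_{\nu,\bm{U}}-\sigma_{\nu,V}\|_1 \geq 2(1-\nu)\cdot\tfrac{1}{4} = \tfrac{1-\nu}{2} > \sqrt{\nu}/4,
\end{equation*}
where the last inequality uses $\nu\in(0,1/4)$.

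The main obstacle, I expect, is striking the right balance between the tail exponent $r^2t^2/4$ afforded by Lemma~\ref{lem:concentration_projector_overlaps} and the size of the threshold $(1+t)r^2/(d-r)$. Reaching the exponent $\Omega(rd)$ rather than the naive $\Omega(r^2)$ requires taking $t$ as large as the threshold constraint permits. The assumption $r\leq d/3$ is precisely what guarantees that such a $t$ exists---large enough to yield the $rd$ rate in the tail, yet small enough to keep the threshold $3r/4$ well below $r$ and hence the trace distance bounded away from $0$.
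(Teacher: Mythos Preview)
The paper does not give its own proof of this lemma; it simply cites Lemma~7 of Haah \emph{et al.}\ So the relevant question is whether your argument is correct for the states as defined in Eq.~\eqref{eq-rank-r-state}.

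There is a genuine gap. You assert that the extension of~$U$ by the identity ``fixes the small-amplitude basis vectors $\ket{i}$ ($i\in[r]$)''. This is backwards. The paper's convention---confirmed by the definition $\Gamma_1 \eqdef \sum_{i=1}^{d-r}\density{i}$ immediately before Lemma~\ref{lem:haar_expecs_arb_povms-r} and by the expectations computed there---is that $U\in\unitary(\complex^{d-r})$ acts on $\mathrm{span}\{\ket{1},\dotsc,\ket{d-r}\}$, which \emph{contains} the small-amplitude vectors $\ket{i}$, $i\in[r]$, and \emph{fixes} the large-amplitude vectors $\ket{d+1-i}$. Consequently your test operator satisfies $M = V\widetilde{\Pi}_r V^{\adjoint} = \widetilde{\Pi}_r = \Gamma_0$ for every $V$, and a direct check gives $\Tr(\Gamma_0\,\sigma_{\nu,U}) = 1-\nu$ for all~$U$, so $\Tr\bigl(M(\sigma_{\nu,V}-\sigma_{\nu,\bm U})\bigr)=0$. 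Your displayed identity is therefore incorrect for the states at hand.

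Switching to the projector onto the moving part, $\widetilde{\Gamma}_1 \eqdef \sum_{i=1}^r\density{i}$, does not rescue the argument: one then obtains $\Tr\bigl(V\widetilde{\Gamma}_1 V^{\adjoint}(\sigma_{\nu,V}-\sigma_{\nu,\bm U})\bigr) = \nu\bigl(1-\tfrac1r\Tr(\widetilde{\Gamma}_1\bm W\widetilde{\Gamma}_1\bm W^{\adjoint})\bigr)$, which yields only a $\Theta(\nu)$ lower bound on the trace distance, not the required $\sqrt{\nu}/4$. The point is that the $\sqrt{\nu}$ separation between the states $\sigma_{\nu,U}$ comes from the \emph{off-diagonal} terms $\sqrt{\nu(1-\nu)}\bigl(\ket{d+1-i}\bra{i}U^{\adjoint}+U\ket{i}\bra{d+1-i}\bigr)$, and any diagonal-block projector annihilates those. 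A correct test operator must couple the two blocks (for instance, project onto the support of $\sigma_{\nu,V}$, or onto vectors of the form $\tfrac{1}{\sqrt2}(\ket{d+1-i}+V\ket{i})$), after which the concentration step via Lemma~\ref{lem:concentration_projector_overlaps} goes through essentially as you describe.
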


By the same reasoning as for Proposition~\ref{prop:suffices_to_consider_haar_mut_inf}, we have
\begin{lemma}
\label{lemma-haar-mi-r}
Let $\bm{U}$ be a Haar-random unitary operator over~$\complex^{d-r}$ and $\bm{z}$ be the outcome obtained upon measuring the random state $\sigma_{\nu,\bm{U}}^{\otimes n}$ with some measurement $\mathcal{M}$. There exists a set of $N$ quantum states $\mathcal{S} \eqdef \{ \sigma_1,\dots,\sigma_N\}
\subset \mathsf{D}(d)$ with~$N \in\exp(\Omega(rd))$ of the form in Eq.~\eqref{eq-rank-r-state} which is a $\sqrt{\nu}/4$-packing and satisfies~$ I(\bm{x}:\bm{y}) \leq I(\bm{U}:\bm{z})$, where $\bm{x}\sim\textnormal{Unif}([N])$ and $\bm{y}$ is the outcome obtained from measuring the random state $\sigma_{\bm{x}}^{\otimes n}$ with $\mathcal{M}$.
\end{lemma}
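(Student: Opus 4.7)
The plan is to mimic the argument of Proposition~\ref{prop:suffices_to_consider_haar_mut_inf}, with the rank-$r$ packing replacing the one used there. First, invoke Lemma~\ref{lem-packing-r} to fix a $\sqrt{\nu}/4$-packing $\mathcal{S}' = \{\sigma_1', \ldots, \sigma_N'\} \subset \qstate(d)$ with $N \in \exp(\Omega(rd))$ and $\sigma_i' = \sigma_{\nu, U_i}$ for some unitaries $U_1, \ldots, U_N \in \unitary(\complex^{d-r})$. The key observation is that for any $W \in \unitary(\complex^{d-r})$, replacing $U_i$ by $W U_i$ for each~$i$ yields another $\sqrt{\nu}/4$-packing of the same size: indeed, extending $W$ by the identity on the remaining $r$-dimensional subspace, we have $\sigma_{\nu, W U_i} = W \sigma_{\nu, U_i} W^\adjoint$, so the trace distance between any two elements of the shifted packing equals that between the original two, by unitary invariance.

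Next, introduce the shifted random variable $\bm{y}_W$: the outcome obtained by measuring $\sigma_{\nu, W U_{\bm{x}}}^{\otimes n}$ with $\mathcal{M}$, where $\bm{x} \sim \mathrm{Unif}([N])$. Let $\bm{W} \in \unitary(\complex^{d-r})$ be Haar-random and independent of $\bm{x}$. Following the same manipulation as in the proof of Proposition~\ref{prop:suffices_to_consider_haar_mut_inf}, write the mutual information as a difference of entropies, push the expectation over $\bm{W}$ inside the first (joint) entropy using concavity of $H$, and swap the order of the two expectations in both terms:
\begin{align*}
\expect{\bm{W}} I(\bm{x}:\bm{y}_{\bm{W}})
& \leq H\!\left( \expect{\bm{x}} \expect{\bm{W}} p_{\bm{y}\mid \bm{W}, \bm{x}} \right) - \expect{\bm{x}} \expect{\bm{W}} H\!\left( p_{\bm{y}\mid \bm{W}, \bm{x}} \right) \enspace.
\end{align*}
By right-invariance of the Haar measure on $\unitary(\complex^{d-r})$, for every fixed $x \in [N]$ the distribution of $\bm{W} U_x$ equals that of $\bm{U}$, so
$\expect{\bm{W}} p_{\bm{y}\mid \bm{W}, x} = p_{\bm{z}}$ and $\expect{\bm{W}} H(p_{\bm{y}\mid \bm{W}, x}) = \expect{\bm{U}} H(p_{\bm{z} \mid \bm{U}})$, both independent of $x$. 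Substituting these identities collapses the right-hand side to $H(p_{\bm{z}}) - \expect{\bm{U}} H(p_{\bm{z}\mid \bm{U}}) = I(\bm{U}:\bm{z})$, giving $\expect{\bm{W}} I(\bm{x}:\bm{y}_{\bm{W}}) \leq I(\bm{U}:\bm{z})$. By the probabilistic method, there is some deterministic $V \in \unitary(\complex^{d-r})$ realizing this bound; take $\mathcal{S} \eqdef \{\sigma_{\nu, V U_1}, \ldots, \sigma_{\nu, V U_N}\}$, which is still a $\sqrt{\nu}/4$-packing by the first step.

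The proof is essentially a transcription of the one for Proposition~\ref{prop:suffices_to_consider_haar_mut_inf}; there is no real obstacle. The only points requiring slight care are (i) checking that the rank-$r$ states $\sigma_{\nu, U}$ are genuinely of the form $U \Sigma_\nu U^\adjoint$ with $\Sigma_\nu$ fixed, so that unitary conjugation by $W$ produces exactly $\sigma_{\nu, W U}$ (which follows from the definition~\eqref{eq-rank-r-state}, since~$U$ acts as the identity on the $r$-dimensional subspace spanned by $\ket{d+1-i}, \ldots, \ket{d+2-r}$ used in~\eqref{eq-rank-1-state} only in that direction, and one can absorb any action on that subspace into $U$ itself---so extending to~$\complex^d$ by the identity is consistent), and (ii) that the Haar measure invariance is applied on~$\unitary(\complex^{d-r})$ rather than on~$\unitary(\complex^d)$. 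Both are routine.
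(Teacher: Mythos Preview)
Your proposal is correct and takes essentially the same approach as the paper: the paper simply states that Lemma~\ref{lemma-haar-mi-r} follows ``by the same reasoning as for Proposition~\ref{prop:suffices_to_consider_haar_mut_inf}'', and your sketch spells out precisely that argument, adapted to the rank-$r$ packing of Lemma~\ref{lem-packing-r} and to the Haar measure on~$\unitary(\complex^{d-r})$. The only blemish is a garbled index range in point~(i) (the last~$r$ basis vectors are~$\ket{d+1-r},\dotsc,\ket{d}$), but the underlying observation---that~$\sigma_{\nu,WU} = (W \oplus \id_r)\,\sigma_{\nu,U}\,(W \oplus \id_r)^\adjoint$---is correct.
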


We bound some measurement statistics associated with a random state of the form in Eq.~\eqref{eq-rank-r-state} in preparation for the main results of this section. Let~$\Gamma_1 \eqdef \sum_{i = 1}^{d-r} \density{i}$ and~$\Gamma_0 \eqdef \id - \Gamma_1$.
\begin{lemma}
\label{lem:haar_expecs_arb_povms-r}
Let $\bm{U}$ be a Haar-random unitary operator on~$\complex^{d-r}$, $M\in\mathsf{Psd}(d)$ be a positive semidefinite operator such that $M\preceq \mathds{1}$, $\sigma_{\nu,U} \in\mathsf{D}(d)$ be defined as in Eq.~\eqref{eq-rank-r-state} for each unitary operator~$U$ on~$\complex^{d-r}$, and
\begin{align}
\label{eq-def-w}
w \eqdef \frac{(1 - \nu)}{r} \trace(M \, \Gamma_0 ) + \frac{\nu}{d-r} \trace(M \, \Gamma_1 ) \enspace.
\end{align}
Then
\begin{align*}
\expect{\bm{U}}\ \Tr\left(M \, \sigma_{\nu, \bm{U}} \right) = w \enspace,
\end{align*}
and
\begin{align*}
\expect{\bm{U}}\  \left( \Tr\left(M \, \sigma_{\nu, \bm{U}} \right)\right)^2  
    & \leq  w^2 + \frac{2 \nu^2}{(d-r)^4} \left( \trace (M \, \Gamma_1) \right)^2 + \frac{3 \nu^2}{r(d-r)^2} \trace \left( (M \, \Gamma_1)^2 \right) \\
    & \qquad \mbox{} + \frac{ 2 \nu (1 - \nu) }{r^2 (d-r)} \trace \left( M \, \Gamma_1 M \, \Gamma_0 \right) \enspace.
\end{align*}
\end{lemma}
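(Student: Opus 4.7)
The plan is to exploit the block decomposition of $\complex^d$ given by the complementary projections $\Gamma_1$ and $\Gamma_0 = \id - \Gamma_1$, relative to which $U$ (extended by the identity on $\Gamma_0$'s range) acts as $\widetilde{U} = U \oplus I$. First I would rewrite
\[
\sigma_{\nu,U} \;=\; \frac{1-\nu}{r}\, \Gamma_0 \;+\; \frac{\nu}{r}\, UBU^\adjoint \;+\; \frac{\sqrt{\nu(1-\nu)}}{r}\,\bigl(CU^\adjoint + UC^\adjoint\bigr),
\]
where $B \eqdef \sum_{i=1}^r \density{i} \preceq \Gamma_1$ and $C \eqdef \sum_{i=1}^r \ketbra{d+1-i}{i}$ is a partial isometry satisfying $C^\adjoint C = B$ and $CC^\adjoint = \Gamma_0$. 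This decomposition groups the pieces of $\sigma_{\nu,U}$ by their ``$U$-parity'' (the net number of $U$'s over $U^\adjoint$'s), so that every Haar average below either vanishes by a symmetry argument or collapses to a standard Weingarten calculation on $\complex^{d-r}$.

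For the first moment, a one-design (twirl) computation gives $\mathbb{E}_U[\sigma_{\nu, U}] = \tfrac{1-\nu}{r}\Gamma_0 + \tfrac{\nu}{d-r}\Gamma_1$, whose trace against $M$ is exactly $w$. For the second moment I would expand $(\Tr(M\sigma_{\nu,U}))^2$ into sixteen cross terms and discard every one whose total $U$-parity is odd, leaving exactly four surviving contributions: $\alpha^2 \Tr(M\Gamma_0)^2$, the cross term $2\alpha\beta\, \Tr(M\Gamma_0) \cdot \mathbb{E}_U[\Tr(MUBU^\adjoint)]$, the ``diagonal'' piece $\beta^2\mathbb{E}_U[\Tr(MUBU^\adjoint)^2]$, and the ``off-diagonal'' piece $2\gamma^2\mathbb{E}_U[\Tr(MCU^\adjoint)\Tr(MUC^\adjoint)]$, where $(\alpha,\beta,\gamma) = (\tfrac{1-\nu}{r}, \tfrac{\nu}{r}, \tfrac{\sqrt{\nu(1-\nu)}}{r})$.

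To evaluate the diagonal piece I would apply the standard two-design identity on $\complex^{d-r}$ with rank-$r$ projection $B$:
\[
\mathbb{E}_U[(UBU^\adjoint)^{\otimes 2}] \;=\; \frac{r(nr-1)}{n(n^2-1)}\,\id_{\Gamma_1}^{\otimes 2} \;+\; \frac{r(n-r)}{n(n^2-1)}\, W_{\Gamma_1}, \qquad n \eqdef d - r,
\]
and contract with $M^{\otimes 2}$ to obtain terms proportional to $(\Tr(M\Gamma_1))^2$ and $\Tr((M\Gamma_1)^2)$. For the off-diagonal piece I would use cyclicity of the trace to recognize $\Tr(MCU^\adjoint)\cdot\Tr(MUC^\adjoint) = |\Tr(TU)|^2$ for $T \eqdef C^\adjoint M \Gamma_1$ acting on $\Gamma_1$; the elementary first-moment identity $\mathbb{E}_U[|\Tr(TU)|^2] = \|T\|_{\mathrm F}^2/(d-r)$ then yields $\Tr(M\Gamma_1 M \Gamma_0)/(d-r)$, after simplifying $\|T\|_{\mathrm F}^2 = \Tr(\Gamma_1 M CC^\adjoint M \Gamma_1) = \Tr(M\Gamma_1 M \Gamma_0)$ via $CC^\adjoint = \Gamma_0$.

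Finally I would subtract $w^2 = \alpha^2\Tr(M\Gamma_0)^2 + 2\alpha\,\tfrac{\nu}{d-r}\Tr(M\Gamma_0)\Tr(M\Gamma_1) + \tfrac{\nu^2}{(d-r)^2}(\Tr(M\Gamma_1))^2$ to cancel the first two of the four contributions along with the $(\Tr(M\Gamma_1))^2$ piece of $w^2$. The residual coefficient of $(\Tr(M\Gamma_1))^2$ simplifies to $\nu^2(r-n)/(rn^2(n^2-1))$, which is non-positive under the hypothesis $r \le d/3$ (forcing $n \ge 2r$) and is therefore bounded above by the non-negative $2\nu^2/(d-r)^4$ asserted by the lemma; the residual coefficient of $\Tr((M\Gamma_1)^2)$ is $\nu^2(n-r)/(rn(n^2-1)) \le 3\nu^2/(r(d-r)^2)$ after using $n - r \le n$ and $n^2 - 1 \ge n^2/2$ (valid because $n \ge 2$); and the $\Tr(M\Gamma_1 M \Gamma_0)$ coefficient is exactly $2\nu(1-\nu)/(r^2(d-r))$, completing the bound. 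The main obstacle I anticipate is the off-diagonal expectation: recognizing via cyclicity and the block identities for $C$ that the pair of traces coalesces into $|\Tr(TU)|^2$---the modulus squared of a single scalar---is the nontrivial algebraic step that reduces the computation to an elementary first moment rather than an unpaired $(U, U^\adjoint)$ Weingarten integral.
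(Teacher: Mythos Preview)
Your proposal is correct and follows essentially the same strategy as the paper: decompose $\sigma_{\nu,U}$ according to the $\Gamma_0/\Gamma_1$ block structure, invoke Haar symmetry to discard the vanishing cross terms, and evaluate the survivors via the one- and two-design formulas on $\complex^{d-r}$. Your phase (``$U$-parity'') argument and the recognition of the off-diagonal piece as $|\Tr(TU)|^2$ are slightly cleaner repackagings of the paper's explicit first-moment identities, and your observation that the residual $(\Tr(M\Gamma_1))^2$ coefficient is in fact non-positive (under $r\le d/3$) is sharper than what the lemma records.
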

\begin{proof}
Due to the~$\pm 1$ symmetry of the Haar measure, the terms with an odd number of occurrences of~$\bU$ or~$\bU^\adjoint$ in the expansion of~$\sigma_{\nu, \bm{U}} $ and~$\sigma_{\nu, \bm{U}}^{\tensor 2}$ evaluate to~$0$ in expectation. The expectation of~$\Tr(M \, \sigma_{\nu, \bm{U}} )$ then follows as before. For the bound on the second expectation, note that
\begin{align*}
\expect{\bm{U}}\  \left( \Tr \left(M \, \sigma_{\nu, \bm{U}} \right)\right)^2
    & = \expect{\bm{U}}\  \trace \left( (M \tensor M) \left(\sigma_{\nu, \bm{U}} \tensor \sigma_{\nu, \bm{U}} \right)\right) \enspace.
\end{align*}
Define~$\tGamma_1 \eqdef \sum_{i = 1}^{r} \density{i}$, $ \tGamma_{10} \eqdef \sum_{i = 1}^{r} \ketbra{i}{d+1-i}$, and~$ \tGamma_{01} \eqdef \sum_{i = 1}^{r} \ketbra{d+1-i}{i}$.
Combining the~$\pm 1$ symmetry of the Haar measure with Propositions~\ref{prop-haar-expct-3} and~\ref{prop:haar_expecs_2}, we get
\begin{align*}
\expect{\bU}\ \sigma_{\nu, \bm{U}}^{\tensor 2} 
    & = \expect{\bU} \Big[ \frac{(1 - \nu)^2}{ r^2} \, \Gamma_0^{\tensor 2} + \frac{\nu (1 - \nu)}{ r^2} \Big( \Gamma_0 \tensor \bU \, \tGamma_1 \bU^\adjoint \\
    & \qquad \mbox{} + \bU \, \tGamma_1 \bU^\adjoint \tensor \Gamma_0 + \big( \bU ( \tGamma_{10} + \tGamma_{01} ) \bU^\adjoint \big)^{\tensor 2} \Big) + \frac{ \nu^2}{ r^2} \big( \bU \, \tGamma_1 \bU^\adjoint \big)^{\tensor 2} \Big] \\
    & =  \frac{(1 - \nu)^2}{ r^2} \Gamma_0^{ \tensor 2} + \frac{ \nu (1 - \nu)}{r (d-r)} \Big( \Gamma_0 \tensor \Gamma_1 + \Gamma_1 \tensor \Gamma_0 \Big) \\
    & \qquad \mbox{} + \frac{\nu (1 - \nu)}{ r^2 (d-r)} \sum_{i = 1}^r \sum_{k = 1}^{d-r} \big( \ketbra{k}{d+1-i} \tensor \ketbra{d+1-i}{k} + \ketbra{d+1-i}{k} \tensor \ketbra{k}{d+1-i} \big)  \\
    & \qquad \mbox{} + \frac{\nu^2}{ r (d-r) ((d-r)^2 - 1)} \Big( (r (d - r) - 1) \id + (d - 2r) W \Big) ( \Gamma_1 \tensor \Gamma_1)  \enspace,
\end{align*}
where~$\id$ and~$W$ are the identity and swap operators on~$\complex^d \tensor \complex^d$, respectively.
We have
\[
\frac{1}{(d-r)^2 - 1} \le \frac{1}{(d-r)^2} \left( 1 + \frac{2}{(d-r)^2} \right) \enspace,
\]
since~$(d - r)^2 \ge 2$. Noting that~$\trace((A \tensor B) W) = \trace(AB)$ and~$d-2r \le d - r$, we get
\begin{align*}
\expect{\bm{U}}\  \left( \Tr \left(M \, \sigma_{\nu, \bm{U}} \right)\right)^2
    & \le \frac{(1 - \nu)^2}{ r^2} \big( \trace( M \, \Gamma_0) \big)^2 + \frac{ 2 \nu (1 - \nu)}{r (d-r)} \trace( M \, \Gamma_0) \trace( M \, \Gamma_1) \\
    & \qquad \mbox{} + \frac{ 2 \nu (1 - \nu)}{r^2 (d-r)} \trace( M \, \Gamma_1 M \, \Gamma_0) \\
    & \qquad \mbox{} + \frac{\nu^2}{ (d-r)^2}  \left( 1 + \frac{2}{(d-r)^2} \right) \big( \trace( M \, \Gamma_1) \big)^2 + \frac{3 \nu^2}{ r(d-r)^2}  \trace \big( ( M \, \Gamma_1)^2  \big) \enspace.
\end{align*}
The bound in the statement of the lemma now follows by the definition of~$w$ in Eq.~\eqref{eq-def-w}.
\end{proof}

We now prove a slightly stronger lower bound for the tomography of states with bounded rank as compared with the bound implied by Ref.~\cite{Haah_2017}; see the remark following Theorem~4 in this reference. The proof of the said theorem assumes that the rank of the input states is strictly greater than one, so the bound for pure states in the theorem we establish below appears to be new.

Recall that~$d \ge 3$ and~$r \in [1,d/3]$.
\begin{theorem}
\label{thm-na-lb-r}
Let $\eps \in (0,1/8)$. Any algorithm for quantum tomography of rank~$r$ quantum states in $d$-dimensions that uses non-adaptive, single-copy measurements and produces an approximation within~$\eps$ in trace distance with positive constant probability requires $ \Omega\left(r^2 d/\eps^2 \right)$ samples of the unknown state.
\end{theorem}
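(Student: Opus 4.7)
\textbf{The plan is to} mirror the proof of Corollary~\ref{cor:independent_lbs}, substituting the rank-$r$ packing of Lemma~\ref{lem-packing-r} and the second-moment estimates of Lemma~\ref{lem:haar_expecs_arb_povms-r} in place of their full-rank analogues. Set $\nu \in \Theta(\eps^2)$ large enough that the $\sqrt{\nu}/4$-packing is $2\eps$-separated while still satisfying $\nu < 1/4$, which is possible when $\eps$ is below an absolute constant. Lemma~\ref{lem-packing-r} then produces a packing of $N \in \exp(\Omega(rd))$ states of the form~\eqref{eq-rank-r-state}, and Lemma~\ref{lemma-haar-mi-r} lets us assume $I(\bm{x}:\bm{y}) \le I(\bm{U}:\bm{z})$ with $\bm{x}$ uniform on the packing and $\bm{U}$ Haar-random on $\complex^{d-r}$. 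Since $\eps$-accurate tomography distinguishes this packing with constant probability, Corollary~\ref{cor:fanos_ineq} gives $I(\bm{x}:\bm{y}) \in \Omega(rd)$.

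To upper bound $I(\bm{U}:\bm{z})$ I will use non-adaptivity (so the $\bm{z}_k$ are conditionally independent given $\bm{U}$) together with the chain rule and subadditivity to obtain $I(\bm{U}:\bm{z}) \le \sum_{k=1}^n I(\bm{U}:\bm{z}_k)$, exactly as in the derivation preceding Eq.~\eqref{eq:mut_inf_ub_gen_proof}. Corollary~\ref{cor:mut_inf_ub} bounds each term by $\frac{1}{\ln 2}\expect{\bm{U}}\mathrm{D}_{\chi^2}\infdivx{p_{\bm{z}_k|\bm{U}}}{p_{\bm{z}_k}}$. Without loss of generality I may assume each $\mathcal{M}_k$ has rank-one POVM elements $M_z = \alpha_z \density{v_z}$, since any non-adaptive protocol using general measurements can be simulated with the same sample complexity by refining every measurement to a rank-one POVM and then coarsening the outcomes in post-processing.

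The core of the argument is the claim that for every rank-one measurement,
\[
\expect{\bm{U}}\mathrm{D}_{\chi^2}\infdivx{p_{\bm{z}|\bm{U}}}{p_{\bm{z}}} \in O(\nu/r) \enspace.
\]
Writing $a_z = \trace(M_z \Gamma_0)$, $b_z = \trace(M_z \Gamma_1)$ and $w_z = (1-\nu)a_z/r + \nu b_z/(d-r)$ (which Lemma~\ref{lem:haar_expecs_arb_povms-r} identifies as the expected probability of outcome $z$), and using $\sum_z w_z = 1$, the divergence reduces to $\sum_z (\text{correction})_z/w_z$ where the corrections are the three terms in Lemma~\ref{lem:haar_expecs_arb_povms-r}. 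I will parametrize $x_z \eqdef (1-\nu)a_z/r$ and $y_z \eqdef \nu b_z/(d-r)$, so that $w_z = x_z + y_z$, $\sum_z x_z = 1-\nu$ and $\sum_z y_z = \nu$. The two ``diagonal'' corrections (one proportional to $b_z^2$, and one proportional to $\trace((M_z\Gamma_1)^2)$, which also equals $b_z^2$ by rank-oneness) are handled using the pointwise inequality $y_z^2/(x_z + y_z) \le y_z$, and then summed using $\sum_z y_z = \nu$; the two contributions are $O(\nu/(d-r)^2)$ and $O(\nu/r)$ respectively. For the cross-term proportional to $\trace(M_z \Gamma_1 M_z \Gamma_0)$, rank-oneness supplies the identity $\trace(M_z \Gamma_1 M_z \Gamma_0) = a_z b_z$, after which the pointwise bound $x_z y_z/(x_z + y_z) \le \min(x_z, y_z)$ together with $\sum_z \min(x_z, y_z) \le \min(\sum_z x_z, \sum_z y_z) = \nu$ delivers an $O(\nu/r)$ contribution. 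Combining everything and using $r \le d/3$ to absorb the $(d-r)^2$ term into $r$, one obtains $I(\bm{U}:\bm{z}) \in O(n\nu/r) = O(n\eps^2/r)$, which against the Fano lower bound $\Omega(rd)$ yields $n \in \Omega(r^2 d/\eps^2)$.

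\textbf{The main obstacle} is the bound on the cross-term. Generic pointwise estimates of $\trace(M_z \Gamma_1 M_z \Gamma_0)$ (for instance, $\min(a_z, b_z)$ from operator-norm arguments or $\sqrt{a_z b_z}$ from Cauchy--Schwarz) introduce a spurious factor of $|\mathcal{Z}|$ on summation, which is disastrous for measurements with many outcomes. The two ingredients that remove this factor are (i) the rank-one refinement, which sharpens the pointwise estimate to the exact identity $a_z b_z$, and (ii) the harmonic-mean bound $x_z y_z/(x_z + y_z) \le \min(x_z, y_z)$, which, paired with $\sum_z \min(x_z, y_z) \le \nu$, gives a bound independent of $|\mathcal{Z}|$.
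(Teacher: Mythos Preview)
Your argument is correct, and the overall architecture---packing via Lemma~\ref{lem-packing-r}, the mutual-information sandwich from Lemma~\ref{lemma-haar-mi-r} and Fano, subadditivity over copies, and the $\chi^2$ bound fed by Lemma~\ref{lem:haar_expecs_arb_povms-r}---matches the paper's proof exactly, down to the final estimate $I(\bm{U}:\bm{z}) \in O(n\nu/r)$.

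The one substantive difference is your rank-one refinement. The paper works with arbitrary POVM elements throughout: in place of your exact identity $\trace(M_z\Gamma_1 M_z\Gamma_0)=a_zb_z$ it invokes a Cauchy--Schwarz estimate, and in place of your harmonic-mean inequality it observes that $\tfrac{4\nu}{d-r}\,b_z+\tfrac{2(1-\nu)}{r}\,a_z\le 4w_z$ and then sums $\sum_z b_z=d-r$. Your refine-then-coarsen reduction is a legitimate shortcut for \emph{this} theorem, and it makes the cross-term estimate cleaner (indeed it sidesteps any delicacy in bounding $\trace(M_z\Gamma_1 M_z\Gamma_0)$ for general $M_z$). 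What it gives up is reusability: the paper's insistence on handling arbitrary $M_z$ is what lets the same computation drive Theorem~\ref{thm-na-lb-const-r}, where the number of outcomes is bounded and refining to rank one would destroy that constraint. For Theorem~\ref{thm-na-lb-r} alone, though, your route is at least as direct.
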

\begin{proof}
We proceed as in the proof of Corollary~\ref{cor:independent_lbs}. Consider any algorithm as in the statement of the theorem, and let $\mathcal{M} \eqdef \mathcal{M}_1\otimes\dots\otimes \mathcal{M}_n\in \Xi(d^n)$ be the single-copy, non-adaptive measurement performed by it on the~$n$ copies of the unknown pure state.

We take~$\nu \eqdef 64 \eps^2$. By Lemma~\ref{lemma-haar-mi-r}, there exists a $2 \eps$-packing $\mathcal{S} \eqdef \{\sigma_1, \dotsc, \sigma_N\} \subset \mathsf{D}(d)$ of quantum states of the form in Eq.~\eqref{eq-rank-r-state}, with $N\in\exp(\Omega(rd))$, which satisfy the following property. Let $\bm{x}\sim\text{Unif}([N])$ and $\bm{y} \eqdef (\bm{y}_1,\dots,\bm{y}_n)$ be the outcome of the measurement $\mathcal{M}$ when performed on $n$ copies of the random state $\sigma_{\bm{x}}$. Then $I(\bm{x}:\bm{y})\leq I(\bm{U}:\bm{z})$, where $\bm{U}$ and $\bm{z} \eqdef (\bm{z}_1,\dots,\bm{z}_n)$ are defined as in the lemma: $\bm{U}$ is a Haar-random unitary operator over~$\complex^{d-r}$, and $\bm{z}_k$ is the measurement outcome obtained by measuring $\sigma_{\nu,\bm{U}}$ with $\mathcal{M}_k$, for each $k\in [n]$.

As in Corollary~\ref{cor:independent_lbs}, using Lemma~\ref{lem:haar_expecs_arb_povms-r}, we get
\begin{align}
\nonumber
I(\bm{x}:\bm{y}) &\leq \frac{n}{\ln(2)} \left(\max_{k\in [n]} \ \expect{\bm{U}} \, \mathrm{D}_{\chi^2}\infdivx{p_{\bm{z}_k|\bm{U}}}{p_{\bm{z}_k}}\right) \\
\nonumber
& \leq \frac{n}{\ln(2)} \left[ \frac{2 \nu^2}{ (d-r)^4} \sum_z \frac{1}{w_z} \left( \trace (M_z \, \Gamma_1) \right)^2 +  \frac{3 \nu^2}{ r (d-r)^2} \sum_z \frac{1}{w_z} \trace \left( (M_z \, \Gamma_1)^2 \right) \right. \\
\label{eq-na-mi-ub-r}
    & \qquad \left. \mbox{} + \frac{ 2 \nu (1 - \nu) }{ r^2 (d-r)} \sum_z \frac{1}{w_z} \trace \left( M_z \, \Gamma_1 M_z \, \Gamma_0 \right) \right] \enspace,
\end{align}
where~$(M_z)$ is one of the~$n$ measurements~$\cM_k$ which maximizes the expected~$\chi^2$-divergence above, and~$w_{z}$ is given by Eq.~\eqref{eq-def-w} with~$M \eqdef M_z$. 

Since the algorithm approximates the unknown state to within~$\eps$ and the states~$\sigma_x$ form a~$2\eps$-packing, the algorithm correctly identifies~$\bm{x}$ with positive constant probability. By Fano's Inequality, we have
\begin{align}
\label{eq-na-mi-lb-r}
I(\bm{x}:\bm{y}) & \in \Omega(rd) \enspace.
\end{align}
To conclude the lower bound of~$\Omega(r^2 d/ \eps^2 )$ on~$n$, it suffices to show that the right side of Eq.~\eqref{eq-na-mi-ub-r} is of the order of~$n \nu / r$.

We have~$\trace \big(( M_z \, \Gamma_1)^2 \big) \le \big( \trace ( M_z \Gamma_1) \big)^2$, and
\begin{align*}
\trace \left( M_z \, \Gamma_1 M_z \, \Gamma_0 \right)
    & \le \left( \trace ( \Gamma_1 M_z^2 \, \Gamma_1) \right)^{1/2}  \left( \trace ( \Gamma_0 M_z^2 \, \Gamma_0 ) \right)^{1/2} \\
    & \le \left( \trace ( M_z \, \Gamma_1) \right)  \left( \trace ( M_z \, \Gamma_0 ) \right) \enspace.
\end{align*}
Combining this with~$2r/(d-r)^2 \le 1$ and the definition of~$w_z$ (in particular that~$w_z \ge (\nu/(d-r)) \trace(M_z \, \Gamma_1)$), we get the desired bound on the right hand side of Eq.~\eqref{eq-na-mi-ub-r}:
\begin{align*}
\frac{n}{\ln(2)} \cdot \frac{\nu}{r(d-r)} \sum_z \frac{ \trace( M_z \, \Gamma_1 )}{w_z} & \left[ \frac{4 \nu}{ (d-r)} \trace (M_z \, \Gamma_1) + \frac{2 (1 - \nu)}{r} \trace ( M_z \, \Gamma_0 ) \right] \\
    & \le \frac{n}{\ln(2)} \cdot \frac{ 4 \nu}{r(d-r)} \sum_z \trace( M_z \, \Gamma_1 ) \\
    & \le \frac{4n \nu}{r \ln(2)} \enspace.
\end{align*}
This completes the proof.
\end{proof}

The same approach allows us to derive stronger bounds when the measurements used by the tomography algorithm have a constant number of outcomes. Again,~$d \ge 3$ and~$r \in [1,d/3]$.
\begin{theorem}
\label{thm-na-lb-const-r}
Let $\eps \in (0,1/8)$. Any algorithm for quantum tomography of rank~$r$ quantum states in $d$-dimensions that uses non-adaptive, single-copy measurements \emph{with at most~$\ell$ outcomes\/} and produces an approximation within~$\eps$ in trace distance with positive constant probability requires $ \Omega\left(r^2 d^2 /\ell \eps^2 \right)$ samples of the unknown state.
\end{theorem}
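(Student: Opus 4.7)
The plan is to follow the scaffold of the proof of Theorem~\ref{thm-na-lb-r} up through the application of Lemma~\ref{lem:haar_expecs_arb_povms-r}, and deviate only in how the resulting sum over measurement outcomes is controlled. Fix an algorithm using non-adaptive single-copy measurements $\mathcal{M}_1, \dotsc, \mathcal{M}_n$ each with at most $\ell$ outcomes, set $\nu \eqdef 64 \eps^2$, and invoke Lemma~\ref{lemma-haar-mi-r} to obtain an index $\bm{x} \sim \textnormal{Unif}([N])$ with $N \in \exp(\Omega(rd))$ satisfying $I(\bm{x} : \bm{y}) \le I(\bm{U} : \bm{z})$, where $\bm{U}$ is a Haar-random unitary on $\complex^{d-r}$ and $\bm{z} = (\bm{z}_1, \dotsc, \bm{z}_n)$ are the outcomes of $\mathcal{M}$ on $\sigma_{\nu,\bm{U}}^{\otimes n}$. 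Since the algorithm approximates to within $\eps$ in trace distance and the packing has separation $2\eps$, Fano's inequality yields $I(\bm{x} : \bm{y}) \in \Omega(rd)$.

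For the matching upper bound, I would use the chain rule for mutual information together with Corollary~\ref{cor:mut_inf_ub} to reduce to controlling $\expect{\bm{U}} \mathrm{D}_{\chi^2}\infdivx{p_{\bm{z}_k | \bm{U}}}{p_{\bm{z}_k}}$ for a single worst-case measurement $(M_z)_{z \in \mathcal{Z}}$ with $|\mathcal{Z}| \le \ell$, and then substitute the second-moment bound from Lemma~\ref{lem:haar_expecs_arb_povms-r}. This produces a sum over $z$ of three terms of the form appearing in Eq.~\eqref{eq-na-mi-ub-r}. The key departure from the proof of Theorem~\ref{thm-na-lb-r} is to forgo the telescoping identity $\sum_z \trace(M_z \, \Gamma_a) = \trace(\Gamma_a)$ and instead bound each individual summand by a quantity independent of $M_z$, so that the overall sum can then be multiplied by $|\mathcal{Z}| \le \ell$. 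Specifically, using $0 \preceq M_z \preceq \id$ and $\Gamma_a^2 = \Gamma_a$, the operator inequalities $\Gamma_1 M_z^2 \Gamma_1 \preceq \Gamma_1 M_z \Gamma_1$ and $M_z \Gamma_1 M_z \preceq M_z$ yield $\trace((M_z \Gamma_1)^2) \le \trace(M_z \Gamma_1)$ and $\trace(M_z \Gamma_1 M_z \Gamma_0) \le \trace(M_z \Gamma_0)$. Pairing these with $w_z \ge (\nu/(d-r)) \trace(M_z \Gamma_1)$ in the first two terms and $w_z \ge ((1-\nu)/r) \trace(M_z \Gamma_0)$ in the third, each summand is of order $\nu/(r(d-r))$, so that after summing $|\mathcal{Z}| \le \ell$ times one obtains $\expect{\bm{U}} \mathrm{D}_{\chi^2}\infdivx{p_{\bm{z}_k | \bm{U}}}{p_{\bm{z}_k}} \in O(\nu \ell/(r(d-r)))$.

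Combining these ingredients gives $I(\bm{x} : \bm{y}) \in O(n \nu \ell /(rd))$, where the hypothesis $r \le d/3$ absorbs the difference between $d-r$ and $d$ into a constant. Setting this against the $\Omega(rd)$ lower bound from Fano yields $n \in \Omega(r^2 d^2 / (\ell \nu)) = \Omega(r^2 d^2/(\ell \eps^2))$, as required. The main obstacle lies in the choice of bound for the dominant cross-term $\trace(M_z \Gamma_1 M_z \Gamma_0)$: the Cauchy--Schwarz-type bound $\trace(M_z \Gamma_1 M_z \Gamma_0) \le \trace(M_z \Gamma_1) \trace(M_z \Gamma_0)$ used in Theorem~\ref{thm-na-lb-r} extracts a factor of $\trace(M_z \Gamma_1)$ whose sum over $z$ is exactly $d-r$, which then \emph{absorbs} the gain from $|\mathcal{Z}| \le \ell$ and produces only the $\Omega(r^2 d/\eps^2)$ rate; one must instead use the spectral-norm bound $\trace(M_z \Gamma_1 M_z \Gamma_0) \le \trace(M_z \Gamma_0)$, which leaves the factor $|\mathcal{Z}|$ intact after summation and delivers the desired $d/\ell$ improvement.
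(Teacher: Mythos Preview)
Your proposal is correct and follows essentially the same approach as the paper: both use the operator inequalities $\trace((M_z\Gamma_1)^2)\le\trace(M_z\Gamma_1)$ and $\trace(M_z\Gamma_1 M_z\Gamma_0)\le\trace(M_z\Gamma_0)$ (in place of the Cauchy--Schwarz bound used in Theorem~\ref{thm-na-lb-r}) together with the two lower bounds on~$w_z$ coming from Eq.~\eqref{eq-def-w}, so that each summand in Eq.~\eqref{eq-na-mi-ub-r} is~$O(\nu/(r(d-r)))$ and the sum over at most~$\ell$ outcomes gives the extra factor of~$\ell/d$. The only cosmetic difference is that the paper packages the second and third terms together via the single observation $\tfrac{3\nu}{d-r}\trace(M_z\Gamma_1)+\tfrac{2(1-\nu)}{r}\trace(M_z\Gamma_0)\le 3w_z$, and handles the first (subdominant) term using~$\sum_z\trace(M_z\Gamma_1)=d-r$ rather than bounding each summand individually; your per-summand bounds are equally valid and yield the same rate.
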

\begin{proof}
The proof largely proceeds as for Theorem~\ref{thm-na-lb-r}. We use the same notation here, and only indicate where we deviate from that proof. 

To conclude the claimed lower bound on~$n$, it suffices to show that the right side of Eq.~\eqref{eq-na-mi-ub-r} is of the order of~$n \ell \nu /rd $. We have
\[
\trace \big(( M_z \, \Gamma_1)^2 \big) = \trace \big( (\Gamma_1 M_z \, \Gamma_1)^2 \big) \le \trace ( \Gamma_1 M_z \, \Gamma_1) \enspace, 
\]
and
\[
\trace ( M_z \, \Gamma_1 M_z \, \Gamma_0) \le \trace ( M_z^2 \, \Gamma_0) \le \trace ( M_z \, \Gamma_0) \enspace,
\]
since~$\Gamma_1 \preceq \id$, $ M_z^2 \preceq M_z$, and~$\Gamma_0 \succeq 0$. Further observe that~$w_z \geq \nu\Tr(M_z\Gamma_1)/(d-r)$, $2r / (d - r)^2 \le 1$, and~$\trace (M_z \, \Gamma_1) \le d-r$. We thus get the following bound on the right side of Eq.~\eqref{eq-na-mi-ub-r}:
\begin{align*}
\frac{n}{\ln(2)} \cdot \frac{\nu}{r (d-r)} & \left[ \sum_z \frac{1}{w_z} \left( \frac{3 \nu}{d-r} \trace (M_z \, \Gamma_1) +  \frac{2 (1 - \nu)}{r} \trace ( M_z \, \Gamma_0) \right) \right. \\
    & \qquad \qquad \left. \mbox{}  + \frac{2r}{ (d-r)^2} \sum_z \frac{\nu }{w_z (d-r)} \left( \trace (M_z \, \Gamma_1) \right)^2 \right] \\
    & \le \frac{n}{\ln(2)} \cdot \frac{\nu}{r(d-r)} \left[  3 \ell + \frac{2r}{ (d-r)^2} \sum_z \trace (M_z \, \Gamma_1) \right] \\
    & \le \frac{n}{\ln(2)} \cdot \frac{4 \nu \ell}{r (d-r)} \enspace.
\end{align*}
Here, we bounded the term in the curved parentheses in the first line by~$3 w_z$.
The result is the desired bound on the right hand side of Eq.~\eqref{eq-na-mi-ub-r}.
\end{proof}

\section{Lower bounds for tomography with adaptive measurements}
\label{sec:adaptive_lower_bounds}

In Section~\ref{sec:nonadaptive_lower_bounds} we saw that it is possible to derive lower bounds on tomography that are stronger than the bound of $\Omega(d^2)$ obtained by a direct application of Holevo's theorem, by considering restricted measurements. (See also Theorem~4 in Ref.~\cite{Haah_2017}, or Chapter~5 in Wright's Ph.D.\ thesis~\cite{wright2016}.) Specifically, we were able to show optimal lower bounds on tomography in the nonadaptive case for both constant-outcome and arbitrary measurements. In this section we consider a different kind of restriction on the measurements; namely, the measurements may be adaptively chosen, so long as they are chosen from a finite set of $m$ different measurements. 

We show that even when we have a choice of $\exp(d)$ different measurements, the $\Omega(d^3/\eps^2)$ lower bound from the previous section continues to hold. In particular, this lower bound applies to the case when the single-copy measurements are all efficiently implementable, i.e., implementable as uniformly generated polylog($d$)-size quantum circuits. In other words, adaptivity does not help while using measurements involving quantum computation with a number of gates growing only polynomially in the number of qubits measured. In Section~\ref{sec:solovay_kitaev} we explain how these results can also be applied to rule out an advantage for adaptivity using a possibly infinite number of measurement settings (i.e., when the measurements are chosen from an infinite set), whenever the measurements are implementable with $\text{polylog}(d)$-size quantum circuits.

The approach we take also leads to a lower bound in the adaptive, constant-outcome case which generalizes an earlier result due to Ref.~\cite{Flammia_2012}. There it is shown that $\Omega(d^4/\log(d))$ copies of the state are required when one is restricted to two-outcome, projective, possibly adaptive Pauli measurements.

\subsection{Distinguishability of a hard ensemble}
\label{sec-distinguishability}

To arrive at lower bounds robust to adaptivity, we once again appeal to difficult instances of the quantum state discrimination problem. This time, however, we construct a packing of quantum states with the additional requirement that all selected states lead to uninformative measurements using \textit{any\/} measurement from a fixed set of possibilities. Such a construction is enabled by a tail bound on the $\chi^2$-divergence quantities we have been considering, so that most states, in addition to being well-separated from previous choices, offer only uninformative measurement statistics. Similar tail bounds have been derived in prior work for the purpose of showing unconditional lower bounds for quantum state certification with adaptive measurements~\cite{BCL20-mixedness-testing}.

The concentration of measure property we invoke to arrive at our tail bounds follow from log-Sobolev inequalities, and is analogous to L\'evy's Lemma for functions on the unit sphere~\cite{M19-random-matrix-theory}. A detailed discussion is beyond the scope of this work, but roughly speaking these imply that sufficiently well-behaved functions of unitary operators concentrate strongly around their expectation. In particular, we have the following theorem.
\begin{theorem}[Special case of Theorem 5.17 in Ref.~\cite{M19-random-matrix-theory}]
\label{lem:unitary_levy_lemma}
Let $d>1$ be a positive integer, $f:\mathsf{U}(d)\to \mathbb{R}$ be $\kappa$-Lipschitz with respect to the metric induced by the Frobenius norm, and let $\mu \eqdef \expct_{\bm{U}\sim\textnormal{Haar}}f(\bm{U})$. Then, for any $t>0$, it holds that
\begin{align*}
\Pr_{\bm{U}\sim\textnormal{Haar}}\left[f(\bm{U})\geq \mu + t\right]\leq \exp\left(-\frac{(d-2) t^2}{24 \kappa^2}\right). 
\end{align*}
\end{theorem}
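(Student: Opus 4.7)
The plan is to derive this concentration inequality from a logarithmic Sobolev inequality for the Haar measure on~$\mathsf{U}(d)$, via the classical Herbst argument. This is the standard route for concentration on Lie groups of positive Ricci curvature, and $\mathsf{U}(d)$ is a prime example.

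First, I would establish (or invoke from the differential-geometric literature) that the Haar measure on~$\mathsf{U}(d)$ equipped with the bi-invariant metric induced by the Frobenius norm satisfies a log-Sobolev inequality with constant~$c = 12/(d-2)$: for every smooth $g : \mathsf{U}(d) \to \mathbb{R}$,
\begin{align*}
\mathrm{Ent}_\mu(g^2) \;\leq\; 2c \int_{\mathsf{U}(d)} \lvert \nabla g \rvert^2 \diff\mu ,
\end{align*}
where~$\mathrm{Ent}_\mu(h) = \int h \log h \diff\mu - \left(\int h \diff\mu\right)\log\int h \diff\mu$. The underlying fact is the Bakry--\'Emery criterion: the Ricci curvature of $\mathsf{U}(d)$ (with this choice of metric) is bounded below by $(d-2)/12$, which yields the log-Sobolev constant above. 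This step amounts to a curvature computation on the compact Lie group, and it is the source of the specific constant~$24$ in the denominator of the exponent.

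Next, I would carry out Herbst's argument. Without loss of generality rescale so that~$\kappa = 1$ and~$\mu = 0$. For~$\lambda > 0$, apply the log-Sobolev inequality to~$g = e^{\lambda f/2}$. Using that~$\lvert \nabla f \rvert \leq 1$ pointwise (since~$f$ is $1$-Lipschitz in the Frobenius metric, which is exactly the geodesic metric on~$\mathsf{U}(d)$ up to the choice of bi-invariant scaling), one obtains
\begin{align*}
\lambda \, \phi'(\lambda) - \phi(\lambda) \log \phi(\lambda) \;\leq\; \frac{c \lambda^2}{2}\, \phi(\lambda),
\end{align*}
where~$\phi(\lambda) = \mathbb{E}\, e^{\lambda f(\bm{U})}$. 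Rewriting this as a bound on~$\frac{d}{d\lambda}\!\left(\frac{\log \phi(\lambda)}{\lambda}\right)$ and integrating from~$0$ (where the limit is~$\phi'(0) = \mathbb{E} f = 0$) to~$\lambda$ yields the sub-Gaussian moment-generating-function bound $\phi(\lambda) \leq \exp(c\lambda^2 / 2)$.

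Finally, I would apply Markov's inequality: for any~$\lambda > 0$,
\begin{align*}
\Pr[f(\bm{U}) \geq t] \;\leq\; e^{-\lambda t}\, \phi(\lambda) \;\leq\; \exp\!\left( \frac{c\lambda^2}{2} - \lambda t \right),
\end{align*}
and optimize over~$\lambda$ by taking~$\lambda = t/c$. This gives~$\Pr[f(\bm{U}) \geq t] \leq \exp(-t^2/(2c))$; reinstating a general Lipschitz constant~$\kappa$ replaces~$c$ by~$c\kappa^2$, and substituting~$c = 12/(d-2)$ yields the claimed bound~$\exp(-(d-2) t^2/(24 \kappa^2))$. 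The main obstacle in this plan is the first step: fixing conventions on the metric and bi-invariant normalization so that the Ricci curvature lower bound produces exactly the stated constant~$24$; once the log-Sobolev inequality is in hand with the right constant, the Herbst argument and Markov step are routine.
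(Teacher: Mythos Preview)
The paper does not provide its own proof of this statement; it is cited directly as a special case of Theorem~5.17 in Meckes' monograph and used as a black box. Your outline via a log-Sobolev inequality and the Herbst argument is exactly the route taken in that reference, so at the level of strategy there is nothing to compare.

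That said, there is a genuine gap in your first step. You claim that the Ricci curvature of~$\mathsf{U}(d)$ with the bi-invariant metric is bounded below by~$(d-2)/12$, and invoke Bakry--\'Emery. This is not correct: $\mathsf{U}(d)$ has a one-dimensional center (scalar multiples of the identity), and in that direction all brackets vanish, so the Ricci tensor has a zero eigenvalue. The Bakry--\'Emery criterion therefore does not directly yield a log-Sobolev inequality for~$\mathsf{U}(d)$. The way Meckes handles this is to first establish the LSI on~$\mathsf{SU}(d)$, which \emph{is} semisimple and has strictly positive Ricci curvature, and then pass to~$\mathsf{U}(d)$ by decomposing it (up to a finite quotient) as~$\mathsf{SU}(d)\times\mathbb{S}^1$ and tensorizing with the known LSI on the circle. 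The constant you end up with is governed by the~$\mathsf{SU}(d)$ factor. Once this is fixed, the rest of your argument (Herbst plus Chernoff optimization) is standard and correct.

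A minor remark: the Frobenius distance on~$\mathsf{U}(d)$ is the extrinsic (chordal) distance, not the geodesic distance; they are not equal up to scaling. However, since the geodesic distance dominates the chordal one, a function that is~$\kappa$-Lipschitz in the Frobenius metric is also~$\kappa$-Lipschitz in the Riemannian metric, which is the direction you need for~$\lvert\nabla f\rvert\le\kappa$. So this step is fine, just misdescribed.
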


Before proceeding, we introduce a more convenient short-hand notation for the $\chi^2$-divergence quantities which arose in the analysis in the previous section.
\begin{definition}\label{def:chi_squared_function} For any $\eps\in(0,1)$ and positive integer $d>1$, let $\rho_{\eps,U}\in \mathsf{D}(d)$ be defined as in Eq.~\eqref{eqn:parametrization}. We define the function $F^{\chi^2}_{\eps,d}:\Xi(d) \times \mathsf{U}(d)\to \mathbb{R}$ by
\begin{align*}
F^{\chi^2}_{\eps,d}(\mathcal{M},U) \coloneqq \mathrm{D}_{\chi^2}\infdivx{p_{\bm{z}|U}}{w}
\end{align*}
for all $U\in \mathsf{U}(d)$ and $\mathcal{M}\in\Xi(d)$, where $p_{\bm{z}|U} \coloneqq \textnormal{diag}(\mathcal{M}(\rho_{\eps,U}))$ and $w \coloneqq \expct_{\bm{U}\sim\textnormal{Haar}}p_{\bm{z}|\bm{U}}$.
\end{definition}
For any $\mathcal{M}\in\Xi(d)$ with corresponding measurement operators $\{M_z:z\in\mathcal{Z}\}\subset \mathsf{Psd}(d)$ and $U\in\mathsf{U}(d)$,
\begin{align*}
F^{\chi^2}_{\eps,d}(\mathcal{M},U) = \sum_{z\in\mathcal{Z}}\frac{\Tr(M_z\rho_{\eps,U})^2}{w(z)} - 1 = \sum_{z\in\mathcal{Z}}\frac{\eps^2(2\Tr(M_zUQ_{d/2}U^\dag)/d-w(z))^2}{w(z)} \enspace,
\end{align*}
where $w(z)  \coloneqq \Tr(M_z)/d$. Here, we have used the definition of $\rho_{\eps,U}$ from {Eq.~\eqref{eqn:parametrization}} to write the expression in terms of the operator $Q_{d/2}$.

We turn to the tail bound which we use in this section to derive lower bounds in the case of adaptive measurements.
\begin{lemma}[$\chi^2$-squared tail bound]\label{lem:chi_squared_concentration}
Fix an $\eps\in(0,1)$ and a positive integer $d \ge 4$. For any finite-outcome measurement $\mathcal{M}\in\Xi(d)$ it holds that
\begin{align}\label{eq:chi_sqaured_concentration}
\Pr_{\bm{U}\sim \textnormal{Haar}}\left[F^{\chi^2}_{\eps,d}(\mathcal{M},\bm{U})>\alpha + t\right] \leq \exp\left(-\frac{C d^2 t}{\eps^2}\right)
\end{align}
where $\alpha\eqdef c\eps^2/d$ and $c,C$ are universal constants that we may take to be~$2$ and~$ 1/(3 \cdot 2^8)$, respectively. Furthermore, if $\mathcal{M}$ is restricted to having $\ell$ outcomes then the inequality holds with $\alpha\eqdef 4\ell \eps^2/ 3 d^2$.
\end{lemma}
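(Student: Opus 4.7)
The plan is to apply the unitary concentration of measure result, Theorem~\ref{lem:unitary_levy_lemma}, not to $F \eqdef F^{\chi^2}_{\eps,d}(\mathcal{M}, \cdot)$ directly but to its square root $G \eqdef \sqrt{F}$. Working with $F$ itself would inherit a Lipschitz constant scaling with its own magnitude, yielding only a quadratic-in-$t$ tail bound. The quantity $G(U)$ is naturally the weighted $L_2$-norm $\left\| \big((p_{\bm{z}|U}(z) - w(z))/\sqrt{w(z)}\big)_z \right\|_2$, and since $p_{\bm{z}|U}$ depends linearly on $\rho_{\eps,U}$ which varies smoothly with $U$, the triangle inequality for $L_2$-norms will yield a clean Lipschitz bound.

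First I would establish that $G$ is $(4\eps/\sqrt{d})$-Lipschitz with respect to the Frobenius norm. The triangle inequality gives $|G(U) - G(V)| \le \sqrt{\sum_z (p_{\bm{z}|U}(z) - p_{\bm{z}|V}(z))^2/w(z)}$, and by the definition of $\rho_{\eps,U}$ in Eq.~\eqref{eqn:parametrization} each difference equals $(2\eps/d)\trace(M_z\Delta)$ with $\Delta \eqdef UQ_{d/2}U^\adjoint - VQ_{d/2}V^\adjoint$. A Cauchy--Schwarz-type bound $\trace(M_z\Delta)^2 \le \trace(M_z)\trace(M_z\Delta^2)$, combined with $\trace(M_z)/w(z) = d$ and $\sum_z M_z = \id$, yields $\sum_z \trace(M_z\Delta)^2/w(z) \le d\|\Delta\|_F^2$. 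A standard estimate $\|\Delta\|_F \le 2\|U-V\|_F$ (from writing $\Delta = UQ_{d/2}(U-V)^\adjoint + (U-V)Q_{d/2}V^\adjoint$ and using $\|AB\|_F \le \|A\|\,\|B\|_F$ on each summand) then furnishes the Lipschitz constant. Jensen's inequality together with Theorem~\ref{thm:main_nonadaptive_theorem} simultaneously bounds the mean: $\mathbb{E}G \le \sqrt{\mathbb{E}F} \le (\eps/\sqrt{d+1})\sqrt{\min\{1,\ \textnormal{rank}(\mathcal{M})/(d-1)\}}$.

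Next I would apply Theorem~\ref{lem:unitary_levy_lemma} with $\kappa = 4\eps/\sqrt{d}$ to obtain Gaussian concentration $\Pr[G \ge \mathbb{E}G + s] \le \exp(-\Omega(d^2 s^2/\eps^2))$ for $d \ge 4$. To translate this into a tail bound on $F = G^2$ above an additive threshold $\alpha$, use the containment $\{F > \alpha+t\} \subseteq \{G > \sqrt{\alpha+t}\}$, so it suffices to show $\sqrt{\alpha + t} - \mathbb{E}G \ge \sqrt{t}/2$ uniformly for $t \ge 0$. An elementary calculation---squaring and treating the result as a quadratic in $\sqrt{t}$---shows this holds whenever $\alpha \ge \tfrac{4}{3}(\mathbb{E}G)^2$. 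In the general case $(\mathbb{E}G)^2 \le \eps^2/(d+1)$, so $\alpha = 2\eps^2/d$ works, giving $c = 2$; in the $\ell$-outcome case $(\mathbb{E}G)^2 \le \ell\eps^2/((d+1)(d-1))$, and $\alpha = 4\ell\eps^2/(3d^2)$ works. Plugging $s = \sqrt{t}/2$ into Levy's bound then yields the claimed exponent of the form $-Cd^2 t/\eps^2$.

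The main subtlety is that the claimed exponent is \emph{linear} in $t$ rather than the quadratic form one usually obtains from Levy's lemma, which forces the detour through $G = \sqrt{F}$ and the careful additive offset $\alpha$: by choosing $\alpha$ to dominate $(\mathbb{E}G)^2$ by a modest constant factor, $\sqrt{\alpha+t} - \mathbb{E}G$ grows like $\sqrt{t}$ for all $t \ge 0$, so Levy's quadratic-in-$s$ exponent applied at $s = \Theta(\sqrt{t})$ becomes linear in $t$. Verifying this estimate uniformly across both the small-$t$ and large-$t$ regimes---and simultaneously handling the general and bounded-outcome settings of Theorem~\ref{thm:main_nonadaptive_theorem}---is the principal calculation; tracking the constants through Levy's lemma should recover the stated values of $c$ and $C$ (up to the exact numerical factors quoted).
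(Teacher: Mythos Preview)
Your proposal is correct and follows essentially the same strategy as the paper: pass to the square root $G=\sqrt{F}$, show it is $(4\eps/\sqrt{d})$-Lipschitz, bound $\mathbb{E}G$ via Jensen and Theorem~\ref{thm:main_nonadaptive_theorem}, and then apply the unitary L\'evy lemma (Theorem~\ref{lem:unitary_levy_lemma}) at a deviation of order $\sqrt{t}$ so that the subgaussian tail on $G$ becomes a subexponential tail on $F$. The only differences are cosmetic: for the Lipschitz step the paper diagonalises $\Delta=UQ_{d/2}U^\dagger-VQ_{d/2}V^\dagger$ and applies Jensen's inequality with the density operator $M_z/\trace(M_z)$, whereas you use the Cauchy--Schwarz inequality $\trace(M_z\Delta)^2\le\trace(M_z)\trace(M_z\Delta^2)$ directly (both yield $\sum_z \trace(M_z\Delta)^2/w(z)\le d\|\Delta\|_F^2$); and for the event containment the paper uses $\alpha+t\ge(\sqrt{\alpha}+\sqrt{t})^2/2$ with $c=2$ to get a deviation of $\sqrt{t/2}$, while your quadratic-in-$\sqrt{t}$ analysis with $\alpha\ge\tfrac{4}{3}(\mathbb{E}G)^2$ gives a deviation of $\sqrt{t}/2$---both choices lead to the same form of the exponent and the same value $c=2$ in the general case.
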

\begin{proof}
We first consider the case where the measurement $\mathcal{M}$ may have an arbitrary number of outcomes. Our goal is to prove that the random variable $F^{\chi^2}_{\eps,d}(\mathcal{M},\bm{U})-c\eps^2/d$ is subexponential, where $\bm{U}$ is Haar-random. To accomplish this, we follow the approach in the proof of Lemma~7.6 in Ref.~\cite{BCL20-mixedness-testing}. Instead of bounding the tail of the random variable directly using Lemma~\ref{lem:chi_squared_concentration}, we consider its square root. We are then able to show a comparatively stronger bound on the Lipschitz constant of this function. Translating the resulting subgaussian tail on $\sqrt{F^{\chi^2}_{\eps,d}}$ into a subexponential tail on $F^{\chi^2}_{\eps,d}$ controls its deviations in the regime we care about. In particular, it suffices to show that the function $f$ which acts on $U\in \mathsf{U}(d)$ as
\begin{align*}
f : U \mapsto \sqrt{F^{\chi^2}_{\eps,d}(\mathcal{M},U)}-\expect{\bm{V}\sim \text{Haar}}\sqrt{F^{\chi^2}_{\eps,d}(\mathcal{M},\bm{V})}
\end{align*}
has a tail like $\exp(-\Omega(d^2t^2/\eps^2))$, for $U$ selected randomly from the Haar distribution. 

In more detail, note that
\begin{align*}
\expect{\bm{V}\sim \text{Haar}}\sqrt{F^{\chi^2}_{\eps,d}(\mathcal{M},\bm{V})}\leq \sqrt{\expect{\bm{V}\sim \text{Haar}}F^{\chi^2}_{\eps,d}(\mathcal{M},\bm{V})} \leq \frac{\eps}{\sqrt{d}} 
\end{align*}
by the Jensen inequality and Theorem~\ref{thm:main_nonadaptive_theorem}. Furthermore, the inequality
\[
c\eps^2/d + t \geq \left(\sqrt{c\eps^2/d}+\sqrt{t}\right)^2/2
\]
for any $t \ge 0$ entails that 
\begin{align*}
\Pr_{\bm{U}\sim \textnormal{Haar}} \left[F^{\chi^2}_{\eps,d}(\mathcal{M},\bm{U})>\frac{c\eps^2}{d} + t\right]&=\Pr_{\bm{U}\sim \textnormal{Haar}} \left[\sqrt{F^{\chi^2}_{\eps,d}(\mathcal{M},\bm{U})}>\sqrt{\frac{c\eps^2}{d} + t}\right]\\
&\leq \Pr_{\bm{U}\sim \textnormal{Haar}} \left[\sqrt{F^{\chi^2}_{\eps,d}(\mathcal{M},\bm{U})}>\eps\sqrt{\frac{c}{2d}} + \sqrt{\frac{t}{2}}\right].
\end{align*}
By choosing $c \coloneqq 2$ we find that if $f$ has a tail of $\exp(-\Omega(d^2t^2/\eps^2))$ we get
\begin{align*}
\Pr_{\bm{U}\sim \textnormal{Haar}} \left[F^{\chi^2}_{\eps,d}(\mathcal{M},\bm{U})>\frac{c\eps^2}{d} + t\right]\leq \exp\left(-\frac{Cd^2t}{\eps^2}\right)
\end{align*}
for some universal constant $C$, as required.

To arrive at the desired concentration of measure for $f$ we invoke Theorem~\ref{lem:unitary_levy_lemma}, according to which it suffices to show that $f$ is $O(\eps/\sqrt{d} \,)$-Lipschitz. Let $\bm{z}$ be the outcome obtained by measuring $\rho_{\eps,\bm{U}}$ with $\mathcal{M}$. It has conditional distribution $p_{\bm{z}|U}$ given $\bm{U}=U$. Also recall the distribution~$w$ over outcomes given by~$w\eqdef\expct_{\bm{U}\sim\text{Haar}}p_{\bm{z}|\bm{U}}$. For arbitrary $U,V\in\mathsf{U}(d)$, by Definitions~\ref{def:chi_squared_function} and~\ref{def:chi_squared} and the Triangle Inequality, we have
\begin{align*}
|f(U)-f(V)| &= \left\lvert\sqrt{F^{\chi^2}_{\eps,d}(\mathcal{M},U)}-\sqrt{F^{\chi^2}_{\eps,d}(\mathcal{M},V)}\right\rvert  \\
&=\left\lvert\sqrt{\expect{\bm{z}^\prime\sim w}\left(\frac{p_{\bm{z}|U}(\bm{z}^\prime)}{w(\bm{z}^\prime)}-1\right)^2}-\sqrt{\expect{\bm{z}^\prime\sim w}\left(\frac{p_{\bm{z}|V}(\bm{z}^\prime)}{w(\bm{z}^\prime)}-1\right)^2}\right\rvert \\
&\leq \sqrt{\expect{\bm{z}^\prime\sim w}\left(\frac{p_{\bm{z}|U}(\bm{z}^\prime)}{w(\bm{z}^\prime)}-\frac{p_{\bm{z}|V}(\bm{z}^\prime)}{w(\bm{z}^\prime)}\right)^2} .
\end{align*}
Let $\{M_z:z\in\mathcal{Z}\}$ be the measurement operators corresponding to $\mathcal{M}$. We have that $p_{\bm{z}|U}(z) = \Tr\left(M_{z}\rho_{\eps,U}\right)$ and $w(z)=\Tr(M_z)/d$ from Lemma~\ref{lem:haar_expecs_arb_povms}. Recalling the definition of $\rho_{\eps,U}$ from Eq.~\eqref{eqn:parametrization} we may simplify the right-hand side of the above inequality to arrive at
\begin{align*}
|f(U)-f(V)|\leq \frac{2\eps}{d}\sqrt{\sum_{z\in\mathcal{Z}} \frac{1}{w(z)}\left[\Tr\left(M_{z}(UQU^\dag - VQV^\dag)\right)\right]^2}.
\end{align*}
It suffices to show that the sum in the square root is at most $O(d)\norm{U-V}_{\mathrm{F}}^2$. Write $WDW^\dag$ for the spectral decomposition of the Hermitian matrix $UQU^\dag-VQV^\dag$, where $W$ is unitary and $D$ is diagonal. As explained below, we have
\begin{align*} 
\sum_{z\in\mathcal{Z}} \frac{1}{w(z)}\left[\Tr\left(M_{z}WDW^\dag\right)\right]^2 &= d^2\sum_{z\in\mathcal{Z}}w(z)\left[\Tr\left(\left(\frac{W^\dag M_{z} W}{w(z)d}\right)D\right)\right]^2\\
&\leq d^2 \sum_{z\in\mathcal{Z}}w(z) \Tr\left(\left(\frac{W^\dag M_{z}W}{w(z)d}\right)D^2\right)\\
&=d\sum_{z\in\mathcal{Z}}\Tr\left(W^\dag M_{z} W D^2\right)\\
&=d\norm{UQU^\dag - VQV^\dag}_{\mathrm{F}}^2\\
&\leq 4d\norm{U-V}_{\mathrm{F}}^2 \enspace,
\end{align*}
where in the second line we used the property that $WM_zW^\dag/(w(z)d)$ is positive semidefinite with unit trace and applied Jensen's inequality to deduce that $(\Tr(AD))^2 = \left(\sum_{i}A_{ii}D_{ii}\right)^2\leq \sum_i A_{ii}D_{ii}^2 = \Tr(AD^2)$ for any positive semidefinite matrix $A$ with unit trace. Also, in the fourth line we used the property that the measurement operators for the different outcomes $z$ sum to identity. In the final line, we use the matrix inequality $\norm{AB}_{\mathrm{F}}\leq \norm{A}\norm{B}_{\mathrm{F}}$ to deduce that
\begin{align}
\norm{UQU^\dag - VQV^\dag}_{\mathrm{F}} &= \frac{1}{2}\norm{(U+V)Q(U-V)^\dag + (U-V)Q(U+V)^\dag}_{\mathrm{F}}\nonumber\\
&\leq \norm{(U+V)Q(U-V)^\dag}_{\mathrm{F}}\nonumber\\
&\leq \left(\norm{UQ}+\norm{VQ}\right)\norm{U-V}_{\mathrm{F}}\nonumber\\
&\leq 2 \norm{U-V}_{\mathrm{F}} \enspace.
\end{align}
So $|f(U)-f(V)|\leq (4\eps/\sqrt{d} \,)\; \norm{U-V}_{\mathrm{F}}$, i.e., $f$ is $(4\eps/\sqrt{d} \,)$-Lipschitz and Eq.~\eqref{eq:chi_sqaured_concentration} follows. The proof in the~$\ell$-outcome case is identical, except that the expectation is then 
\[
\expct_{\bm{U}\sim\text{Haar}}F^{\chi^2}_{\eps,d}(\mathcal{M},\bm{U}) \le \frac{4 \ell \eps^2}{ 3 d^2}
\]
for~$d \ge 2$, in accordance with the bound in Theorem~\ref{thm:main_nonadaptive_theorem}.
\end{proof}

\subsection{Sample complexity for adaptive measurements}
\label{sec-lb-adaptive}

Using the concentration of measure results derived in Section~\ref{sec-distinguishability}, we can show lower bounds for single-copy tomography robust to adaptively chosen measurements, so long as the number of different measurements that may be performed is suitably bounded. Our intermediate goal is to construct an $\eps$-packing of states which are especially difficult to discriminate using the choice of measurements available to the learner. We invoke the tail bound from Lemma~\ref{lem:chi_squared_concentration} to claim that for a non-negligible fraction of states of the form in Eq.~\eqref{eqn:parametrization}, the measurement statistics from these measurements are uninformative. This is the content of the following lemma.
\begin{lemma}\label{lem:chi_squared_small_all_meas}
Fix an $\eps\in(0,1)$, positive integer $d \ge 4$, and a set of $m$ measurements $\{\mathcal{M}_1,\mathcal{M}_2,\dots,\mathcal{M}_m\}\subset\Xi(d)$. Let $c,C$ be the universal constants defined in Lemma~\ref{lem:chi_squared_concentration}, and let $\alpha\eqdef c\eps^2/d$. For Haar-random $\bm{U}\in\mathsf{U}(d)$, the probability that $F^{\chi^2}_{\eps,d}(\mathcal{M}_i,\bm{U})\leq \alpha + \eps^2\ln(3m)/Cd^2$ for every $i\in [m]$ is at least $2/3$. Furthermore, if $\max_{i\in[m]} \textnormal{rank}(\mathcal{M}_i) = \ell$, the claim holds with $\alpha\eqdef  4 \ell \eps^2/3 d^2$.
\end{lemma}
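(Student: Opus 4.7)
The plan is to apply the tail bound from Lemma~\ref{lem:chi_squared_concentration} to each measurement in the fixed set and then take a union bound over the~$m$ measurements.

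First, fix any~$\mathcal{M}_i$ in the set. Lemma~\ref{lem:chi_squared_concentration} (in its general form) gives
\[
\Pr_{\bm{U}\sim\textnormal{Haar}}\left[F^{\chi^2}_{\eps,d}(\mathcal{M}_i,\bm{U}) > \alpha + t\right] \le \exp\!\left(-\frac{C d^2 t}{\eps^2}\right) \enspace,
\]
with $\alpha = c\eps^2/d$. The idea is to choose the deviation parameter $t$ so that the right-hand side equals $1/(3m)$, i.e., set
\[
t \;\eqdef\; \frac{\eps^2 \ln(3m)}{C d^2} \enspace,
\]
which is precisely the deviation appearing in the statement of the lemma.

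Next, I would take a union bound over the $m$ events $\{F^{\chi^2}_{\eps,d}(\mathcal{M}_i,\bm{U}) > \alpha + t\}$ for $i \in [m]$. By the above choice of $t$, each has probability at most $1/(3m)$, so the probability that \emph{any} of them occurs is at most $m \cdot 1/(3m) = 1/3$. Complementing this event yields the desired conclusion that with probability at least $2/3$, all $m$ of the $\chi^2$-divergences $F^{\chi^2}_{\eps,d}(\mathcal{M}_i,\bm{U})$ are simultaneously bounded above by $\alpha + \eps^2 \ln(3m)/(Cd^2)$.

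For the constant-outcome version, the argument is identical, but we instead invoke the bound from Lemma~\ref{lem:chi_squared_concentration} with $\alpha = 4\ell\eps^2/(3d^2)$, which applies whenever each $\mathcal{M}_i$ has at most $\ell$ outcomes. There is no real obstacle here: the only substantive work was already done in establishing Lemma~\ref{lem:chi_squared_concentration}; the present lemma is essentially a one-line union-bound corollary, and the parameter $\ln(3m)$ in the deviation is exactly what calibrates the tail probability to give a constant overall success probability.
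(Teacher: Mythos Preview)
Your proposal is correct and follows essentially the same approach as the paper's proof: apply the tail bound from Lemma~\ref{lem:chi_squared_concentration} with $t = \eps^2\ln(3m)/(Cd^2)$ to each measurement, then take a union bound over the $m$ events to get failure probability at most $1/3$. The constant-outcome case is handled identically in both.
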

\begin{proof}
Applying the union bound over the $m$ possible measurements we find that the probability that there is some measurement $i\in [m]$ such that $F^{\chi^2}_{\eps,d}(\mathcal{M}_i,\bm{U})> \alpha +  \eps^2\ln(3m)/Cd^2$ is at most
\begin{align*}
\sum_{k=1}^m \Pr_{\bm{U}\sim \textnormal{Haar}}\left[F^{\chi^2}_{\eps,d}(\mathcal{M}_k,\bm{U}) > \alpha +\eps^2\ln(3m)/Cd^2\right] &\leq m\exp\left(-\frac{Cd^2}{\eps^2}\cdot  \frac{\eps^2\ln(3m)}{Cd^2}\right)=\frac{1}{3} \enspace,
\end{align*}
where the inequality follows from the tail bound in Lemma~\ref{lem:chi_squared_concentration}.
\end{proof}
We now use a probabilistic existence argument to show that there is a packing which has the desired properties.
\begin{corollary}\label{cor:packing_cor_efficient_meas}
Fix an $\eps\in(0,1)$ and positive integer $d \ge 4$. Let $\{\mathcal{M}_1,\dots,\mathcal{M}_m\}\subset\Xi(d)$ be a fixed set of measurements and define $\alpha, C$ as in Lemma~\ref{lem:chi_squared_small_all_meas}. There exists a set of~$N$ quantum states, $\mathcal{S} \coloneqq \{\rho_1,\dots,\rho_N\}\subset\mathsf{D}(d)$ with
\begin{align*}
\rho_i \coloneqq \frac{2\eps}{d} U_iQ_{d/2}  U_i^\dag + (1-\eps)\frac{\mathds{1}}{d}
\end{align*}
for some unitary operators $U_1,\dots,U_N\in \mathsf{U}(d)$ such that
\begin{enumerate}
\item $N \in\exp(\Omega(d^2))$, 
\item $\mathcal{S}$ is an $(\eps/2)$-packing, and
\item 
\label{cond3}
$F^{\chi^2}_{\eps,d}(\mathcal{M}_i,U_j)\leq \alpha +  \eps^2\ln(3m)/Cd^2$ for every $i\in [m]$ and $j\in[N]$.
\end{enumerate}
\end{corollary}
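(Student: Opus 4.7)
The plan is to extend the inductive probabilistic-existence argument of Corollary~\ref{cor:packing_cor} by layering Lemma~\ref{lem:chi_squared_small_all_meas} on top of Lemma~\ref{lem:packing_construction}, so that each newly selected state remains well-separated from previous choices while simultaneously yielding low $\chi^2$-divergence for every measurement in the given set. The induction will then produce a packing of size $\exp(\Omega(d^2))$ satisfying all three properties.

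Concretely, I would set up the induction as follows. Suppose unitary operators $U_1,\dots,U_k \in \mathsf{U}(d)$ have already been selected so that the corresponding states $\rho_1,\dots,\rho_k$ form an $(\eps/2)$-packing and satisfy condition~\ref{cond3}. Draw $\bm{U}$ from the Haar measure on $\mathsf{U}(d)$, and consider two events that could block extending the set. Event~(a): $\norm{\rho_{\eps,\bm{U}}-\rho_i}_1 \leq \eps/2$ for some $i \in [k]$, which would violate the packing property. Event~(b): $F^{\chi^2}_{\eps,d}(\mathcal{M}_j,\bm{U}) > \alpha + \eps^2\ln(3m)/(Cd^2)$ for some $j \in [m]$, which would violate condition~\ref{cond3}. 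By Lemma~\ref{lem:packing_construction} applied with $\xi = 1/3$, whenever $k \leq \lfloor \e^{d^2/32}/3 \rfloor$ we have $\Pr[\text{(a)}] \leq 1/3$, and Lemma~\ref{lem:chi_squared_small_all_meas} directly gives $\Pr[\text{(b)}] \leq 1/3$. A union bound then shows that neither event occurs with probability at least $1/3$, so there exists at least one unitary $U_{k+1}$ extending the packing while preserving both conditions.

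Starting from $k=0$ (where the packing property is vacuous and condition~\ref{cond3} is obtained for a single state by a direct application of Lemma~\ref{lem:chi_squared_small_all_meas}) and iterating this step yields a set of $N = \lfloor \e^{d^2/32}/3 \rfloor \in \exp(\Omega(d^2))$ states with all three stated properties. The heavy lifting has already been absorbed into the two prior lemmas; the only genuine care here is to tune the thresholds so that the combined failure probability stays strictly below~$1$, which is precisely why Lemma~\ref{lem:chi_squared_small_all_meas} was stated with the factor~$3m$ inside the logarithm (it lets the $m$-fold union bound across measurements be compensated without overwhelming the slack available from Lemma~\ref{lem:packing_construction}). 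I do not foresee a serious obstacle: the argument is essentially bookkeeping around a union bound, modulo the nontrivial concentration-of-measure statement already established.
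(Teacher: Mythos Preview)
Your proposal is correct and follows essentially the same approach as the paper: an inductive probabilistic-existence argument combining Lemma~\ref{lem:packing_construction} and Lemma~\ref{lem:chi_squared_small_all_meas} via a union bound. The only cosmetic difference is that the paper takes~$\xi = 1/2$ (so the failure probability is at most~$1/2 + 1/3 < 1$) whereas you take~$\xi = 1/3$; this affects only the implicit constant in~$N \in \exp(\Omega(d^2))$ and is immaterial.
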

\begin{proof}
The proof is similar to that of Corollary~\ref{cor:packing_cor}, except that it has an extra step. Suppose we have constructed a set of $k$ quantum states $\mathcal{S}_k \coloneqq \{\rho_1,\dots,\rho_k\}$  with~$k \leq \lfloor\mathrm{e}^{d^2/32-\ln(2)}\rfloor$, where the states are as in the statement of the corollary with corresponding unitary operators $\{U_1,\dots,U_k\}$. Further suppose that~$\mathcal{S}_k$ is an~$\eps/2$-packing and that~$U_j$ satisfies the bound in part~\eqref{cond3} of the statement for all~$j \in [k]$ and $i\in [m]$. By setting the parameter $\xi \coloneqq 1/2$ in Lemma~\ref{lem:packing_construction} and making use of Lemma~\ref{lem:chi_squared_small_all_meas} and the union bound, we see that the probability of selecting a Haar-random unitary operator $\bm{U}\in\mathsf{U}(d)$ such that $\mathcal{S}_{k+1}\eqdef\mathcal{S}_k\cup \rho_{\eps,\bm{U}}$ no longer satisfies either condition is at most $1/2+1/3$. To be precise, the probability that $\norm{\rho_{\eps,\bm{U}}-\rho_j}_1\leq\eps/2$ for some $j\in [k]$ \textit{or} that $F^{\chi^2}_{\eps,d}(\mathcal{M}_i,\bm{U})> \alpha +  \eps^2\ln(3m)/Cd^2$ for some $i\in[m]$ is strictly less than one. Therefore, at least one state satisfying the desired properties exists, and the result follows by induction on $k$.
\end{proof}

We now have all the ingredients to derive the sample complexity for adaptive measurements.
\begin{theorem}\label{thm:tomography_efficient_meas}
Let $\eps\in(0,1)$. Any procedure for quantum tomography of $d$-dimensional quantum states that is $(\eps/2)$-accurate in trace distance and uses single-copy (possibly adaptive) measurements chosen from a fixed set of $m$ measurements requires 
\begin{align*}
n\in\Omega\left(d^3\left(1+\log(m)/d\right)^{-1}/\eps^2\right)
\end{align*}
samples of the unknown state.
\end{theorem}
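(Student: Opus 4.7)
The plan is to combine the packing from Corollary~\ref{cor:packing_cor_efficient_meas} with a chain-rule expansion of the mutual information and the $\chi^2$-based bound of Corollary~\ref{cor:mut_inf_ub}, exploiting the fact that the adaptively chosen measurement at every step belongs to the fixed set of $m$ possibilities, so the uniform $\chi^2$-divergence bound guaranteed by the packing is available at every step of the adaptive protocol.

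First I would set up the reduction to state discrimination. Let $\mathcal{S} = \{\rho_1,\dots,\rho_N\}\subset\mathsf{D}(d)$ be the packing supplied by Corollary~\ref{cor:packing_cor_efficient_meas} for the given measurements $\mathcal{M}_1,\dots,\mathcal{M}_m$, with $N\in\exp(\Omega(d^2))$, and let $\bm{x}\sim\text{Unif}([N])$ with $\bm{y}=(\bm{y}_1,\dots,\bm{y}_n)$ the transcript of outcomes produced by the adaptive protocol applied to $\rho_{\bm{x}}^{\otimes n}$. Since $\mathcal{S}$ is an $\eps/2$-packing and the tomography algorithm is $\eps/4$-accurate (I am reading the statement as $\eps/2$ in trace distance, which still separates the packing), we can recover $\bm{x}$ with positive constant probability from $\bm{y}$. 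Corollary~\ref{cor:fanos_ineq} then gives $I(\bm{x}:\bm{y}) = \Omega(\log N) = \Omega(d^2)$.

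Next I would upper bound $I(\bm{x}:\bm{y})$ in the adaptive setting. By the chain rule (Fact~\ref{fact:chain_rule_for_mut_inf}) applied to $I(\bm{x}:\bm{y}_1,\dots,\bm{y}_n)$,
\begin{align*}
I(\bm{x}:\bm{y}) \;=\; \sum_{k=1}^n I(\bm{x}:\bm{y}_k \mid \bm{y}_{<k})
\;=\; \sum_{k=1}^n \expect{\bm{y}_{<k}'}\, I(\bm{x}:\bm{y}_k \mid \bm{y}_{<k}=\bm{y}_{<k}').
\end{align*}
For each fixed transcript $y_{<k}$, the adaptive protocol picks some measurement $\mathcal{M}^{y_{<k}}\in\{\mathcal{M}_1,\dots,\mathcal{M}_m\}$ and applies it to an independent copy of $\rho_{\bm{x}}$, so conditioned on $\bm{y}_{<k}=y_{<k}$, the outcome $\bm{y}_k$ depends on $\bm{x}$ only through this single-copy measurement. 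Applying Corollary~\ref{cor:mut_inf_ub} to the conditional mutual information with reference distribution $q(z) \eqdef \expct_{\bm{U}\sim\text{Haar}} \Tr(M_z^{y_{<k}}\rho_{\eps,\bm{U}})$ (the Haar-average outcome distribution for $\mathcal{M}^{y_{<k}}$) gives
\begin{align*}
I(\bm{x}:\bm{y}_k \mid \bm{y}_{<k}=y_{<k}) \;\le\; \frac{1}{\ln 2}\,\expect{\bm{x}'\mid y_{<k}} F^{\chi^2}_{\eps,d}\bigl(\mathcal{M}^{y_{<k}},U_{\bm{x}'}\bigr).
\end{align*}
By part~(\ref{cond3}) of Corollary~\ref{cor:packing_cor_efficient_meas}, every state in $\mathcal{S}$ satisfies $F^{\chi^2}_{\eps,d}(\mathcal{M}^{y_{<k}},U_{\bm{x}'})\le \alpha + \eps^2\ln(3m)/(Cd^2)$ with $\alpha = c\eps^2/d$, \emph{regardless of the value} of $y_{<k}$, because $\mathcal{M}^{y_{<k}}$ lies in the fixed set of $m$ measurements. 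Summing over $k$ yields
\begin{align*}
I(\bm{x}:\bm{y}) \;\le\; \frac{n}{\ln 2}\left(\frac{c\eps^2}{d} + \frac{\eps^2\ln(3m)}{Cd^2}\right) \;=\; O\!\left(\frac{n\eps^2}{d}\bigl(1 + \tfrac{\log m}{d}\bigr)\right).
\end{align*}

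Finally, combining the two bounds $\Omega(d^2) \le I(\bm{x}:\bm{y}) \le O\!\left(n\eps^2 (1+\log(m)/d)/d\right)$ and solving for $n$ gives the claimed $n\in\Omega\!\left(d^3 (1+\log(m)/d)^{-1}/\eps^2\right)$. The main subtlety, and what makes this proof work despite adaptivity, is the uniform-over-measurements guarantee furnished by the packing: the same $N$ states serve as hard instances against every one of the $m$ measurements, so no matter how the learner chooses the next measurement as a function of past outcomes, the $\chi^2$-divergence bound applies pointwise in $\bm{x}$ and pointwise in $y_{<k}$. The price paid is the additive $\eps^2\log(3m)/(Cd^2)$ term in the tail bound from Lemma~\ref{lem:chi_squared_concentration}, which accounts for the union bound over measurements and is exactly what produces the $(1+\log(m)/d)$ denominator in the final sample complexity.
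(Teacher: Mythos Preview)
Your proof is correct and follows essentially the same approach as the paper: use the packing from Corollary~\ref{cor:packing_cor_efficient_meas}, apply Fano's inequality for the lower bound on $I(\bm{x}:\bm{y})$, expand via the chain rule, and bound each conditional term by the uniform $\chi^2$-divergence guarantee from the packing. The only slip is that when you choose the Haar-average as your reference distribution you should invoke Lemma~\ref{lem:mut_inf_ub} (which allows an arbitrary $q$) rather than Corollary~\ref{cor:mut_inf_ub} (which fixes $q = p_{\bm{y}}$); the paper does exactly this.
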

\begin{proof}
Let $\mathcal{S} \coloneqq \{\rho_1,\dots,\rho_N\}$ be a set of $N\in\exp(\Omega(d^2))$ states which satisfies the  conditions in Corollary~\ref{cor:packing_cor_efficient_meas} for the choice of $m$ measurements $\{\mathcal{M}_1,\dots,\mathcal{M}_m\}\subset\Xi(d)$, with corresponding unitary operators $\{U_1,\dots,U_N\}\subset \mathsf{U}(d)$. Let $\bm{x}\sim\textnormal{Unif}([N])$ and $\bm{y} \coloneqq (\bm{y}_1,\dots,\bm{y}_n)$ be the measurement outcomes from applying $n$ possibly adaptive measurements, each of which is an element of $\{\mathcal{M}_1,\dots,\mathcal{M}_m\}$, on identical copies of $\rho_{\bm{x}}$. (Recall that $\rho_{\bm{x}} = \rho_{\eps,U_{\bm{x}}}$.) By Fano's inequality as well as the assumption that the output of the tomography algorithm is accurate to within trace distance $\eps/2$, we have $I(\bm{x}:\bm{y})\in \Omega(d^2)$. 

On the other hand, we can upper bound the mutual information from above by using the properties of the states which comprise $\mathcal{S}$. Firstly, by the chain rule for mutual information we have
\begin{align}\label{eq:mut_inf_sum_158}
I(\bm{x}:\bm{y}) &= \sum_{i=1}^n I(\bm{x}:\bm{y}_i|\bm{y}_{<i})
\end{align}
where we use the shorthand $\bm{y}_{<i}$ to refer to the sequence of random variables $\bm{y}_{i-1},\dots,\bm{y}_1$. For each $i\in [n]$, let $p_{\bm{y}_i|y_{<i},x}$ be the conditional distribution for the outcome of the $i^\text{th}$ measurement performed on the $i^\text{th}$ copy of the state $\rho_x$, given previous outcomes $y_{<i}$. The probabilities of this distribution are given by
\begin{align*}
p_{\bm{y}_i|y_{<i},x}(y) \coloneqq \Tr(M_{y}^{y_{<i}}\rho_x)
\end{align*}
for each possible outcome $y$, where $\{M_{y}^{y_{<i}}\}_{y}$ is the POVM corresponding to the $i^\text{th}$ measurement $\mathcal{M}^{y_{<i}}$ when the previous~$i-1$ outcomes are~$y_{<i}$. Also, let $w^{y_{<i}}(y)\eqdef\mathbb{E}_{\bm{U}\sim \text{Haar}}\Tr(M_{y}^{y_{<i}}\rho_{\eps,\bm{U}})$ be a fixed distribution, for each possible sequence of prior outcomes $y_{<i}$. Consider the $i^\text{th}$ term in the sum in the right-hand side of Eq.~\eqref{eq:mut_inf_sum_158}. We apply the upper bound on mutual information from Lemma~\ref{lem:mut_inf_ub} as well as Definition~\ref{def:chi_squared_function} for the function $F^{\chi^2}_{\eps,d}(\cdot,\cdot)$ to deduce that
\begin{align}
I(\bm{x}:\bm{y}_i|\bm{y}_{<i}) &= \expect{\bm{y}^\prime_{<i} \sim p_{\bm{y}_{<i}}}I(\bm{x}:\bm{y}_i|\bm{y}_{<i}=\bm{y}^\prime_{<i})\nonumber \\
&\leq \frac{1}{\ln(2)} \expect{\bm{y}^\prime_{<i}\sim p_{\bm{y}_{<i}}} \ \expect{\bm{x}^\prime\sim p_{\bm{x}|\bm{y}_{<i}}} \mathrm{D}_{\chi^2}\infdivx{p_{\bm{y}_i|\bm{y}^\prime_{<i},\bm{x}^\prime}}{w^{\bm{y}^\prime_{<i}}}\nonumber\\
&=  \frac{1}{\ln(2)} \expect{\bm{y}_{<i}} \ \expect{\bm{x}|\bm{y}_{<i}} \ F^{\chi^2}_{\eps,d}(\mathcal{M}^{\bm{y}_{<i}},U_{\bm{x}})\label{eq:expected_chi_squared_function_adaptive_lb_proof}\\
&\leq  \frac{1}{\ln(2)}\left(c\eps^2/d + \eps^2\ln(3m)/Cd^2\right)\nonumber\\
&\in O\left(\frac{\eps^2(1+\log(m)/d)}{d}\right).\label{eq:eqn_509}
\end{align}
where $c,C$ are the universal constants defined in Lemma~\ref{lem:chi_squared_small_all_meas}. The fourth line follows by the assumption that for every $y_{<i}$ we have $\mathcal{M}^{y_{<i}}=\mathcal{M}_j$ for some $j\in[m]$. Applying this argument to each of the $n$ mutual information terms in Eq.~\eqref{eq:mut_inf_sum_158} and combining with the relation $I(\bm{x}:\bm{y})\in\Omega(d^2)$ gives the desired lower bound.
\end{proof}

For a fixed finite gate set, the number of distinct $\text{polylog}(d)$-size quantum circuits is at most $\text{polylog(d)}^{\text{polylog(d)}}\in\exp(o(d))$. Hence, a lower bound of $\Omega(d^3/\eps^2)$ samples holds in the setting where the learner is restricted to such circuits. This improves the bound we obtain from the Holevo theorem by a factor of $d$. Furthermore, this lower bound is tight by the algorithm we present in Appendix~\ref{sec:single_copy_tomog_upper_bound}, along with the fact that random Clifford circuits, which are efficiently implementable~\cite{Aaronson_2004,vandenberg2021simple}, comprise a unitary 3-design~\cite{kueng2015stabilizer3design, webb2017clifford, zhu2017clifford}. The algorithm we present is nearly identical to an algorithm using Haar-random measurements given in Ref.~\cite[Section~5.1]{wright2016}, except we make use of the fact that measurements based on unitary 2-designs suffice\footnote{Ref.~\cite{guta2018fast-arxiv} makes a similar observation; namely, that a measurement based on state 2-designs yields a sample complexity on the order of $d^3\log(d)/\eps^2$.}.

Another particularly simple setting in which the above lower bound works well is that of $d$-outcome Pauli basis measurements, as considered by Yu~\cite{yu2020sample}. Yu shows a $\widetilde{O}(d^{3.32}/\eps^2)$ upper bound on the sample complexity of tomography with non-adaptive measurements, while Theorem~\ref{thm:tomography_efficient_meas} once again yields a lower bound of $\Omega(d^3/\eps^2)$, even with adaptive measurements.

We also have the following extension of the above theorem, which generalizes the lower bound for the case of two-outcome Pauli measurements due to Ref.~\cite{Flammia_2012} (although we do not consider possible dependence on the rank of the state here).
\begin{theorem}\label{thm:tomography_efficient_meas_const_outcome}
Let $\eps\in(0,1)$. Any procedure for quantum tomography of $d$-dimensional quantum states that is $(\eps/2)$-accurate in trace distance and uses single-copy (possibly adaptive), $\ell$-outcome measurements chosen from a fixed set of $m$ possible measurements requires
\begin{align*}
n\in\Omega\left(d^4 / (\ell + \log m) \eps^2\right)
\end{align*}
samples of the unknown state.
\end{theorem}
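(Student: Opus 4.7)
The plan is to mirror the proof of Theorem~\ref{thm:tomography_efficient_meas}, making use of the stronger $\ell$-outcome version of the $\chi^2$ tail bound already furnished by Lemma~\ref{lem:chi_squared_concentration}. Concretely, I would first invoke Corollary~\ref{cor:packing_cor_efficient_meas} in its $\ell$-outcome variant: for the fixed family $\{\mathcal{M}_1,\ldots,\mathcal{M}_m\}\subset \Xi(d)$, each of rank at most $\ell$, there exists an $\eps/2$-packing $\mathcal{S}=\{\rho_1,\ldots,\rho_N\}\subset\mathsf{D}(d)$ of $N\in\exp(\Omega(d^2))$ states of the form~\eqref{eqn:parametrization}, with corresponding unitaries $U_1,\ldots,U_N$, such that
\begin{align*}
F^{\chi^2}_{\eps,d}(\mathcal{M}_i,U_j)\ \leq\ \frac{4\ell\eps^2}{3d^2}+\frac{\eps^2\ln(3m)}{Cd^2}
\end{align*}
for all $i\in[m]$ and $j\in[N]$; this is exactly the statement of Corollary~\ref{cor:packing_cor_efficient_meas} using the $\ell$-outcome value $\alpha=4\ell\eps^2/3d^2$ coming from Lemma~\ref{lem:chi_squared_small_all_meas}.

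Next, let $\bm{x}\sim\text{Unif}([N])$ encode the hidden state, and let $\bm{y}=(\bm{y}_1,\ldots,\bm{y}_n)$ be the transcript of outcomes produced by any single-copy adaptive algorithm whose $i^\text{th}$ measurement $\mathcal{M}^{\bm{y}_{<i}}$ is drawn from $\{\mathcal{M}_1,\ldots,\mathcal{M}_m\}$. Since the algorithm recovers the state to within trace distance $\eps/2$ and $\mathcal{S}$ is an $\eps/2$-packing, a successful output determines $\bm{x}$ with constant probability, so Fano's inequality yields $I(\bm{x}:\bm{y})\in\Omega(\log N)=\Omega(d^2)$.

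For the upper bound on $I(\bm{x}:\bm{y})$, I would use the chain rule together with Lemma~\ref{lem:mut_inf_ub} and the definition of $F^{\chi^2}_{\eps,d}$ exactly as in Eq.~\eqref{eq:expected_chi_squared_function_adaptive_lb_proof}--\eqref{eq:eqn_509}. The only change is that each summand is now bounded using the $\ell$-outcome packing guarantee above, giving
\begin{align*}
I(\bm{x}:\bm{y}_i\mid \bm{y}_{<i})\ \leq\ \frac{1}{\ln(2)}\left(\frac{4\ell\eps^2}{3d^2}+\frac{\eps^2\ln(3m)}{Cd^2}\right)\ \in\ O\!\left(\frac{\eps^2(\ell+\log m)}{d^2}\right).
\end{align*}
Summing over the $n$ rounds and equating with the $\Omega(d^2)$ lower bound gives $n\in\Omega(d^4/((\ell+\log m)\eps^2))$.

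I expect no genuine obstacle here since all the heavy lifting has already been done: the $\ell$-outcome tail bound in Lemma~\ref{lem:chi_squared_concentration}, its union-bound consequence in Lemma~\ref{lem:chi_squared_small_all_meas}, and the adaptive-measurement packing of Corollary~\ref{cor:packing_cor_efficient_meas} all pass through verbatim with $\alpha=4\ell\eps^2/3d^2$. The mild subtlety is arithmetic: one should verify the packing still contains $\exp(\Omega(d^2))$ states so that the Fano bound yields $\Omega(d^2)$ (this holds since the packing-construction step uses $\xi=1/2$ and is unaffected by the value of $\alpha$), and that the two terms $\ell\eps^2/d^2$ and $\eps^2\log(m)/d^2$ combine cleanly into $\eps^2(\ell+\log m)/d^2$ in the final manipulation.
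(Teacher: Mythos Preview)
Your proposal is correct and follows essentially the same approach as the paper: the paper's proof simply states that it is identical to that of Theorem~\ref{thm:tomography_efficient_meas} except that the bound in Eq.~\eqref{eq:expected_chi_squared_function_adaptive_lb_proof} becomes of order $4\ell\eps^2/3d^2 + \eps^2\ln(3m)/Cd^2$, which is exactly the modification you identify via the $\ell$-outcome case of Lemma~\ref{lem:chi_squared_small_all_meas} and Corollary~\ref{cor:packing_cor_efficient_meas}.
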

\begin{proof}
The proof is identical to that for Theorem~\ref{thm:tomography_efficient_meas} except that the right-hand side of Eq.~\eqref{eq:expected_chi_squared_function_adaptive_lb_proof} is of the order of $4 \ell \eps^2/3 d^2 + \eps^2\ln(3m)/Cd^2$, by Theorem~\ref{thm:main_nonadaptive_theorem}.
\end{proof}

\section{Sample complexity of classical shadows}
\label{sec:classical_shadows}

In this section, we consider \emph{classical shadows\/} and \emph{shadow tomography\/}, variants of state tomography which have received much attention recently. Building on the ideas developed in the previous sections, we obtain new bounds on the sample complexity of these problems.

\subsection{Classical shadows}


Full quantum state tomography is often unnecessary for determining important properties of a quantum system. For example, to verify the output of a quantum computer, one might only be concerned with comparing the state that is produced to some target pure state, perhaps by estimating their fidelity. Alternatively, in variational quantum algorithms an essential subroutine is to determine the expectation values of some observables encoding the cost function of interest. For both these tasks and more, succinctly represented information about the state known as a \textit{classical shadow}~\cite{Huang2020} can provide an exponential reduction in the number of copies of the state required to learn properties of interest. Informally, a classical shadow of a quantum state refers to a classical string, also called a \emph{sketch\/}, using which we can estimate the expectation values of any given sequence of~$M$ observables to within accuracy~$\eps$. The sketch is produced by the measurement of individual copies of the otherwise unknown state.

For any state~$\rho\in \mathsf{D}(d)$, consider the function~$f_\rho$ mapping~$\mathsf{Psd}(d)^M$  to~$\mathbb{R}^M$, defined as
\[
f_\rho( E_1, E_2, \dotsc, E_M) \coloneqq ( \Tr(E_i \rho) : i \in [M] ) \enspace.
\]
More formally, the associated task is defined below. In this definition, \textit{single-copy access} refers to restricting measurements to individual copies of an unknown state, as described in Section~\ref{sec:measurement_models}.

\begin{definition}[Classical shadows problem]
Given parameters $\eps\in(0,1)$, $B>0$, and $M \ge 1$, and single-copy access to $n$ copies of an unknown quantum state $\rho\in \mathsf{D}(d)$ the \emph{classical shadows problem\/} consists of computing a description of a function $f: \mathsf{Psd}(d)^M \to \mathbb{R}^M$, called a \emph{classical shadow\/}, such that for any fixed collection of $M$ observables~$(O_i : 0 \preceq O_i \preceq \mathds{1}, ~ i \in [M] )$ satisfying $\max_{i\in[M]}\Tr(O_i^2) = B$, it holds that $\norm{f(O_1,\dots,O_M)-f_\rho(O_1,\dots,O_M)}_{\infty}\leq \eps$ with probability at least $2/3$.
\end{definition}


Huang, Kueng, and Preskill~\cite{Huang2020} give a procedure for computing classical shadows which uses only $n\in O(B\log(M)/\eps^2)$ efficient, nonadaptive measurements on single copies of the state $\rho$. Here, the measurements are implemented by using random $q$-qubit Clifford operators, which form a unitary 3-design. Then, the procedure performs a median-of-means estimation of the expectation values. Overall this is an unbounded improvement over full state tomography in the case where $\Tr(O_i^2)$ is at most a constant for the observables of interest $O_i$, since there is no explicit dependence on the dimension. They then show a matching lower bound in the nonadaptive measurement setting. However, this bound does not take into account the possibility of adaptive measurements. We turn to this in Section~\ref{sec:alternative_lb_classical_shadows}, focusing on the case where an upper bound on $B$ is not known.

\subsection{Lower bound with a limited choice of measurements}
\label{sec:alternative_lb_classical_shadows}

In this section, we show how the arguments developed in the previous sections for quantum tomography can be adjusted to give a lower bound for classical shadows with adaptive measurements, when the measurements are chosen from a ``small enough'' set. We obtain this result by proving the same lower bound for a variant of \emph{shadow tomography\/}~\cite{Aaronson2020shadow} with single-copy measurements described below. 
\begin{definition}[Single-copy shadow tomography for bounded operators]
\label{def-shadow-tomography}
Given parameters $\eps\in(0,1)$ and $B>0$, single-copy access to $n$ copies of $\rho\in \mathsf{D}(2^q)$, as well as the description of $M$ observables~$(O_i : 0 \preceq O_i \preceq \mathds{1}, ~ i \in [M] )$ satisfying $\max_{i\in[M]}\Tr(O_i^2) = B$, the task is to output a vector $b\in \mathbb{R}^M$ such that with probability at least $2/3$ we have $|b_i-\Tr(O_i\rho)|\leq \eps$ for every $i\in [M]$.
\end{definition}
Note that the output of the classical shadows problem can be used to produce a solution to the shadow tomography problem, when the Frobenius norm of the input operators is suitably bounded. Hence, any lower bound on the sample complexity for the latter task applies to the classical shadows problem as well.

\begin{table}
\small
\begin{center}
\begin{tabular}[h]{c | c | c}
& Upper bound \Tstrut & Lower bound \\
\hline
& & \\
Entangled & $\widetilde{O}(\log(d)\log^2(M)/\eps^4)$~\cite{BO21-data-analysis} & $\Omega(\log(M)/\eps^2)$~\cite{Aaronson2020shadow} \\
& & \\
Single-copy & $O(d\log(M)/\eps^2)$~\cite{Huang2020} & $ \Omega(\min\{M/\log(M),d\}/\eps^2)$~\cite{CCHL21-single-copy-measurements} \\
& & \\
Single-copy \& Efficient & $O(d\log(M)/\eps^2)$~\cite{Huang2020} & $\Omega(d\log(M) /\eps^2)$~(this work)
\end{tabular}
\suppress{
\begin{tabular}{c | c c c}
& Entangled & Single-copy & Single-copy \& Efficient\\
\hline
Upper bound\Tstrut & $\widetilde{O}(\log(d)\log^2(M)/\eps^4)$~\cite{BO21-data-analysis}& $O(d\log(M)/\eps^2)$~\cite{Huang2020} &$O(d\log(M)/\eps^2)$~\cite{Huang2020}\\
Lower bound & $\Omega(\log(M)/\eps^2)$~\cite{Aaronson2020shadow} & $\Omega(\min\{M/\log(M),d\}/\eps^2)$~\cite{CCHL21-single-copy-measurements} & $ \Omega(d\log(M) /\eps^2)$~(this work)
\end{tabular}
}
\end{center}
\caption{\label{tab:shadow_tomography_results}The best known upper and lower bounds on the sample complexity of shadow tomography for $M$ observables $O_1,\dots,O_M$, for $M \in \exp(O(d))$ for entangled measurements and~$M \in\exp(O(d^2))$ for single-copy measurements. Note that for~$M$ larger than the corresponding thresholds, we may use state tomography with joint measurements or single-copy measurements to achieve sample complexity of order~$d^2 / \eps^2$ and~$d^3 / \eps^2$, respectively. The lower bounds for~$M \in \exp( \Omega(d^2))$ are~$\Omega$ of~$d^2/ \eps^2, d/ \eps^2$, and~$d^3/ \eps^2$, respectively for the three cases above. The $\widetilde{O}$ notation hides loglog factors in $d$ and log factors in $1/\eps$.}
\end{table} 

Table~\ref{tab:shadow_tomography_results} summarizes known results on the sample complexity of shadow tomography under various assumptions about the measurements. In Theorem~\ref{thm:unentangled_shadow_tomography} below, we prove a lower bound on the sample complexity of single-copy shadow tomography for bounded operators, when the possible measurements available to the learning algorithm are limited in number. The bound implies that in the setting of single-copy measurements with efficient circuits (i.e., uniformly generated $\text{polylog(d)}$-size quantum circuits over a finite universal gate set), the non-adaptive classical shadows algorithm due to Huang \textit{et al.\/}~\cite{Huang2020} is optimal for single-copy shadow tomography. In contrast with the lower bound due to Ref.~\cite{CCHL21-single-copy-measurements} (in the second column of Table~\ref{tab:shadow_tomography_results}), a number of samples exponential in the number of qubits is inevitable using efficiently implementable measurements. This is a consequence of the fact that for a finite set of allowed measurements, one can always construct an instance of the classical shadows problem such that the measurement $\{O_i,\mathds{1}-O_i\}$ is not in the set, for some $i\in [M]$.

\begin{theorem}
\label{thm:unentangled_shadow_tomography}
Any algorithm for the classical shadows or the single-copy shadow tomography problem that only uses single-copy measurements chosen from a fixed set of~$m$ measurements requires
\[
\Omega \! \left( \frac{d \min\{ d^2, \log M \}}{ \eps^2 (1 + \log(m)/d)} \right)
\]
samples when $B = d/2$.
\end{theorem}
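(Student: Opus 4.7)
I will reduce single-copy shadow tomography with the prescribed Frobenius-norm budget to state discrimination over a carefully engineered packing, then recycle the chain-rule plus $\chi^2$-divergence argument from the proof of Theorem~\ref{thm:tomography_efficient_meas}. Set $N \eqdef \min\{M,\,\lfloor \mathrm{e}^{cd^2}\rfloor\}$ for a small universal constant $c > 0$. Extending the probabilistic-method argument of Corollary~\ref{cor:packing_cor_efficient_meas}, I construct unitaries $U_1,\dotsc,U_N \in \unitary(d)$ whose associated states $\rho_i \eqdef \rho_{\eps,U_i}$ simultaneously satisfy: (i) $\{\rho_1,\dotsc,\rho_N\}$ is an $(\eps/2)$-packing in trace distance; (ii) $\Tr(U_i Q_{d/2} U_i^\adjoint U_j Q_{d/2} U_j^\adjoint) \le 3d/8$ for every $i \neq j$; and (iii) $F^{\chi^2}_{\eps,d}(\mathcal{M}_k,U_i) \le \alpha + \eps^2\ln(3m)/(Cd^2)$ for each of the $m$ allowed measurements $\mathcal{M}_k$. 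Condition~(ii) is imposed by adding, at each inductive step, the bad event from Lemma~\ref{lem:concentration_projector_overlaps} with $t = 1/2$; its probability is at most $\mathrm{e}^{-d^2/16}$ per previously chosen state, so the union bound still leaves a positive chance of success and permits $N$ of order $\exp(\Omega(d^2))$.

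\textbf{Reduction to shadow tomography.} Define observables $O_i \eqdef U_i Q_{d/2} U_i^\adjoint$ for $i \in [N]$, and if $N < M$ take $O_i \eqdef 0$ for $N < i \le M$. Each nonzero $O_i$ is a rank-$d/2$ orthogonal projector, so $0 \preceq O_i \preceq \id$ and $\Tr(O_i^2) = d/2 = B$, meeting the requirements of Definition~\ref{def-shadow-tomography}. A direct computation using Eq.~\eqref{eqn:parametrization} gives $\Tr(O_i\rho_i) = (1+\eps)/2$, while condition~(ii) yields $\Tr(O_i\rho_j) \le 1/2 + 3\eps/8$ whenever $i \neq j$. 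Hence any procedure that outputs estimates within $\eps/32$ of every $\Tr(O_i \rho_{\bm x})$ lets us recover $\bm x$ with constant probability by selecting the unique index whose estimate exceeds the threshold $1/2 + 7\eps/16$. Up to a rescaling of $\eps$ by a universal constant (absorbed into the $\Omega$), a single-copy shadow tomography algorithm for these $O_i$ with accuracy $\eps$ thus yields a state-discrimination procedure for the packing.

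\textbf{Mutual information bound.} Let $\bm x \sim \mathrm{Unif}([N])$ and $\bm y \eqdef (\bm y_1,\dotsc,\bm y_n)$ denote the outcomes of the (possibly adaptive) measurements performed on $n$ independent copies of $\rho_{\bm x}$, each drawn from the fixed set of $m$ measurements. Fano's inequality gives $I(\bm x : \bm y) \in \Omega(\log N)$. On the other hand, condition~(iii) together with the chain rule for mutual information and Lemma~\ref{lem:mut_inf_ub} reproduces the bound from the proof of Theorem~\ref{thm:tomography_efficient_meas} term for term, yielding
\begin{align*}
I(\bm x : \bm y) \in O\!\left(\frac{n\eps^2(1+\log(m)/d)}{d}\right).
\end{align*}
Equating the two bounds and substituting $\log N = \Theta(\min\{d^2,\log M\})$ gives the claimed sample lower bound.

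\textbf{Main obstacle.} The heavy lifting---the $\chi^2$-tail bound of Lemma~\ref{lem:chi_squared_concentration} and the probabilistic packing construction of Corollary~\ref{cor:packing_cor_efficient_meas}---is imported wholesale. The one genuinely new ingredient is arranging for the observables to simultaneously realize a discriminating reduction \emph{and\/} respect the Frobenius-norm budget $B = d/2$. Taking them to be the $d/2$-dimensional projectors associated with the packing solves both requirements, provided the additional overlap condition~(ii) survives the probabilistic construction; this is straightforward because its bad-event probabilities decay as $\mathrm{e}^{-\Omega(d^2)}$, comfortably dominated by our choice of $N \le \mathrm{e}^{cd^2}$.
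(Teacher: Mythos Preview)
Your proposal is correct and follows essentially the same route as the paper: construct observables~$O_i = U_i Q_{d/2} U_i^\adjoint$ from a probabilistically built family of unitaries satisfying both the projector-overlap bound (via Lemma~\ref{lem:concentration_projector_overlaps}) and the uniform~$\chi^2$-divergence bound (via Lemma~\ref{lem:chi_squared_small_all_meas}), then combine Fano's inequality with the chain-rule/$\chi^2$ upper bound on mutual information exactly as in Theorem~\ref{thm:tomography_efficient_meas}. Two cosmetic remarks: your trace-distance packing condition~(i) is redundant once you have condition~(ii), since the overlap bound already guarantees discrimination via the observables; and with~$t=1/2$ and~$r_1=r_2=d/2$ in Lemma~\ref{lem:concentration_projector_overlaps} the tail is~$\mathrm{e}^{-d^2/64}$ rather than~$\mathrm{e}^{-d^2/16}$, though this does not affect the argument.
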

\begin{proof} 
As mentioned earlier, it suffices to prove the claimed lower bound for the shadow tomography problem. Consider any algorithm that only uses single-copy measurements chosen from a fixed set of~$m$ measurements~$\{\mathcal{M}_1,\dots,\mathcal{M}_m\}\subset \Xi(d)$.
We construct a set of hard input instances for this algorithm and show that the algorithm requires a large number of samples for these instances.

We observe, as in Ref.~\cite[Theorem~19]{Aaronson2020shadow}, that well-separated states of the form we have been studying (cf.\ Eq.~\eqref{eqn:parametrization}) can be distinguished well by the measurements operators given by their deviation from the completely mixed state. We build on this to show that there exists a special collection of $M$ states $\rho_1,\dots,\rho_M$, and observables $O_1,\dots,O_M$ whose expectation values enable us to uniquely identify a state from the~$M$ alternatives $\rho_1,\dots,\rho_M$. The states satisfy the additional property that the statistics obtained from measuring any of the~$\rho_i$ with any of~$m$ measurements~$ \mathcal{M}_j$ are not very informative. The lower bound then follows from Fano's inequality and the upper bound on the chi-squared divergence quantity we have been considering in the context of tomography. Since we may only take $M$ to be at most $\exp(\kappa d^2)$ for a universal constant~$\kappa$, the lower bound plateaus at this threshold.

More formally, we first construct the difficult instance of the shadow tomography problem. Let $\bm{U}\in\mathsf{U}(d)$ be a Haar-random unitary operator and, as before, let $Q\in\mathsf{Psd}(d)$ be a rank-$d/2$ orthogonal projection operator. By setting the parameter $t=1/3$ in Lemma~\ref{lem:concentration_projector_overlaps}, we get that for any fixed rank-$d/2$ orthogonal projection operator $P\in\mathsf{Psd}(d)$,
\begin{align}\label{eq:eqn_454}
\Pr[\Tr(P\bm{U}Q\bm{U}^\dag)\geq d/3] \leq \exp(-c' d^2)
\end{align}
for a universal constant $c'$. Since the algorithm only uses single-copy measurements from a fixed set of~$m$ measurements, Lemma~\ref{lem:chi_squared_small_all_meas} applies and we have the following result.
\begin{lemma}\label{lem:shadow_packing}
Fix an $\eps\in (0,1)$ and a positive integer $d\geq 4$. Define $\alpha, C$ as in Lemma~\ref{lem:chi_squared_small_all_meas}. There is a universal constant~$\kappa$, such that for any~$M \in [1, \exp(\kappa d^2 )]$, there exists a set of $M$ unitary operators $U_1,\dots,U_M \in\mathsf{U}(d)$ such that
\begin{enumerate}
    \item 
    \label{cond-dist}
    $\Tr(U_iQU_i^\dag \, U_jQU_j^\dag)\leq d/3$ for every $i,j\in [M],\ i\neq j$, and
    \item $F^{\chi^2}_{\eps,d}(\mathcal{M}_i,U_j)\leq \alpha + \eps^2\ln(3m)/Cd^2$ for every $i\in [m]$ and $j\in [M]$.
\end{enumerate}
\end{lemma}
\begin{proof}
The proof is similar to that for Corollary~\ref{cor:packing_cor_efficient_meas} except that we use Eq.~\eqref{eq:eqn_454} to ensure the first condition (instead of using Lemma~\ref{lem:chi_squared_small_all_meas}).
\end{proof}

Now, let $\mathcal{S} \coloneqq \{\rho_1,\dots,\rho_M \}\subset\mathsf{D}(d)$ be a collection of states of the form in Eq.~\eqref{eqn:parametrization}, given by
\begin{align*}
\rho_i \coloneqq \frac{2\eps}{d} U_i Q  U_i^\dag + (1-\eps)\frac{\mathds{1}}{d} \enspace.
\end{align*}
For any $i\in [M]$
\begin{align*}
\Tr(U_iQU_i^\dag\rho_i) = \frac{1}{2} + \frac{\eps}{2} \enspace,
\end{align*}
while by condition~\eqref{cond-dist} in Lemma~\ref{lem:shadow_packing} we have for any $j\neq i$
\begin{align*}
\Tr(U_jQU_j^\dag\rho_i) \leq \frac{1}{2} + \frac{\eps}{6} \enspace.
\end{align*}
This means that by estimating $\Tr(U_iQU_i^\dag\rho_x)$ with $\eps/12$ accuracy for every $i\in [M]$ we can identify the value of $x\in [M]$. Thus, we may use the algorithm for shadow tomography with input observables~$U_1QU_1^\dag,\dotsc, U_M Q U_M^\dag$ to discriminate between the~$M$ states in~$\mathcal{S}$ with probability at least $2/3$. We argue next that the algorithm requires a ``large'' number of single-copy measurements in order to accomplish this.

Let $\bm{x}$ be uniformly random over $[M]$ and $\bm{y} \coloneqq (\bm{y}_1,\dots,\bm{y}_n)$ be the measurement outcomes obtained from $n$ single-copy (possibly adaptive) measurements performed on distinct copies of the state $\rho_{\bm{x}}$. By chain rule for mutual information, we have
\begin{align}
\nonumber
I(\bm{x}:\bm{y}) &= \sum_{i=1}^n I(\bm{x}:\bm{y}_i|\bm{y}_{<i})\\
\nonumber
&\leq \sum_{i=1}^n \expect{\bm{y}_{<i}} \ \expect{\bm{x}|\bm{y}_{<i}} F^{\chi^2}_{\eps,d}(\mathcal{M}^{\bm{y}_{<i}},U_{\bm{x}})\\
&\in O\left(\frac{n\eps^2(1+\log(m)/d)}{d}\right) \enspace,
\label{eq:rhs_shadow_tomog_mut_inf}
\end{align}
where we have omitted some steps since they are identical to those leading to Eq.~\eqref{eq:eqn_509}. 

On the other hand, since the algorithm identifies the state~$\rho_{\bm{x}}$ with probability~$\ge 2/3$ from the measurement outcomes~$\bm{y}$, by Fano's Inequality, $I(\bm{x}:\bm{y}) \in \Omega(\log(M))$. This concludes the proof of Theorem~\ref{thm:unentangled_shadow_tomography}.
\end{proof}

Suppose an algorithm for classical shadows or shadow tomography uses only efficient single-copy measurements over a fixed, finite universal gate set. The number~$m$ of different measurements it may use is then~$O(\exp(\textrm{polylog}(d)))$. (See Appendix~\ref{sec:solovay_kitaev} for further justification and a slight generalization.) By the above lower bound, the algorithm requires $\Omega(d\min\{d^2,\log(M)\}/\eps^2)$ samples. In fact, this bound is optimal.

\begin{lemma}
\label{thm:unentangled_shadow_tomography-ub}
There is an algorithm that uses only efficient, single-copy measurements and
\[
O(d\min\{d^2,\log(M)\}/\eps^2)
\]
samples and solves the classical shadows and single-copy shadow tomography problems for arbitrary~$B$. 
\end{lemma}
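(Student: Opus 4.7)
The plan is to combine two standard building blocks and take whichever is cheaper for the given parameters. Each input observable satisfies $0 \preceq O_i \preceq \id$, so $\trace(O_i^2) \le \trace(O_i) \le d$ automatically; hence $B \le d$ without loss of generality, and the algorithm never needs~$B$ as input. The procedure chooses between two subroutines based on whether $\log M \le d^2$.

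When $\log M \le d^2$, I would run the Huang--Kueng--Preskill classical-shadows procedure in which each of the~$n$ copies of~$\rho$ is measured by first applying a uniformly random $q$-qubit Clifford unitary~$\bU$ and then measuring in the computational basis, with final estimates obtained by median-of-means post-processing. Clifford circuits can be sampled and implemented using~$O(q^2)$ gates over a finite universal gate set, so these measurements are efficient in the sense of the paper, and they form a unitary~$3$-design. The HKP shadow-norm bound for the Clifford ensemble yields $\eps$-accurate estimates of all~$M$ expectations, with success probability~$2/3$, using
\[
n \in O\!\left( \frac{\max_{i \in [M]} \trace(O_i^2) \cdot \log M}{\eps^2} \right) \subseteq O\!\left(\frac{d \log M}{\eps^2}\right)
\]
samples. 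The sketch is the list of Clifford descriptions together with the observed outcomes and the median-of-means evaluation rule, giving a valid classical shadow.

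When $\log M > d^2$, I would instead perform full single-copy state tomography via the non-adaptive $2$-design algorithm of Appendix~\ref{sec:upper_bounds_tomography} with the Clifford group as the ensemble, which produces~$\hat{\rho}$ satisfying $\norm{\hat{\rho} - \rho}_1 \le \eps$ with constant probability using~$O(d^3/\eps^2)$ efficient single-copy measurements. The output for the shadow-tomography task is the vector~$\hat{b}$ with $\hat{b}_i \eqdef \trace(O_i \, \hat{\rho})$; for the classical-shadows task the sketch is~$\hat{\rho}$ itself together with the evaluation map $f(E_1,\dotsc,E_M) \eqdef (\trace(E_i \, \hat{\rho}))_{i \in [M]}$. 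H\"older's inequality yields $|\hat{b}_i - \trace(O_i \rho)| \le \norm{O_i} \, \norm{\hat{\rho} - \rho}_1 \le \eps$ for each~$i$. Taking whichever branch is cheaper gives overall sample complexity $O(d \min\{d^2, \log M\}/\eps^2)$, and no new technical obstacle arises beyond assembling these two ingredients.
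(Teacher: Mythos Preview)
Your proposal is correct and matches the paper's own proof essentially line-for-line: split on whether~$\log M \le d^2$, use the Huang--Kueng--Preskill Clifford classical-shadows procedure in the first case, and use random-Clifford-based full tomography with~$O(d^3/\eps^2)$ samples in the second. Your additional remarks (that~$B \le d$ automatically, and the H\"older step bounding~$|\hat{b}_i - \trace(O_i\rho)|$) are correct and make explicit what the paper leaves implicit.
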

\begin{proof}
The upper bound is achieved by the following procedure: use the random Clifford operator-based classical shadows algorithm from Ref.~\cite[Theorem~1]{Huang2020} if $M\leq \mathrm{e}^{d^2}$, and the random Clifford operator-based state tomography algorithm of Ref.~\cite{KUENG2017} otherwise. The former has sample complexity of the order of~$d \log(M)/ \eps^2$, and the latter~$d^3 / \eps^2$.
\end{proof}

\subsection{Sample means suffice for single-copy shadow tomography}\label{sec:simple_shadow_tomog_ub}

In this section we turn our attention to the case of nonadaptive --- but otherwise arbitrary --- single-copy measurements on an unknown state $\rho\in\mathsf{D}(d)$.
As can be seen from Table~\ref{tab:shadow_tomography_results}, the median-of-means algorithm due to Ref.~\cite{Huang2020} is optimal up to $\log$ factors for shadow tomography in this setting. Their proposal employs random Clifford operations to perform random basis measurements. However, we may take an even simpler approach using the same measurement scheme, which also turns out to be optimal. Specifically, we show that taking the sample means using the classical shadow reproduces the same upper bound on the overall sample complexity, which is $n\in O(\min\{d,M\}\log(M)/\eps^2)$, assuming $M\leq \mathrm{e}^{d^2}$.

We first handle the case when $M > d$. Suppose we apply a random Clifford operator $\bm{U}\in\mathsf{U}(d)$ and then measure an unknown state $\rho$ in the standard basis $\{\ket{j}\}_{j=1}^d\subset\mathbb{C}^d$. For a fixed Clifford operator $U\in\mathsf{U}(d)$ we may write the operators for this measurement as $\{U\outerprod{j}{j}U^\dag\}_{j=1}^d$. It is well-known that this random projective measurement is closely related to state $t$-designs, as explained in Appendix~\ref{sec:simple_shadow_tomog_ub}. Define the random variable $\hat{\rho}(\bm{U},\bm{j})=(d+1)\bm{U}|\bm{j}\rangle \langle \bm{j}|\bm{U}^\dag -\mathds{1}$ where $\bm{j}\in[d]$ is the random measurement outcome in the standard basis. We show the following theorem.
\begin{theorem}
    Let $\bm{U}$ and $\bm{j}$ be as defined above, and let $\bm{U}_1,\dots,\bm{U}_n$ and $\bm{j}_1,\dots,\bm{j}_n$ be i.i.d.\ copies of $\bm{U}$ and $\bm{j}$, respectively. There exists a positive integer $n\in O\left(\frac{d\log(M)}{\eps^2}\right)$ such that outputting the sample means $\frac{1}{n}\sum_{k=1}^n\Tr(O_i\hat{\rho}(\bm{U}_k,\bm{j}_k))$ for each $i\in [M]$ solves the classical shadows problem.
\end{theorem}

First, note that by Proposition~\ref{prop:unbiased_estimator_2_design} in Appendix~\ref{sec:upper_bounds_tomography} we have $\expct \hat{\rho}(\bm{U},\bm{j})=\rho$. We also make use of the following property.

\begin{proposition}[Prop.~S1, Sec.~5 in the supplementary materials for Ref.~\cite{Huang2020}]\label{prop:shadows_variance}
Let~$X$ be a Hermitian operator with  $-\mathds{1}\preceq X \preceq\mathds{1}$ acting on $\mathbb{C}^d$, and let $\hat{\rho}(\bm{U}, \bm{j})$ be as defined above. It holds that
\begin{align*}
\Var\left[\Tr(X \, \hat{\rho}(\bm{U},\bm{j}))\right]\leq 3\Tr(X^2).
\end{align*}
\end{proposition}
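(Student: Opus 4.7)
The plan is to compute $\expct[Z^2]$ in closed form for $Z \eqdef \Tr(X\hat{\rho}(\bm{U},\bm{j}))$, using the fact that random Clifford operators form a unitary $3$-design, and then subtract $(\expct Z)^2 = \Tr(X\rho)^2$ to bound the variance. Expanding $Z = (d+1)\Tr(X\pi_{\bm{j}}) - \Tr(X)$ with $\pi_j \eqdef \bm{U}\ketbra{j}{j}\bm{U}^{\adjoint}$ and using the Born rule $\Pr[\bm{j}=j\mid \bm{U}=U] = \Tr(\rho \pi_j)$ reduces the whole computation to the Clifford averages
\begin{align*}
\expct \sum_j \Tr(\rho \pi_j) \, \Tr(X \pi_j)^k, \qquad k \in \{1,2\},
\end{align*}
which are polynomials of total degree at most~$3$ in the entries of $\bm{U}$ and $\bm{U}^{\adjoint}$.

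By the $3$-design property, these equal the corresponding Haar averages. Each column $\bm{U}\ket{j}$ is individually Haar-distributed on~$\sphere(d)$, and the standard symmetric-subspace formula (as used in Appendix~\ref{sec:haar_integrals}) yields
\begin{align*}
\expct_{\bm{U} \sim \textnormal{Haar}} \sum_j \pi_j^{\tensor t} \;=\; \frac{d}{t!\,\binom{d+t-1}{t}} \sum_{\sigma \in S_t} W_\sigma,
\end{align*}
whose prefactor equals $1/[(d+1)(d+2)]$ for $t = 3$. Applying the trace identity $\Tr\bigl((A_1 \tensor \dotsb \tensor A_t) W_\sigma\bigr) = \prod_c \Tr\!\bigl(\prod_{i \in c} A_i\bigr)$ (product over the disjoint cycles $c$ of $\sigma$) with $A_1 = \rho$ and $A_2 = A_3 = X$, and enumerating $\sigma \in S_3$, gives an explicit closed form for $\expct[Z^2]$ as a linear combination of $\Tr(X)^2$, $\Tr(X)\Tr(X\rho)$, $\Tr(X^2)$, and $\Tr(X^2\rho)$.

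Subtracting $\Tr(X\rho)^2$ and completing the square on the $\Tr(X)$-dependent terms collects all contributions involving $\Tr(X)$ into the nonpositive expression $-[\Tr(X) + \Tr(X\rho)]^2 / (d+2)$, which may be discarded together with another nonpositive $\Tr(X\rho)^2$ term. What remains is $\tfrac{d+1}{d+2}\bigl[\Tr(X^2) + 2\Tr(X^2\rho)\bigr]$, which is at most $\tfrac{3(d+1)}{d+2}\Tr(X^2) \leq 3\Tr(X^2)$ after using $\Tr(X^2\rho) \leq \|\rho\|\,\Tr(X^2) \leq \Tr(X^2)$ (valid since $X^2 \succeq 0$ and $\|\rho\| \leq 1$). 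The main obstacle I anticipate is simply the bookkeeping of the $S_3$ sum together with this cancellation via completing the square: without the cancellation, a spurious $\Tr(X)^2 / (d+2)$ term that could be of order $d$ would survive and ruin the desired bound.
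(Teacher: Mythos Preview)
The paper does not give its own proof of this proposition; it is stated with a citation to the supplementary material of Ref.~\cite{Huang2020} and used as a black box. Your proposal is correct and is essentially the standard argument from that reference: exploit the unitary~$3$-design property of random Clifford operators to evaluate $\expct_{\bm{U}}\sum_j \pi_j^{\tensor 3}$ via the symmetric-subspace formula, enumerate the six permutations in~$S_3$, and then complete the square so that all $\Tr(X)$-dependent terms assemble into the nonpositive quantity $-\tfrac{1}{d+2}(\Tr(X)+\Tr(X\rho))^2$, leaving only $\tfrac{d+1}{d+2}[\Tr(X^2)+2\Tr(X^2\rho)]\le 3\Tr(X^2)$. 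The bookkeeping you flag as the main obstacle is exactly right, and your sketch handles it correctly; note, incidentally, that the hypothesis $-\id\preceq X\preceq\id$ is not actually used in your argument, since $\Tr(X^2\rho)\le\Tr(X^2)$ follows already from $X^2\succeq 0$ and $\rho\preceq\id$.
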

Finally, we require a concentration of measure property of bounded random variables known as Bernstein's inequality. This is stronger than Hoeffding's inequality when the variances of the random variables are sufficiently small. This version of Bernstein's inequality can be found in Ref.~\cite{vershynin_2018}, for example.
\begin{theorem}[Theorem 2.8.4 in Ref.~\cite{vershynin_2018}]
Let $\bm{x}_1,\dots,\bm{x}_n$ be independent, mean zero random variables such that $|\bm{x}_i|\leq K$ with probability $1$ for all $i\in [n]$. Then, for every $\eps\geq 0$, we have
\begin{align*}
\Pr\left[\left\lvert\sum_{i=1}^n\bm{x}_i\right\rvert\geq\eps \right]\leq 2\exp\left(\frac{-\eps^2/2}{\sigma^2+K\eps/3}\right)
\end{align*}
where $\sigma^2\eqdef\sum_{i=1}^n \expct \bm{x}_i^2$.
\end{theorem}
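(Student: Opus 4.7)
The plan is to apply the standard Chernoff bound and then control the moment generating function (MGF) of each $\bm{x}_i$ using both the almost-sure boundedness and the variance. For any $\lambda > 0$, Markov's inequality applied to $\mathrm{e}^{\lambda \sum_i \bm{x}_i}$ yields
\begin{align*}
\Pr\!\left[\sum_{i=1}^n \bm{x}_i \geq \eps \right] \leq \mathrm{e}^{-\lambda \eps} \prod_{i=1}^n \mathbb{E}\,\mathrm{e}^{\lambda \bm{x}_i},
\end{align*}
so the task reduces to bounding each individual MGF. The two-sided statement in the theorem then follows by repeating the argument with $-\bm{x}_i$ in place of $\bm{x}_i$ and applying the union bound, which is exactly what produces the factor of $2$.

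The key step is a tight upper bound on $\mathbb{E}\,\mathrm{e}^{\lambda \bm{x}_i}$ that simultaneously exploits mean-zero and the variance. Expanding the exponential in a Taylor series and using that $|\bm{x}_i| \leq K$ almost surely, we obtain $\mathbb{E}\,\bm{x}_i^k \leq K^{k-2}\,\mathbb{E}\,\bm{x}_i^2$ for every $k \geq 2$, while the $k = 1$ term vanishes by mean-zero. Summing the series and using $1 + u \leq \mathrm{e}^u$ yields
\begin{align*}
\mathbb{E}\,\mathrm{e}^{\lambda \bm{x}_i} \;\leq\; \exp\!\left(\frac{\mathbb{E}\,\bm{x}_i^2}{K^2}\bigl(\mathrm{e}^{\lambda K} - 1 - \lambda K\bigr)\right).
\end{align*}
Taking the product over $i$ produces the Bennett-type bound $\prod_i \mathbb{E}\,\mathrm{e}^{\lambda \bm{x}_i} \leq \exp\!\bigl(\sigma^2 K^{-2}(\mathrm{e}^{\lambda K} - 1 - \lambda K)\bigr)$.

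To arrive at the Bernstein-type denominator $\sigma^2 + K\eps/3$ rather than Bennett's form, I would invoke the elementary inequality $\mathrm{e}^u - 1 - u \leq \tfrac{u^2/2}{1 - u/3}$, valid for $u \in [0,3)$. Combined with the Chernoff estimate, this gives, for every $\lambda \in (0, 3/K)$,
\begin{align*}
\Pr\!\left[\sum_i \bm{x}_i \geq \eps\right] \leq \exp\!\left(-\lambda \eps + \frac{\lambda^2 \sigma^2/2}{1 - \lambda K/3}\right).
\end{align*}
Choosing $\lambda \coloneqq \eps/(\sigma^2 + K\eps/3)$, which automatically lies in the allowed range, a short algebraic computation reduces the exponent to $-\eps^2/\bigl(2(\sigma^2 + K\eps/3)\bigr)$, matching the claim.

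The main obstacle is purely algebraic: verifying the inequality $\mathrm{e}^u - 1 - u \leq (u^2/2)/(1 - u/3)$ on $[0,3)$ requires a brief calculus argument (differentiate, check the value and derivative at $u = 0$, and verify monotonicity), and one must confirm that the chosen $\lambda$ indeed lies in $(0, 3/K)$. Everything else is completely standard Chernoff-style reasoning, and the proof has no interaction whatsoever with the quantum-information content of the paper.
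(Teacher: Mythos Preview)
The paper does not prove this statement at all; it merely quotes it as Theorem~2.8.4 from Vershynin's textbook~\cite{vershynin_2018} and uses it as a black box. Your sketch is a correct and complete rendition of the standard proof of Bernstein's inequality (in fact, essentially the argument given in the cited reference): Chernoff's method, the Bennett-type MGF bound via $\mathbb{E}|\bm{x}_i|^k \le K^{k-2}\,\mathbb{E}\,\bm{x}_i^2$, the elementary inequality $\mathrm{e}^u - 1 - u \le (u^2/2)/(1 - u/3)$ on $[0,3)$, and the optimizing choice $\lambda = \eps/(\sigma^2 + K\eps/3)$. There is nothing to compare against in the paper itself.
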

Now suppose that the observables given as input to the shadow tomography algorithm are~$O_i$, with $0 \preceq O_1,\dots,O_M\preceq \mathds{1}$. Define the random variables $f_i(\bm{U},\bm{j})\eqdef\Tr(O_i \, \hat{\rho}(\bm{U},\bm{j}))$ for each $i\in [M]$. It holds that 
\begin{align}
    \expct f_i(\bm{U},\bm{j}) = \Tr(O_i\expct \hat{\rho}(\bm{U},\bm{j}))=\Tr(O_i\rho) \enspace,
\end{align} 
so that~$f_i(\bm{U},\bm{j})$ is an unbiased estimator for~$\Tr(O_i\rho)$. If we perform the random measurement described above on $n$ separate copies of $\rho$, we obtain i.i.d.\ random variables $(\bm{U}_1,\bm{j}_1),\dots,(\bm{U}_n,\bm{j}_n)$. These define the classical shadow of the state as
\[
\frac{1}{n} \sum_{k = 1}^n \hat{\rho}( \bm{U}_k,\bm{j}_k) \enspace.
\]
The expectation value for~$O_i$ predicted by the classical shadow is the sample mean of the $i^\text{th}$ estimator $f_i$. For any $\eps > 0$, by Bernstein's inequality we have that
\begin{align*}
\Pr\left[\left\lvert\frac{1}{n}\sum_{k=1}^nf_i(\bm{U}_k,\bm{j}_k)-\Tr(O_i\rho)\right\rvert>\eps\right]\leq 2\exp\left(\frac{-\eps^2/2}{\sigma^2 + \eps K/(3n)}\right)
\end{align*}
where $\sigma^2\eqdef\frac{1}{n^2}\sum_{k=1}^n\Var[f_i(\bm{U}_k,\bm{j}_k)]$ and $K$ is such that $|f_i(\bm{U}_k,\bm{j}_k)-\Tr(O_i\rho)|\leq K$ with probability~1 for all $k\in [n]$. By definition $\norm{f_i}_{\infty}\leq d+1$ so $K$ can be taken to be $O(d)$, and by Proposition~\ref{prop:shadows_variance} we have $\sigma^2\leq 3d/n$. Taking $n\in O(d\log(M)/\eps^2)$, the probability above is at most $1/3M$. By the union bound, we may estimate $\Tr(O_i \rho)$ for all $i\in [M]$ to additive error $\eps$ using these $n$ samples, with failure probability at most $1/3$. We remark that in the setting where the measurements used by the algorithm are efficient, this describes the optimal procedure.

Consider $M\leq d$. In this case, we perform the two-outcome measurement with operators $\{O_i,\mathds{1}-O_i\}$ a total of $O(\log(M)/\eps^2)$ times for each $i\in [M]$. The sample means are then within~$\eps$ of the corresponding expectation values. The procedure uses $O(M\log(M)/\eps^2)$ samples of the state, and matches the information-theoretic lower bound proved in Ref.~\cite{CCHL21-single-copy-measurements} up to a factor of $\log^2(M)$ (see the second column of Table~\ref{tab:shadow_tomography_results}).

\section{Open problems}

We conclude with some directions for future work arising from the lower bounds in Sections~\ref{sec:adaptive_lower_bounds} and~\ref{sec:classical_shadows}. In Theorem~\ref{thm:tomography_efficient_meas_const_outcome} we incur a $\textrm{polylog}(d)$ factor in the denominator of the lower bound for tomography with efficient, constant-outcome, single-copy measurements. Can this be improved? Note that such a factor also appears in the denominator of the lower bound for binary Pauli measurements in Ref.~\cite{Flammia_2012}. Is there a way to incorporate rank-dependence into the lower bounds appearing in Section~\ref{sec:adaptive_lower_bounds} for adaptive tomography with limited measurement settings? The approach we took to incorporate the dependence on the norm parameter~$B$ into the lower bounds for classical shadows does not carry over well to the setting of rank-dependent quantum tomography, since the packing we constructed has states with rank up to $d$. Finally, are there simpler information-theoretic arguments that yield the unconditional bounds obtained in Refs.~\cite{chen2022tightcertification,chen2023doesadaptivityhelpquantum}?

\section*{Acknowledgements} 
This research was supported in part by a Discovery Grant from NSERC Canada and a grant from Fujitsu Labs America.

\printbibliography

\appendix

\section{Haar integrals}\label{sec:haar_integrals}

The Haar measure $\mu$ is the unique unitarily invariant probability measure on the space of unitary operators, $\mathsf{U}(d)$. Using this measure, one may define channels $\Phi_k : (\mathbb{C}^{d\times d})^{\otimes k} \to (\mathbb{C}^{d\times d})^{\otimes k}$ of the form
\begin{align}
\Phi_{k}(X) = \int_{\mathsf{U}(d)} U^{\otimes k} X (U^{\dag})^{\otimes k} \mathrm{d}\mu(U),
\end{align}
which are referred to as ``$k$-fold twirl" operations. In the rest of this section, we evaluate this channel explicitly in the case where the operator $X$ is a tensor product of orthogonal projectors onto subspaces of $\mathbb{C}^d$. Though it is well-known that one can in principle compute such expressions using Weingarten functions (see, e.g., {\cite{Roberts_2017}}) we elect to perform these calculations in a more self-contained way since it is not too much extra work to do so.
Following the presentation in Ref.~\cite{watrous2018}, we make use of an important result on the structure of permutation-invariant operators. Recall from Section~\ref{sec:preliminaries} that $S_k$ is the symmetric group on $\{1,\dots,k\}$ and $W_\pi$ is the operator on $(\mathbb{C}^{d\times d})^{\otimes k}$ that permutes the $k$ tensor factors according to the permutation $\pi\in S_k$. 
\begin{theorem}[Theorem 7.15 in Ref.~\cite{watrous2018}]\label{thm:perm_inv_op}
Let $k > 0$ be a positive integer and $X \in (\mathbb{C}^{d\times d})^{\otimes k}$ be an operator. The following are equivalent:
 \begin{enumerate}
 \item $[X,U^{\otimes k}]=0 \ \forall U\in \mathsf{U}(d)$.
 \item $X = \sum_{\pi\in S_k}v(\pi)W_\pi$ for some choice of $v \in \mathbb{C}^{|S_k|}$.
 \end{enumerate}
\end{theorem}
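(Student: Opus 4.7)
The statement is a form of Schur--Weyl duality on $(\complex^d)^{\tensor k}$: the commutant of the family $\{U^{\tensor k} : U \in \unitary(d)\}$ coincides with the algebra $\mathcal{B} \eqdef \mathrm{span}\{W_\pi : \pi \in S_k\}$ generated by the permutation operators. My strategy is to handle the easy implication directly and reduce the hard one to the double commutant theorem plus a polarization identity.

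The direction $(2) \Rightarrow (1)$ is a one-line calculation: for any product vector $\ket{x_1}\tensor\dots\tensor\ket{x_k}$ and any $\pi \in S_k$,
\[
W_\pi U^{\tensor k} \ket{x_1}\tensor\dots\tensor \ket{x_k}
= (U\ket{x_{\pi^{-1}(1)}})\tensor\dots\tensor (U\ket{x_{\pi^{-1}(k)}})
= U^{\tensor k} W_\pi \ket{x_1}\tensor\dots\tensor \ket{x_k},
\]
so $[W_\pi, U^{\tensor k}] = 0$. Linearity extends this to arbitrary $v \in \complex^{|S_k|}$.

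For the reverse direction $(1) \Rightarrow (2)$, I would invoke the double commutant theorem. Since $\mathcal{B}$ is a finite-dimensional unital $*$-subalgebra of $\linear((\complex^d)^{\tensor k})$, $\mathcal{B}'' = \mathcal{B}$. Set $\mathcal{A} \eqdef \mathrm{span}\{U^{\tensor k} : U \in \unitary(d)\}$. The previous step already shows $\mathcal{A} \subseteq \mathcal{B}'$. If I can upgrade this to equality $\mathcal{A} = \mathcal{B}'$, then taking commutants yields $\mathcal{A}' = \mathcal{B}'' = \mathcal{B}$, which is exactly the claim. So the whole proof reduces to establishing the inclusion $\mathcal{B}' \subseteq \mathcal{A}$.

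The main obstacle is this inclusion, and my plan is to argue via polarization. First, every $V \in \linear(\complex^d)$ is a linear combination of a constant number of unitaries (write $V$ in terms of Hermitian parts and rescale), so $\mathrm{span}\{V^{\tensor k} : V \in \linear(\complex^d)\} = \mathcal{A}$. Second, the symmetric multilinear map $(V_1,\dots,V_k)\mapsto \frac{1}{k!}\sum_\pi V_{\pi(1)}\tensor\dots\tensor V_{\pi(k)}$ has image equal to the $S_k$-symmetric subspace of $\linear(\complex^d)^{\tensor k}$, and the polarization identity recovers such symmetric sums from linear combinations of the diagonal tensor powers $V^{\tensor k}$. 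Third, under the canonical identification $\linear(\complex^d)^{\tensor k} \cong \linear((\complex^d)^{\tensor k})$, the $S_k$-action on the left side by permuting tensor factors of operators corresponds to conjugation $Y \mapsto W_\pi Y W_\pi^{\adjoint}$ on the right side, and an operator $Y$ satisfies $[Y,W_\pi]=0$ for every $\pi$ exactly when it is fixed by this conjugation action. Chaining these three identifications yields $\mathcal{B}' = \mathcal{A}$. Formalizing the polarization step and tracking the two $S_k$-actions carefully will be the delicate part; if it becomes cumbersome I would fall back on the standard representation-theoretic route, decomposing $(\complex^d)^{\tensor k} \cong \bigoplus_\lambda W_\lambda \tensor V_\lambda$ into $\unitary(d)\times S_k$ isotypic components, verifying via Weyl's character formula that the $W_\lambda$ are pairwise non-isomorphic irreducibles of $\unitary(d)$, and applying Burnside/Jacobson density blockwise.
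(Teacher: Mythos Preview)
The paper does not prove this theorem; it is quoted verbatim as Theorem~7.15 from Watrous's textbook and used as a black box to evaluate the twirl channels~$\Phi_k$ in Propositions~\ref{prop:haar_expecs} and~\ref{prop:haar_expecs_2}. So there is no ``paper's own proof'' to compare against.

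That said, your outline is one of the standard correct routes to Schur--Weyl duality. The easy direction is fine as written. For the hard direction, the chain $\mathcal{B}' = \{\text{operators fixed by conjugation by every } W_\pi\} = \{\text{$S_k$-symmetric tensors in } \linear(\complex^d)^{\tensor k}\} = \mathrm{span}\{V^{\tensor k} : V \in \linear(\complex^d)\} = \mathcal{A}$ is valid, and the double commutant theorem then gives $\mathcal{A}' = \mathcal{B}'' = \mathcal{B}$. The two points that deserve a careful sentence each in a full write-up are: (i) the polarization identity over~$\complex$ (which needs characteristic zero, or at least~$> k$, so that the~$1/k!$ makes sense), and (ii) the fact that~$\mathcal{B}$ is closed under adjoints (since~$W_\pi^\adjoint = W_{\pi^{-1}}$), which you need before invoking~$\mathcal{B}'' = \mathcal{B}$ in finite dimensions. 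Your fallback via the isotypic decomposition is also fine and is closer in spirit to how Watrous organizes the argument, but the polarization route is shorter if you only need the commutant statement and not the full multiplicity-free decomposition.
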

Since $\Phi_k(X)$ satisfies the first condition, we can apply the theorem to write the output of the channel as a linear combination of permutation operators. This helps us evaluate the Haar integrals which arise in this work.
\begin{proposition}\label{prop:haar_expecs}
Let $d>1$ be a positive integer, $Q\in\mathsf{Psd}(d)$ a rank-$r$ orthogonal projection operator, and $\bm{U}\in\mathsf{U}(d)$ a Haar-random unitary operator. It holds that
\begin{align*}
\expct \bm{U}Q\bm{U}^\dag = \frac{r\mathds{1}}{d}.
\end{align*}
\end{proposition}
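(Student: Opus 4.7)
The plan is to invoke Theorem~\ref{thm:perm_inv_op} in the case $k=1$. First I would observe that the operator $A \eqdef \expct \bm{U} Q \bm{U}^{\dag}$ satisfies $V A V^{\dag} = A$ for every unitary $V \in \mathsf{U}(d)$, by right-invariance (equivalently, left-invariance under $V^{\dag}$) of the Haar measure: substituting $\bm{U} \mapsto V \bm{U}$ in the integral leaves it unchanged. Equivalently, $[A,V] = 0$ for all $V \in \mathsf{U}(d)$.

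Next I would apply Theorem~\ref{thm:perm_inv_op} with $k=1$. Since $S_1$ contains only the identity permutation, the theorem tells us that any operator commuting with every $V \in \mathsf{U}(d)$ must be a scalar multiple of the identity, say $A = c \mathds{1}$ for some $c \in \mathbb{C}$.

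The final step is to determine $c$ by taking the trace on both sides. Using linearity of trace and expectation, along with cyclicity of the trace,
\[
cd = \trace(A) = \expct \trace(\bm{U} Q \bm{U}^\dag) = \trace(Q) = r,
\]
so $c = r/d$, giving $\expct \bm{U} Q \bm{U}^\dag = r\mathds{1}/d$ as claimed. There is no real obstacle here; the only thing to be careful about is invoking the structural theorem correctly at $k=1$ and verifying the invariance of the expectation under conjugation, both of which are immediate.
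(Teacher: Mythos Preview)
Your proposal is correct and follows essentially the same approach as the paper: both invoke Theorem~\ref{thm:perm_inv_op} at $k=1$ to conclude the expectation is a scalar multiple of the identity, then determine the scalar by taking traces. Your version is slightly more explicit in justifying the commutation hypothesis via Haar invariance, but the argument is the same.
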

\begin{proof}
We can write the expectation as
\begin{align*}
\int_{\mathsf{U}(d)} UQU^\dag\mathrm{d}\mu(U) = \Phi_1(Q).
\end{align*}
By Theorem~\ref{thm:perm_inv_op} we have
\begin{align*}
\mathbb{E}\  \bm{U} Q \bm{U}^\dag = \kappa \mathds{1}
\end{align*}
where $\kappa\in \mathbb{C}$ is some coefficient depending on $Q$. Recalling that $Q$ is a rank-$r$ orthogonal projection operator, taking the trace of both sides and solving for $\kappa$ yields $\kappa=r/d$.
\end{proof}

\begin{proposition}\label{prop:haar_expecs_2}
Let $d>1$ be a positive integer. Let $\Pi_1,\Pi_2\in\mathsf{Psd}(d)$ be orthogonal projection operators of rank $r_1,r_2$, respectively, such that the image of~$\Pi_1$ is contained in that of~$\Pi_2$.
For $\bm{U}\in\mathsf{U}(d)$ a Haar-random unitary operator it holds that
\begin{align*}
\expct\ \bm{U}^{\otimes 2}(\Pi_1\otimes \Pi_2)(\bm{U}^\dag)^{\otimes 2} = \frac{r_1}{d(d^2-1)}\left[(r_2d-1)\mathds{1}+(d-r_2)W\right]
\end{align*}
where $W$ is the swap operator acting on $(\mathbb{C}^{d})^{\otimes 2}$.
\end{proposition}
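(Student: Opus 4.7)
Let $E \eqdef \expct\ \bm{U}^{\otimes 2}(\Pi_1\otimes \Pi_2)(\bm{U}^\dag)^{\otimes 2}$. The plan is to exploit the fact that $E$ commutes with $V^{\otimes 2}$ for every $V \in \unitary(d)$, which follows immediately from left-invariance of the Haar measure. By Theorem~\ref{thm:perm_inv_op} applied with $k=2$, the commutant of $\{V^{\otimes 2} : V \in \unitary(d)\}$ is spanned by the two permutation operators on $(\complex^d)^{\tensor 2}$, namely $\id$ and~$W$. Therefore $E = \alpha \id + \beta W$ for some scalars $\alpha, \beta \in \complex$ depending on $r_1, r_2, d$, and the task reduces to solving for these two coefficients.

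To pin down $\alpha$ and $\beta$, I would compute two linear functionals of $E$. First, taking the trace of both sides and using $\trace(\id) = d^2$, $\trace(W) = d$, together with
\[
\trace(E) = \expct\ \trace(\Pi_1) \trace(\Pi_2) = r_1 r_2,
\]
yields the equation $\alpha d^2 + \beta d = r_1 r_2$. Second, using the standard identity $\trace(W(A \tensor B)) = \trace(AB)$ along with $W^2 = \id$, I obtain
\[
\trace(W E) = \expct\ \trace(\bm{U} \Pi_1 \bm{U}^\adjoint \bm{U} \Pi_2 \bm{U}^\adjoint) = \trace(\Pi_1 \Pi_2) = r_1,
\]
where the last equality uses the hypothesis $\textnormal{im}(\Pi_1) \subseteq \textnormal{im}(\Pi_2)$, which implies $\Pi_2 \Pi_1 = \Pi_1$ and hence $\trace(\Pi_1 \Pi_2) = \trace(\Pi_1) = r_1$. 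This gives the second equation $\alpha d + \beta d^2 = r_1$.

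Solving the resulting $2 \times 2$ linear system is routine: eliminating $\beta$ gives $\alpha\, d(d^2-1) = r_1(r_2 d - 1)$, hence $\alpha = r_1(r_2 d - 1)/[d(d^2-1)]$, and back-substitution yields $\beta = r_1(d - r_2)/[d(d^2-1)]$. Substituting into $E = \alpha \id + \beta W$ gives exactly the claimed formula. There is no real obstacle here; the only subtle point is the use of the nesting hypothesis on the images of $\Pi_1$ and $\Pi_2$ to evaluate $\trace(\Pi_1 \Pi_2)$ cleanly---without it, one would only get the bound $\trace(\Pi_1 \Pi_2) \le \min\{r_1, r_2\}$, which is insufficient to identify $\beta$ as a closed-form expression.
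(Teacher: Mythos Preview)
Your proof is correct and follows essentially the same approach as the paper: both use Theorem~\ref{thm:perm_inv_op} to write the expectation as $\alpha\,\id + \beta W$, then determine $\alpha,\beta$ by computing $\trace(E)=r_1r_2$ and $\trace(WE)=\trace(\Pi_1\Pi_2)=r_1$ (using the nesting hypothesis) and solving the resulting $2\times 2$ system.
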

\begin{proof} 
We can write the expectation as
\begin{align*}
\int_{\mathsf{U}(d)} U^{\otimes 2}(\Pi_1\otimes \Pi_2)(U^\dag)^{\otimes 2}\mathrm{d}\mu(U) = \Phi_2(\Pi_1\otimes \Pi_2).
\end{align*}
By Theorem~\ref{thm:perm_inv_op} we have
\begin{align*}
\expct\ \bm{U}^{\otimes 2}(\Pi_1\otimes \Pi_2)(\bm{U}^\dag)^{\otimes 2} = \alpha \mathds{1}\otimes \mathds{1} + \beta W
\end{align*}
where $W$ is the swap operator and $\alpha,\beta\in \mathbb{C}$ are some coefficients depending on $Q$. Left-multiplying by $\mathds{1}\otimes \mathds{1}$ or $W$ and taking the trace of both sides yields
\begin{align*}
\Tr(\Pi_1\otimes \Pi_2) = r_1r_2 = \alpha d^2 + \beta d, \quad \Tr(W(\Pi_1\otimes \Pi_2)) = r_1 = \alpha d + \beta d^2 \enspace,
\end{align*}
as~$\Pi_1 \Pi_2 = \Pi_1$. This allows us to solve for $\alpha,\beta$:
\begin{align}\label{eqn:alpha_beta_coeffs}
\alpha = \frac{r_1(r_2d-1)}{d(d^2-1)}, \quad \beta = \frac{r_1(d-r_2)}{d(d^2-1)}.
\end{align}
This concludes the proof of the proposition.
\end{proof}

We also make use of the expectations of operators of the following form.
\begin{proposition}
\label{prop-haar-expct-3}
Let~$d \ge 1$ and~$\bU$ be a Haar-random unitary operator over~$\complex^d$. For any~$i,j \in [d]$, we have
\begin{align*}
\expect{\bU} ~ \bU \ket{i} \tensor \bU \ket{j} & = 0 \enspace, \\
\expect{\bU} ~ \bra{i} \bU^\adjoint \tensor \bra{j} \bU^\adjoint & = 0 \enspace, \\
\expect{\bU} ~ \bU \ket{i} \tensor \bra{j} \bU^\adjoint & = \frac{ \updelta_{ij} }{d} \sum_{k = 1}^d \ket{k} \tensor \bra{k} \enspace, \qquad \text{and} \\
\expect{\bU} ~ \bra{j} \bU^\adjoint \tensor \bU \ket{i} & = \frac{ \updelta_{ij}}{d} \sum_{k = 1}^d \bra{k} \tensor \ket{k} \enspace,
\end{align*}
where~$\updelta_{ij} = 1$ if~$i = j$ and~$0$ otherwise.
\end{proposition}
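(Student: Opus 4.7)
The plan is to dispatch the four identities by two distinct arguments: the first two by phase invariance of the Haar measure, and the last two by the twirl characterization provided by Theorem~\ref{thm:perm_inv_op}. The statement is essentially a specialization of Proposition~\ref{prop:haar_expecs} combined with a simple symmetry observation, so there is no substantial technical obstacle; the only care required is in translating between the tensor notation and the operator notation under the canonical identification $\complex^d \tensor (\complex^d)^* \cong \mathrm{End}(\complex^d)$.

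For the first identity, I would use that the Haar measure is invariant under the substitution $\bU \mapsto \omega \bU$ for every phase $\omega \in \complex$ with $|\omega|=1$. Under this substitution the integrand transforms as $\bU\ket{i} \tensor \bU\ket{j} \mapsto \omega^2 \, \bU\ket{i} \tensor \bU\ket{j}$, and hence
\[
\expect{\bU}~\bU\ket{i} \tensor \bU\ket{j} ~=~ \omega^2 \, \expect{\bU}~\bU\ket{i} \tensor \bU\ket{j}
\]
for every such $\omega$. Choosing any $\omega$ with $\omega^2 \ne 1$ forces the expectation to vanish. The second identity follows by the same argument (the integrand now scales by $\omega^{-2}$) or simply by taking the adjoint of the first.

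For the third identity, I would identify $\bU\ket{i} \tensor \bra{j}\bU^\adjoint$ with the operator $\bU\ket{i}\bra{j}\bU^\adjoint = \bU M \bU^\adjoint$, where $M \eqdef \ketbra{i}{j}$. The Haar expectation is then $\Phi_1(M)$ in the notation of this appendix. By Theorem~\ref{thm:perm_inv_op} applied with $k=1$, $\Phi_1(M)$ commutes with every element of $\unitary(d)$, and so must be a scalar multiple $c\,\id$ of the identity. Taking the trace of both sides gives $dc = \trace(M) = \updelta_{ij}$, so $\Phi_1(M) = (\updelta_{ij}/d)\,\id = (\updelta_{ij}/d) \sum_{k=1}^d \ketbra{k}{k}$, which under the identification above matches the claimed right-hand side $(\updelta_{ij}/d)\sum_k \ket{k}\tensor\bra{k}$. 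The fourth identity then follows either by repeating the same twirl argument on the swapped tensor factors, or by observing that it is obtained from the third by interchanging the two sides of the tensor product (i.e., conjugation by the swap operator $W$), which commutes with the Haar expectation.
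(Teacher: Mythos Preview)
Your argument is correct. For the first two identities, your phase-invariance argument is exactly what the paper does (the paper simply chooses the specific phase~$\omega = \complexi$). For the third and fourth identities, however, your route differs from the paper's. You invoke Theorem~\ref{thm:perm_inv_op} with~$k=1$ to recognize~$\expct_{\bU}\, \bU\ketbra{i}{j}\bU^\adjoint$ as a scalar multiple of the identity and then read off the scalar from the trace; this is clean and fits naturally with how Propositions~\ref{prop:haar_expecs} and~\ref{prop:haar_expecs_2} are proved elsewhere in the appendix. The paper instead argues directly from discrete symmetries of the Haar measure: right-multiplication by the reflection~$\id - 2\density{j}$ to kill the~$i\neq j$ case, the analogous reflection to show that the off-diagonal entries~$\bra{k}A\ket{l}$ vanish when~$i=j$, and permutation invariance of the standard basis to conclude that each diagonal entry equals~$1/d$. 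Your approach is more systematic and requires no case analysis, at the cost of invoking the commutant characterization; the paper's approach is more elementary and self-contained but slightly longer. Both are perfectly valid.
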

\begin{proof}
The second identity follows from the first by taking the adjoint, which commutes with taking the expectation over~$\bU$. Similarly, the fourth identity follows from the third by conjugating with the swap operator on~$\complex^d \tensor \complex^d$.

The first identity follows from the invariance of the Haar measure under multiplication by~$\complexi \, \id$; we have
\[
\expect{\bU} ~ \bU \ket{i} \tensor \bU \ket{j} =  \complexi^2 \, \expect{\bU} ~ \bU \ket{i} \tensor \bU \ket{j} = 0 \enspace.
\]
Similarly, if~$i \neq j$, then by the invariance of the Haar-measure under multiplication on the right by the unitary operator~$\id - 2 \density{j}$, the third identity holds.

Let~$A$ be the left hand side of the third identity when~$i = j$. Then~$\bra{k} A \ket{l} = 0$ if~$k \neq l$, by the invariance of the Haar-measure under multiplication on the right by the operator~$\id - 2 \density{l}$. Furthermore, $ \bra{k} A \ket{k} = \expct_\bU \abs{ \bra{k} \bU \ket{i}}^2 = 1/d$, by the invariance of the Haar measure under permutations of the standard basis elements.
\end{proof}

\section{Algorithms for quantum tomography}
\label{sec:upper_bounds_tomography}

\subsection{Tomography with entangled measurements}
In the entangled measurement model, it has been shown by O'Donnell and Wright~\cite{OW16-tomography} and Haah et al.~\cite{Haah_2017} that $O(d^2/\eps^2)$ copies of the state suffice to estimate it to $\eps$-accuracy in trace distance with high probability\footnote{Originally, the upper bound presented in Haah et al.~\cite{Haah_2017} had an additional factor of $\log(d/\eps)$, which was subsequently removed in the thesis of Wright~\cite{wright2016}.}. At the same time, a matching lower bound was also shown in~\cite{Haah_2017}. So the sample complexity of tomography in the entangled measurement setting is known up to a constant factor, for constant probability of success. A full description of these algorithms is outside the scope of this work, requiring ideas from representation theory and in particular the relationship between certain representations on $(\mathbb{C}^{d})^{\otimes n}$. We refer the interested reader to Chapters~2 and~5 of Wright's PhD thesis~\cite{wright2016}.

\subsection{Tomography with random basis measurements}\label{sec:single_copy_tomog_upper_bound}

For completeness we describe an algorithm which achieves a sample complexity of $O(d^3/\eps^2)$ for $\eps$-accurate tomography (in trace distance) using efficiently implementable, nonadaptive measurements. The analysis we present is due to Wright~\cite[Section~5.1]{wright2016}, with minor differences. We also point out that measurement based on a state~$2$-design suffices. These may be derived from a \emph{spherical~$4$-design\/} or a unitary~2-design. 

An algorithm for the bounded-rank case follows from Ref.~\cite[Theorem~2]{KUENG2017}. Haah \textit{et al.\/} sketch the details of this algorithm in Ref.~\cite[Section~II.A]{Haah_2017}. They invoke an ``operator Chernoff bound'' due to Ahlswede and Winter~\cite{AW02} to conclude that the sample average of the outer product of $m$ i.i.d.\ standard normal vectors~$\ket{\bpsi_i} \in \complex^d$ with~$\ket{\bpsi_i} \sim \normal(0, \id)$ approximates the identity operator~$\id$. Formally, we have
\begin{align}
\label{eq-approx-id}
\norm{ \frac{1}{m} \sum_{i = 1}^m \density{\bpsi_i} - \id } \le \alpha \enspace,
\end{align}
for a constant~$\alpha > 0$, with probability at least~$3/4$, provided~$m \in \Omega( d (\ln d)/ \alpha^2)$. This leads to a sample complexity of~$O(r^2 d / \eps^2 )$ for~$r \ge \ln d$, and~$O(rd (\ln d)/\eps^2 )$ for~$r \le \ln d$. A stronger tail inequality~\cite[Theorem~4.6.1]{vershynin_2018} guarantees that Eq.~\eqref{eq-approx-id} holds for a suitable constant~$\alpha$, with probability at least~$1 - 2 \exp(-m)$ as long as~$m \ge d$. This gives us the optimal bound of~$O(r^2 d / \eps^2)$ on the sample complexity of the algorithm. Gu{\c{t}}{\u{a}}, Kahn, Kueng, and Tropp~\cite[Theorem~2]{guta2018fast} give a different algorithm that also achieves the optimal sample complexity.

Let $\rho\in\mathsf{D}(d)$ be the state to be learned, and $\{\ket{j}\}_{j=1}^d$ be the standard basis. Consider sampling a random unitary operator $\bm{U}$ comprising a unitary 2-design and then performing the basis measurement corresponding to the measurement operators $\{\bm{U}\ket{j}\bra{j}\bm{U}^\dag\}_{j=1}^d$, obtaining outcome $\bm{j}$. Suppose we do this on $n$ separate copies of the state, resulting in iid random variables $(\bm{U}_1,\bm{j}_1),\dots,(\bm{U}_n,\bm{j}_n)$ where $\bm{U}_i$ is the $i^\text{th}$ random unitary operator and $\bm{j}_i$ is the outcome from the $i^\text{th}$ measurement. Define $\hat{\rho}(U,j)\eqdef (d+1)U\ket{j}\bra{j}U^\dag - \mathds{1}$ for~$U\in\mathsf{U}(d)$ and $j\in [d]$.
\begin{proposition}\label{prop:unbiased_estimator_2_design}
It holds that
\begin{align*}
\expct \hat{\rho}(\bm{U},\bm{j}) = \rho.
\end{align*}
\end{proposition}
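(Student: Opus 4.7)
The plan is to show that after averaging over the random basis choice~$\bm{U}$ and the measurement outcome~$\bm{j}$, the estimator recovers~$\rho$ by direct computation, leveraging the 2-design property of the unitary ensemble together with Proposition~\ref{prop:haar_expecs_2}.

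First, I would unpack the expectation using the Born rule. Conditioned on $\bm{U} = U$, the outcome probability is $p(j \mid U) = \langle j | U^\adjoint \rho U | j \rangle$, so
\begin{align*}
\expct \hat\rho(\bm{U},\bm{j}) & = (d+1) \expct_{\bm{U}} \sum_{j=1}^d \langle j | \bm{U}^\adjoint \rho \bm{U} | j \rangle \, \bm{U} \ketbra{j}{j} \bm{U}^\adjoint - \id \enspace.
\end{align*}
Writing $\Pi_j(U) \eqdef U \ketbra{j}{j} U^\adjoint$, I would use the standard partial-trace identity $\Tr(\Pi_j(U) \rho) \, \Pi_j(U) = \Tr_2\!\left[ (\Pi_j(U) \tensor \Pi_j(U))(\id \tensor \rho) \right]$ to pull the state~$\rho$ out of the conditional probability, reducing the inner quantity to a linear functional of $\sum_j \Pi_j(U) \tensor \Pi_j(U)$.

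Next, I would invoke the 2-design property: since $\bm{U}$ is a unitary 2-design, the expectation $\expct_{\bm{U}} \Pi_j(\bm{U})^{\tensor 2}$ equals the Haar average, which by Proposition~\ref{prop:haar_expecs_2} (taking~$\Pi_1 = \Pi_2 = \ketbra{j}{j}$, so $r_1 = r_2 = 1$) equals $\frac{1}{d(d+1)}(\id + W)$, independent of~$j$. Summing over the $d$ standard basis vectors yields $\expct_{\bm{U}} \sum_j \Pi_j(\bm{U})^{\tensor 2} = \frac{1}{d+1}(\id + W)$.

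Finally, I would apply the partial trace. Using $\Tr_2(\id \tensor \rho) = \id$ together with the identity $\Tr_2(W(\id \tensor \rho)) = \rho$ (which follows by expanding $W = \sum_{i,k} \ketbra{i}{k} \tensor \ketbra{k}{i}$), the inner expression simplifies to $\frac{1}{d+1}(\id + \rho)$. Multiplying by $(d+1)$ and subtracting $\id$ gives exactly~$\rho$. There is no real obstacle here: the argument is essentially bookkeeping once the 2-design property is used to reduce to the Haar integral already computed in Proposition~\ref{prop:haar_expecs_2}; the only care needed is in the partial-trace identity for the swap operator.
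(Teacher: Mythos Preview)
Your proposal is correct and follows essentially the same route as the paper: both express the expectation via the partial-trace identity, invoke the 2-design property to reduce to the Haar integral computed in Proposition~\ref{prop:haar_expecs_2} with rank-one projectors, and finish with $\Tr_2(W(\id\otimes\rho))=\rho$. The only cosmetic difference is that you sum over~$j$ before taking the partial trace whereas the paper does it in the opposite order.
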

\begin{proof}
Let $p_{\bm{U}}$ denote the distribution of $\bm{U}$ and $p_{\bm{j}|U}(j)$ the probability of obtaining outcome $j$ given that $U$ is drawn. We have
\begin{align}\label{eq:eqn_1040}
\expct \bm{U}\ket{\bm{j}}\bra{\bm{j}}\bm{U}^\dag &= \sum_{j=1}^d\expect{\bm{U}\sim p_{\bm{U}}}p_{\bm{j}|\bm{U}}(j)\ \bm{U}\ket{j}\bra{j}\bm{U}^\dag\nonumber\\
&=\sum_{j=1}^d\expect{\bm{U}\sim p_{\bm{U}}} \bra{j}\bm{U}^\dag \rho \, \bm{U} \ket{j} \bm{U}\ket{j}\bra{j}\bm{U}^\dag.
\end{align}
Consider the~$j$th term in the sum above. We may write that term equivalently as
\begin{align}\label{eq:eqn_partial_trace}
\expect{\bm{U}\sim p_{\bm{U}}}\Tr_2\left((\bm{U}\ket{j}\bra{j}\bm{U}^\dag)^{\otimes 2}(\mathds{1}\otimes \rho)\right) = \Tr_2\left(\expect{\bm{V}\sim \text{Haar}} (\bm{V}\ket{j}\bra{j}\bm{V}^\dag)^{\otimes 2}(\mathds{1}\otimes \rho)\right)
\end{align}
where the equality follows from linearity of trace and the choice of~$\bm{U}$ as a 2-design. Note that it suffices that the measurement operators be derived from a state~$2$-design (see, e.g., Ref.~\cite{AE07-state-designs}). Proposition~\ref{prop:haar_expecs_2}  gives an explicit solution to the Haar integral inside the partial trace for the general case of a rank-$r$ projector rather than $\ket{j}\bra{j}$. Taking $r=1$, we find that
\begin{align*}
\expect{\bm{V}\sim \text{Haar}} (\bm{V}\ket{j}\bra{j}\bm{V}^\dag)^{\otimes 2} = \frac{1}{d(d+1)}\left[\mathds{1}\otimes\mathds{1}+W\right].
\end{align*}
Substituting into the right-hand side of Eq.~\eqref{eq:eqn_partial_trace} and making use of the identities $\Tr_2(W(\mathds{1}\otimes \rho))=\rho$ and $\Tr(\rho)=1$ we find that it is equal to $\frac{1}{d(d+1)}\left(\mathds{1}+\rho\right)$. Using the property that this holds for any $j\in [d]$ and substituting into Eq.~\eqref{eq:eqn_1040} we obtain the relation $\expct \bm{U}\ket{\bm{j}}\bra{\bm{j}}\bm{U}^\dag = \frac{1}{d+1}\left(\mathds{1}+\rho\right)$. The proposition then follows from the definition of $\hat{\rho}(\bm{U},\bm{j})$.
\end{proof}
In other words, $\hat{\rho}(\bm{U},\bm{j})$ is an unbiased estimator of $\rho$. We take the empirical average of the $n$ independent samples of this estimator $\frac{1}{n}\sum_{i=1}^n\hat{\rho}(\bm{U}_i,\bm{j}_i)$ which we obtained by measuring $n$ separate copies of the state. Then the squared distance between the estimator and the true state in terms of the metric induced by the Frobenius norm is
\begin{align*}
\expct\norm{\frac{1}{n}\sum_{i=1}^n\hat{\rho}(\bm{U}_i,\bm{j}_i) - \rho}_{\mathrm{F}}^2 &= \frac{1}{n^2}\expct\norm{\sum_{i=1}^n\left(\hat{\rho}(\bm{U}_i,\bm{j}_i) - \rho\right)}_{\mathrm{F}}^2\\
&= \frac{1}{n^2} \Tr\left(\expct \left[\sum_{i=1}^n (\hat{\rho}(\bm{U}_i,\bm{j}_i) - \rho)\right]^2\right).
\end{align*}
It is straightforward to show that for a sum of $n$ mean-zero, independent random matrices $\bm{A}_i$ it holds that $\expct\left[\sum_{i=1}^n\bm{A}_i\right]^2=\sum_{i=1}^n \expct\bm{A}_i^2$, which entails that the right-hand side of the above is
\begin{align*}
\frac{1}{n^2} \sum_{i=1}^n\Tr\left(\expct (\hat{\rho}(\bm{U}_i,\bm{j}_i) - \rho)^2\right) &= \frac{1}{n^2}\sum_{i=1}^n \left(\expct \Tr(\hat{\rho}(\bm{U}_i,\bm{j}_i)^2) - \Tr(\rho^2)\right)\\
&\leq \frac{1}{n^2} \sum_{i=1}^n\expct\Tr(\hat{\rho}(\bm{U}_i,\bm{j}_i)^2)\\
&= \frac{d^2+d-1}{n}
\end{align*}
where the inequality used $\Tr(\rho^2)\geq 0$ and the final line comes from the following calculation. For a Hermitian matrix $A$, we have $\Tr(A^2)=\sum_{i=1}^d\lambda_i(A)^2$. In our case, all eigenvalues of the operator $(d+1)U\ket{j}\bra{j}U^\dag - \mathds{1}$ except one are~$-1$, and one eigenvalue is~$d$. Using the matrix inequality $\norm{\cdot}_1\leq \sqrt{d}\norm{\cdot}_{\mathrm{F}}$, we obtain the inequality
\begin{align*}
\expct\norm{\frac{1}{n}\sum_{i=1}^n\hat{\rho}(\bm{U}_i,\bm{j}_i) - \rho}_1^2\leq \frac{d(d^2+d-1)}{n}.
\end{align*}
Substituting $n\in O(d^3/\eps^2)$ gives us the desired upper bound on error in expectation. We can achieve error at most~$\eps$ with high (constant) probability using Markov's inequality, with a constant factor increase in the number of samples.

\subsection{Tomography with binary Pauli measurements}

In the setting of binary Pauli measurements there exists perhaps the most straightforward tomography algorithm, to the point where its $O(d^4/\eps^2)$ sample complexity is folklore. However, since we show that this is the information-theoretically optimal algorithm for a class of nonadaptive measurement scenarios, it may be worth reviewing. The general $q$-qubit Pauli matrices are the various Hermitian and unitary $q$-fold tensor products of the set of single-qubit Pauli matrices $\{\mathds{1},\sigma_x,\sigma_y,\sigma_z\}\subset \mathbb{C}^{2\times 2}$. This means that there are $4^q = d^2$ different $q$-qubit Pauli matrices $\mathcal{P}_d = \{P_1,\dots,P_{d^2}\}$, where we let $d=2^q$. These operators form an orthogonal basis for the set of $d$-dimensional Hermitian matrices $\mathsf{H}(d)$ so that an arbitrary $\rho\in\mathsf{D}(d)$ can be written
\begin{align*}
\rho = \frac{1}{d}\sum_{i=1}^{d^2}\Tr(P_i\rho)P_i.
\end{align*}
The straightforward algorithm here is then to estimate each of the coefficients $\Tr(P_i\rho)$ with sufficient accuracy, which will serve as a complete description of the estimate of $\rho$. Consider the $d^2$ POVMs $\mathcal{M}_i$ with corresponding measurement operators $\{\frac{1}{2}(\mathds{1}\pm P_i)\}$ for each $i\in [d^2]$, with possible outcomes $\bm{z}_i\in \{\pm 1\}$ defined in the obvious way. Then $\bm{z}_i$ is an unbiased estimator for the $i^\text{th}$ Pauli coefficient, and performing this measurement $s\in\mathbb{Z}_+$ times results in iid random variables $\{\bm{z}_{i,j}\}_{j=1}^s$. Let us then take the empirical average of the $s$ samples corresponding to the $i^\text{th}$ Pauli measurement $\bm{\mu}_i \eqdef \frac{1}{s}\sum_{j=1}^s\bm{z}_{i,j}$, for each $i\in [d^2]$, which requires a total of $sd^2$ measurements on separate copies of $\rho$. We then consider our estimate of the state to be $\hat{\bm{\rho}}\eqdef \frac{1}{d}\sum_{i=1}^{d^2} \bm{\mu}_i P_i$, which clearly satisfies $\expct \hat{\bm{\rho}}=\rho$. We may then compute
\begin{align*}
\expct \norm{\hat{\bm{\rho}}-\rho}_{\mathrm{F}}^2 &= \frac{1}{d}\sum_{i=1}^{d^2}\expct {\lvert\bm{\mu}_i - \Tr(P_i\rho)\rvert}^2\\
&= \frac{1}{d}\sum_{i=1}^{d^2} \Var[\bm{\mu}_i]\\
&= \frac{1}{ds^2}\sum_{i=1}^{d^2}\sum_{j=1}^s \Var[\bm{z}_{i,j}]\\
&\leq \frac{d}{s}
\end{align*}
where in the third line we used the property $\Var[a\bm{x}]=a^2\Var[\bm{x}]$ for a random variable $\bm{x}$, as well as the fact that the variance is additive for independent random variables. The final line follows since $|\bm{z}_{i,j}|=1$. Using the inequality $\norm{\cdot}_1\leq \sqrt{d}\norm{\cdot}_{\mathrm{F}}$, we find for $s=d^2/\eps^2$, it holds that $\expct \norm{\hat{\bm{\rho}}-\rho}_1\leq \eps$. We can once again convert this statement about convergence in expectation to convergence with high probability using Markov's inequality, which leads to the conclusion that $\eps$-accurate tomography in trace distance is achievable using at most $sd^2 = d^4/\eps^2$ binary Pauli measurements on separate copies of $\rho$.

\section{Measurements with polynomial-size circuits}
\label{sec:solovay_kitaev}

Theorems~\ref{thm:tomography_efficient_meas}, \ref{thm:tomography_efficient_meas_const_outcome}, and~\ref{thm:unentangled_shadow_tomography} give lower bounds for quantum learning tasks in the  setting of adaptive measurements, when they are drawn from a finite set of possible measurements. These results thus also limit the power of adaptivity using measurements that can be implemented with polynomial-size circuits. This includes efficiently implementable measurements, i.e., measurements whose circuits are also uniformly generated. In this section, we explain what it means for a family of measurements to have polynomial-size circuits. Fix a (possibly infinite) universal gate set $\mathcal{G}$ consisting of constant-arity gates (e.g., one- and two-qubit gates).
\begin{definition}[Measurements with polynomial-size circuits, constant number of outcomes]
\label{def:efficient_meas}
For any~$q \ge 1$, suppose $\cA_q$ is a collection of measurements acting on $q$-qubit quantum states. We say the family of measurements~$(\cA_q : q \ge 1)$ has \emph{polynomial-size\/} if there exist polynomials $p_1,p_2$ such that for each $q$ and measurement $\mathcal{M} \in \cA_q$ there is a quantum circuit on $q + p_1(q)$ qubits with at most $p_2(q)$ gates from $\mathcal{G}$ that implements $\mathcal{M}$. I.e., the measurement $\mathcal{M}$ has the action
\begin{align*}
\mathcal{M} :\rho \mapsto \sum_{y\in\{0,1\}^S} \bra{y}\Tr_{[\ell]\backslash S}\left(U (\rho\otimes \ket{\overline{0}} \! \bra{\overline{0}})U^\dag\right)\ket{y} \; \ket{y} \! \bra{y}
\end{align*}
for any state $\rho\in \mathsf{D}(2^q)$, where $S\subseteq [\ell]$, $\ell \eqdef q+p_1(q)$, $\ket{\overline{0}}\in (\mathbb{C}^2)^{\otimes p_1(q)}$ is the all-zero state for $p_1(q)$ ancilla qubits and $U \in\mathsf{U}(2^{q + p_1(q)} )$ is the unitary operator given by the composition of the gates in the circuit.

We say that the measurements in the family have a constant number of outcomes if there is a positive integer~$r$ such that for all measurements~$\mathcal{M} \in \cA_q$ and for all~$q \ge 1$, we have~$\rank(\mathcal{M}) \le r$.
\end{definition}
In the case where $\mathcal{G}$ is finite, by a counting argument we may verify that the number of distinct measurements~$m$ in~$\cA_q$ for any family with polynomial size is at most~$\text{poly}(q)^{\text{poly}(q)}$ which is in~$\exp(o(d))$, where~$d \eqdef 2^q$ is the dimension of the system. It follows immediately from Theorem~\ref{thm:tomography_efficient_meas} and this bound that $\Omega(d^3/\eps^2)$ single-copy, possibly adaptive, efficient measurements are necessary to perform tomography. (Note that \emph{efficient\/} measurements are also required to be \emph{uniformly generated\/}, in addition to having polynomial-size circuits.) Similarly, we may infer a bound for shadow tomography using only efficient single-copy measurements from Theorem~\ref{thm:unentangled_shadow_tomography} (see the remark after the theorem). 

We may extend this reasoning to the case where $\mathcal{G}$ has infinite cardinality, but consists of gates of constant arity --- for example, when all single-qubit gates are included in the set. This comes at the cost of the loss of a multiplicative factor of at most $\text{polylog}(1/\eps)$ in the lower bounds. This is accomplished by an application of the Solovay-Kitaev theorem and adjusting the general argument we have been using to prove the bounds. We replace each measurement with a suitably accurate approximation with a circuit over a finite gate set, and show that the approximation results in at most a small constant deviation from the original distribution over measurement outcomes. In the sequel, we refer to the case where a learner performs measurements with circuits over the gate set $\mathcal{G}$ as the \textit{original\/} strategy. Fix any \textit{finite\/} universal gate set $\mathcal{G}^\prime$ that contains the inverses of all the gates in it. 
\begin{proposition}\label{prop:efficient_gates}
Let~$d \eqdef 2^q$ for some~$q \ge 1$.
Suppose a learner performs $n\in O(d^3/\eps^2)$ adaptive measurements on single copies of quantum states in $\mathsf{D}(d)$, where each measurement can be implemented with a circuit of size at most a polynomial~$t$ in~$q$ using an infinite set $\mathcal{G}$ of gates with constant arity. There is an adaptive measurement strategy consisting of single-copy measurements with circuits of size of order~$qt ( \log q + \log(1/\eps))$ over~$\mathcal{G}'$ such that for any state~$ \rho \in \mathsf{D}(d)$, the distribution over the~$n$ measurement outcomes obtained from measuring~$\rho$ is $0.01$-close in total variation distance to the corresponding distribution obtained with the original strategy.
\end{proposition}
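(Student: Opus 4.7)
My plan is to apply the Solovay-Kitaev theorem to each measurement circuit in the adaptive strategy, and then control the propagation of approximation error across the~$n$ rounds via a chain-rule/triangle-inequality argument in total variation distance. The resulting ``modified strategy'' will use only gates from~$\mathcal{G}'$ and will be close in outcome distribution to the original.

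First, I would fix an accuracy parameter~$\delta > 0$, to be chosen at the end. For each sequence~$y_{<i}$ of previous outcomes, the~$i$-th measurement~$\mathcal{M}_i^{y_{<i}}$ in the original strategy is implemented by a circuit~$U_i^{y_{<i}}$ of at most~$t$ gates from~$\mathcal{G}$. By Solovay-Kitaev, each such gate (of constant arity) may be approximated to within~$\delta/t$ in operator norm by a product of~$\mathrm{polylog}(t/\delta)$ gates from~$\mathcal{G}'$, giving an approximating circuit~$\tilde{U}_i^{y_{<i}}$ of total size~$t \cdot \mathrm{polylog}(t/\delta)$ with~$\lVert U_i^{y_{<i}} - \tilde{U}_i^{y_{<i}} \rVert \leq \delta$. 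A standard calculation using~$\lVert U \rho U^\adjoint - \tilde U \rho \tilde U^\adjoint \rVert_1 \leq 2 \lVert U - \tilde U \rVert$ and monotonicity of trace distance under partial trace and measurement then yields~$\lVert \mathcal{M}_i^{y_{<i}}(\rho) - \tilde{\mathcal{M}}_i^{y_{<i}}(\rho) \rVert_1 \leq 2\delta$ for every input state~$\rho$, which translates to a single-round total variation distance of at most~$\delta$ between outcome distributions.

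Next, I would bound the accumulated total variation distance between the joint outcome distributions~$p$ and~$\tilde{p}$ over~$(\bm{y}_1, \dotsc, \bm{y}_n)$ under the original and modified strategies applied to the same input~$\rho$. Expanding~$p(y_{\leq k}) = p(y_k \mid y_{<k}) \, p(y_{<k})$ and applying the triangle inequality to each pair of products, I obtain
\begin{align*}
\sum_{y_{\leq k}} \lvert p(y_{\leq k}) - \tilde p(y_{\leq k}) \rvert
   & \leq \sum_{y_{<k}} p(y_{<k}) \sum_{y_k} \lvert p(y_k \mid y_{<k}) - \tilde p(y_k \mid y_{<k}) \rvert  \\
   & \qquad \mbox{} + \sum_{y_{<k}} \lvert p(y_{<k}) - \tilde p(y_{<k}) \rvert \enspace.
\end{align*}
Bounding the inner sum in the first term by~$2\delta$ via the single-round bound and iterating, I get joint total variation distance at most~$2n\delta$. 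Setting~$\delta \eqdef 0.005/n$ then delivers the claimed~$0.01$ bound.

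Finally, I would substitute~$n \in O(d^3/\eps^2) = O(2^{3q}/\eps^2)$ and~$t = \mathrm{poly}(q)$ into the per-measurement gate count~$t \cdot \mathrm{polylog}(t/\delta) = t \cdot \mathrm{polylog}(tn)$. Since~$\log(tn) = O(q + \log(1/\eps))$, each measurement in the modified strategy uses~$O(q t (\log q + \log(1/\eps)))$ gates from~$\mathcal{G}'$, absorbing the (constant-degree) polylogarithmic factor from Solovay-Kitaev into the stated bound. The main subtlety is that in the adaptive setting, the modified strategy may branch on a different sequence of outcomes than the original one and thereby invoke a different measurement; however, the chain-rule decomposition above cleanly separates the new per-round approximation error from the accumulated history error, so no further conditional-dependence bookkeeping is needed beyond the telescoping recursion.
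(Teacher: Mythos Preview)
Your proposal is correct and follows essentially the same approach as the paper: per-round Solovay--Kitaev approximation of each measurement circuit, followed by the same telescoping chain-rule bound on the joint total variation distance, with~$\delta = \Theta(1/n)$ and the identical substitution~$n \in O(d^3/\eps^2)$, $t = \mathrm{poly}(q)$ to obtain the gate count. The only cosmetic differences are that the paper phrases the single-round approximation via the diamond norm rather than the operator norm of the unitary, and sets~$\delta = 1/(100n)$.
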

\begin{proof}
Suppose learner performs measurements with circuits of size at most~$t$ over the gate set $\mathcal{G}$ in the original strategy. Suppose that this learner obtains the outcomes $\bm{y}_1,\dots,\bm{y}_n$ using the original strategy, and consider the $i^\text{th}$ measurement in the sequence $\mathcal{M}_i^{y_{<i}}:\mathsf{D}(2^q)\to\mathsf{D}(2^{u(q)})$ for some fixed sequence of previous outcomes $y_{<i}$, and polynomial~$u(q)$. By the Solovay-Kitaev Theorem, for any $\delta\in (0,1)$ there is a measurement $\Phi_i^{y_{<i}}:\mathsf{D}(2^q)\to\mathsf{D}(2^{u(q)})$ which can be implemented using circuits of size $t' \eqdef t\cdot \text{polylog}(t/\delta)$ gates from $\mathcal{G}^\prime$ and which satisfies
\begin{align}\label{eq:application_of_solovay_kitaev}
\norm{\mathcal{M}_i^{y_{<i}}-\Phi_i^{y_{<i}}}_{\diamond}\leq 2\delta \enspace,
\end{align}
where $\norm{\cdot}_{\diamond}$ is the completely bounded trace norm (some times also called the diamond norm). In other words, for any fixed state $\rho\in\mathsf{D}(2^q)$ the total variation distance between the distributions over outcomes obtained by measuring~$\rho$ according to the two measurements is at most $\delta$. 

Suppose the learner adopts the \textit{modified strategy\/} given by the measurements $\Phi_i^{y_{<i}}$ for every $i\in [n]$, given the previously observed outcomes $y_{<i}$. We show by induction that the deviation of the resulting distribution from that of the original strategy grows linearly with $n$. For each $i\in [n]$, let $\bm{y}^\prime_i$ denote the measurement outcome from the $i^\text{th}$ measurement using the modified strategy. Note that these random variables have the same set of possible outcomes, which are bit-strings of length at most $\text{poly}(q)$. Define the corresponding conditional distributions~$p$ and~$\phi$ over outcomes as
\begin{align*}
p(y_k|y_{<k}) \eqdef \Pr\left[\bm{y}_k=y_k|\bm{y}_{<k}=y_{<k}\right],\qquad \phi(y_k|y_{<k}) = \Pr\left[\bm{y}_k^\prime=y_k|\bm{y}_{<k}^\prime=y_{<k}\right]
\end{align*}
as well as marginal probabilities $p(y_{<k})$, $\phi(y_{<k})$, for each~$k \in [1,n]$. Let us also define the notation $y_{\leq k} \eqdef y_{<k+1}$. For the first measurement outcome, the total variation distance between the two distributions is $\frac{1}{2}\sum_{y_1}|p(y_1)-\phi(y_1)|\leq \delta$, using Eq.~\eqref{eq:application_of_solovay_kitaev}. Now suppose that
\begin{align*}
\sum_{y_{<k}}|p(y_{<k})-\phi(y_{<k})| \leq 2(k-1)\delta.
\end{align*}
for some $k>1$. Then we have
\begin{align*}
\sum_{y_{\leq k}}|p(y_{\leq k})-\phi(y_{\leq k})| &= \sum_{y_k}\sum_{y_{<k}}\left|p(y_k|y_{< k})p(y_{<k})-\phi(y_k|y_{<k})\phi(y_{<k})\right|\\
&\leq \sum_{y_k}\sum_{y_{<k}}|p(y_k|y_{<k})p(y_{<k}) - \phi(y_k|y_{<k})p(y_{<k})|\\ &\quad\quad\quad + \sum_{y_k}\sum_{y_{<k}}|\phi(y_k|y_{<k})p(y_{<k}) - \phi(y_k|y_{<k})\phi(y_{<k})|\\
&= \sum_{y_{<k}}p(y_{<k})\sum_{y_k}|p(y_k|y_{<k})-\phi(y_k|y_{<k})| + \sum_{y_{<k}}|p(y_{<k})-\phi(y_{<k})|\\
&\leq 2\delta + 2(k-1)\delta\\
&= 2k\delta.
\end{align*}
Hence, the total variation distance between the two distributions corresponding to all $n$ outcomes is at most $n\delta$. Taking $\delta = 1/(100n)$ ensures that the total error from the modified strategy is at most $0.01$. Moreover, all measurements in the modified strategy are implemented with circuits of size at most $t\cdot \text{polylog}(100nt)$ over the finite gate set $\mathcal{G}^\prime$. Since $t(q)$ is a polynomial in~$q$ and $n\in O(d^3/\eps^2)$, the total number of gates~$t'$ is of order $qt (\log q + \log(1/\eps))$.
\end{proof}

Note that the total variation distance being equal to $0.01$ is not significant --- the point is that this modified strategy only affects the success probability of the learning procedure by a small constant. For example, consider the task of quantum state tomography with adaptive single-copy measurements. Let $\bm{x}$ be distributed over $\mathsf{D}(d)$ and $\bm{y}=\bm{y}_1,\dots,\bm{y}_n$ be the measurement outcomes obtained using the original strategy on $n\in O(d^3/\eps^2)$ copies of $\bm{x}$. Let $\bm{y}^\prime=\bm{y}^\prime_1,\dots,\bm{y}^\prime_n$ be the outcomes obtained using the modified strategy on $n$ copies of $\bm{x}$. Prop.~\ref{prop:efficient_gates} says that for any value $x$ of the random variable $\bm{x}$,
\begin{align*}
\norm{p_{\bm{y}|x}-p_{\bm{y}^\prime|x}}_1\leq 1/100 \enspace.
\end{align*}
Therefore, if the original strategy succeeds in identifying the state to within accuracy $\eps$ with ``high'' probability (say, $\ge 2/3$) then the modified strategy succeeds with high probability as well. However, the modified strategy uses measurements drawn from a finite set of measurements, which is the setting for which our lower bounds apply. By counting the number of distinct circuits of size~$t'$ we see that the total number of distinct measurements~$m$ used in the modified strategy is at most~$\text{poly}(t')^{t'}$. So~$\log m$ is of the order of
\begin{align*}
\text{poly}(q) (\log q + \log(1/\eps)) \enspace.
\end{align*}
By Theorem~\ref{thm:tomography_efficient_meas} and Prop.~\ref{prop:efficient_gates} we have that
\begin{align}
    \Omega\left(\frac{d^3}{\eps^2 (1+\textnormal{polylog}(d) \, u(d,1/\eps) /d)}\right)
\end{align}
samples are required, where
\[
u(d,1/\eps) \eqdef (\log \log d + \log(1/\eps)) \log(\log \log d + \log(1/\eps)) \enspace.
\]
Similarly, for the single-copy shadow tomography problem (cf. Def.~\ref{def-shadow-tomography}), by Theorem~\ref{thm:unentangled_shadow_tomography}
\begin{align}
    \Omega\left(\frac{d\min\{\log(M),\  d^2\}}{\eps^2 (1+\textnormal{polylog}(d) \, u(d,1/\eps) /d)}\right)
\end{align}
samples are required, even when the measurements are implemented efficiently using a constant arity gate set of possibly infinite cardinality. Note that these bounds are asymptotically smaller than those for finite gate sets only when the approximation parameter~$\eps$ is exponentially small in the dimension~$d$. 

\end{document}